\def\C{\mathbb C}
\def\Tr{\mathop {\rm Tr}}
\def\calm{\mathcal{M}}
\def\ep{\epsilon}
\def\cz{\circ}
\def\rz{R, \vartheta}
\def\be{\begin{equation}}
\def\beq{\begin{eqnarray}}
\def\beqs{\begin{eqnarray*}}
\def\ee{\end{equation}}
\def\eeq{\end{eqnarray}}
\def\eeqs{\end{eqnarray*}}
\def\eqref#1{(\ref{#1})}
\def\lra#1{\langle #1 \rangle}
\def\lrp#1{\left( #1 \right)}
\def\abs#1{\vert #1\vert}
\def\B{\mathcal{B}}
\def\A{\mathfrak{A}}
\def\cS{\mathcal{S}}
\def\cP{\mathcal{P}}
\def\even{{\rm even}}
\def\odd{{\rm odd}}
\def\nn{\nonumber\\}
\def\Z{\mathbb{Z}}
\def\T{\mathbb{T}}
\def\R{\mathbb{R}}
\def\I{\mathfrak{I}}
\def\qsp#1{\quad\text{#1}\quad}
\def\half{{\textstyle \frac{1}{2}}}
\def\cH{\mathcal{H}}
\def\rz{R, \Theta}
\def\hz{H, \Theta}
\def\tr{{\rm T}}
\def\:{{:}}
\def\vep{\varepsilon}
\def\vn{\varnothing}
\def\bm#1#2#3#4{\begin{pmatrix}#1&#2\\#3&#4\end{pmatrix}}
\renewcommand{\leq}{\leqslant}
\renewcommand{\geq}{\geqslant}
\newtheorem{thm}{Theorem}
\newtheorem{cor}[thm]{Corollary}
\newtheorem{lem}[thm]{Lemma}
\newtheorem{prop}[thm]{Proposition}
\newtheorem{defini}[thm]{Definition}
\newtheorem{Remark}[thm]{Remark}
\numberwithin{equation}{section}
\numberwithin{thm}{section}
\begin{document}
\title{Characterization of Reflection Positivity: 
Majoranas and Spins}
\author{Arthur Jaffe\inst{1} \and Bas Janssens\inst{2}
}                    

\institute{Harvard University,
Cambridge, MA 02138, USA. \\Email: arthur\_jaffe@harvard.edu
 \and Universiteit Utrecht, 3584 CD Utrecht, The Netherlands. \\Email: b.janssens@uu.nl}
\maketitle

\begin{abstract}
We study linear functionals on a Clifford algebra (algebra of Majoranas)
equipped with a reflection automorphism. For Hamiltonians that are functions of Majoranas or of spins, we find necessary and sufficient conditions on the coupling constants for reflection positivity to hold. One can easily check these conditions in concrete models.  We illustrate this by discussing a number of spin systems with nearest-neighbor and long-range interactions.
\end{abstract}

\section{Introduction}  \label{Sect:Introduction}
We consider a finite-dimensional $\Z_{2}$-graded $*$-algebra 
$\A = \A^{\even}\oplus\A^{\odd}$.
The  algebra $\A$ is a graded (super) tensor product of two algebras $\A_{\pm}$, 
related by an anti-linear automorphism $\Theta \colon \A \rightarrow \A$, satisfying $\Theta(\A_{\mp})=\A_{\pm}$ and $\Theta^2 = I$.  
In this sense, $\A$ is the \emph{double} of $\A_{+}$. 
Such automorphisms often arise from geometric reflections on an underlying manifold, so
we refer to $\Theta$ as the \emph{reflection automorphism}.
The main results summarized in Theorem \ref{Theorem:Summary}  do not refer to an underlying geometry--while in the examples of \S \ref{sec:exspin} this becomes relevant. 

In this context, we are interested in \emph{even}
linear functionals $\omega \colon \A \rightarrow \C$ that are 
both \emph{reflection invariant} and \emph{reflection positive} with respect to the reflection $\Theta$.
A functional is called even if $\omega(\A^{\rm odd}) = 0$, and 
just like in the ungraded case, 
it is called
reflection invariant if
$\omega ( \Theta (A)) = \overline{\omega(A)}$
for all $A \in \A$.
The notion of reflection positivity has to be adapted to the $\Z_{2}$-grading;
we call $\omega$ reflection positive on $\A_{+}$
if
\begin{eqnarray}
0 &\leq& \omega(\Theta(A)A)\;, \qsp{\phantom{z}for} A \in \A_{+}^{\rm even}, \label{eq:hilb}\\
0 &\leq& \zeta \,\omega(\Theta(A)A)\;, \qsp{for} A \in \A_{+}^{\rm odd}\nonumber\,,
\end{eqnarray}
where $\zeta = \pm i$ is fixed once and for all.  
We introduce the \emph{twisted product}  
$\circ: \A_{-}\times \A_{+}\mapsto \A$ with
	\be\label{eq:Twisted-Product}
	A\circ B = \left\{\begin{matrix}
			AB\;,\phantom{\, \zeta}& \text{if $A$ or $B\in\A^{\even}$ \phantom{b}}\hfill\\
			\zeta\, AB\;,& \text{\,if both $A, B\in\A^{\odd}$}\hfill
			\end{matrix}\right.\;.
	\ee	
In terms of this twisted product, the reflection positivity equation \eqref{eq:hilb} becomes simply
	\be\label{eq:RPcirc}
		0\leq \omega(\Theta(A)\circ A) 
		\;, \qsp{\phantom{z}for} A \in \A_{+}\;.
	\ee

This definition of a reflection-positive form is natural  in the context of super algebras.  The completions of $\A_{+}^{\rm even}$ and $\A_{+}^{\rm odd}$ with respect to the form  
\eqref{eq:RPcirc} are then the orthogonal, even and odd parts of a \emph{super-Hilbert space} $\cH$, see Deligne and Morgan \cite{SuperSolutions}.  We elaborate on this  relation to super-Hilbert spaces in \S\ref{Sect:Reflection-InvariantFunctionals}.

We consider in detail  the case that $\omega=\omega_{H}$ is a \emph{Boltzmann functional}.  By this we mean that there is an element $H\in\A$ called the Hamiltonian,  such that 
	\[
	\omega_{H}(A)=\Tr(A\,e^{-H})\,,
	\]
where $\Tr$ is a tracial state on $\A$.  
If the \emph{partition sum} $Z_{H} := \Tr(e^{-H})$ is nonzero, 
define the {\em Gibbs functional} $\rho_{H}$ as the normalized Boltzmann functional,  
	\be\label{eq:introgibbs}
		\rho_{H}(A) = Z^{-1}_{H} \Tr(A\,e^{-H})\;.
	\ee

In statistical physics, $H \in \A$ is self-adjoint. In this case $Z_{H} >0$, and $\rho_{H}$ is a state, meaning that $\rho_{H}$ is positive and normalized.  Furthermore, 
it has the KMS property with respect to 
the automorphisms of $\A$ induced by $e^{itH}$. But in lattice approximations to fermionic quantum fields, the action plays the role of $H$ and often is not hermitian. In any case we do not assume that $H$ is hermitian.

Here we specialize to two types of algebras $\A$. 
In the first part of the paper, \S \ref{Sect:Introduction}--\S \ref{Sect:Which},
$\A$ will be an algebra of Majoranas, whereas in the second part
\S \ref{sec:RPSpinSyst}--\S \ref{sec:exspin}, $\A$ will generally be an algebra of spins.

An algebra of Majoranas is a $*$-algebra 
generated by self-adjoint operators $c_{i}$.
They are
labeled by indices $i$ running over a 
finite set $\Lambda$, and
satisfy the Clifford relations
\be\label{eq:Cliffrels}
{c_i}{c_j}+c_{j}c_{i}=2\delta_{ij}I\,, \quad i,j \in \Lambda\,.
\ee  
The  $\Z_{2}$ grading of $\A$ is defined as $+1$ on the even and $-1$
on the odd monomials in the $c_{i}$.
Even elements of $\A$ are often called {\em globally gauge invariant}.

The reflection automorphism $\Theta$ of the Majorana algebra $\A$ 
comes from a fixed point free reflection
$\vartheta \colon \Lambda \rightarrow \Lambda$.
If $\Lambda$ is the disjoint union of $\Lambda_{+}$
and $\Lambda_{-}$ with 
$\vartheta(\Lambda_{\pm}) = \Lambda_{\mp}$, then
the algebras $\A_{\pm}$ are generated by 
the Majoranas $c_{i}$ with $i\in \Lambda_{\pm}$.
In many applications, $\Lambda$ will be a finite lattice in Euclidean space,
and $\vartheta$ the reflection in a hyperplane which does not intersect $\Lambda$.

We give necessary and sufficient conditions such that the functionals $\omega_{H}$ and $\rho_{H}$ are reflection positive on $\A_{+}$.
Every Hamiltonian $H\in \A$ is defined by a coupling-constant matrix $J$ as
\be
H= -\sum
J_{i_{1}, \ldots, i_{k} \,;\, i'_{1}, \ldots, i'_{k'}}
\,
\Theta(c_{i_{1}}\cdots c_{i_{k}}) \circ
(c_{i'_{1}}\cdots c_{i'_{k'}})\,,
\ee
where $k$ and $k'$ range over $\mathbb{N}$, and 
$i_{1}, \ldots, i_{k}$ and $i'_{1}, \ldots, i'_{k'}$ 
range over $\Lambda_{+}$.
In fact one restricts the set over which one sums,  in order
to make the expansion unique, as explained in 
\S\ref{sec:bases}--\S\ref{Sect:TwistedProduct}. 
The conditions on reflection positivity are expressed in terms 
of the submatrix $J^{0}$ of $J$ for which $k \neq 0$ and $k' \neq 0$.  
If $\vartheta$ comes from a reflection in Euclidean space,  
$J^{0}$ describes the \emph{couplings across the reflection plane}.
We use this terminology even if a geometric interpretation is lacking.

The central result in this paper, which also holds with $\rho_{\beta H}$ replaced by 
$\omega_{\beta H}$, is the following:  
\begin{thm}\label{Theorem:Summary}
Let $H$ be reflection invariant and globally gauge invariant.  Then $\rho_{\beta H}$ is reflection positive for all $0<\beta$,  if and only if  $0\leq J^{0}$. 
\end{thm}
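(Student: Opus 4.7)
The plan is to prove the two implications separately. For the sufficiency, I would first use reflection and gauge invariance, together with $J^{0}\geq 0$, to decompose
\[
-H \;=\; H_{+} + \Theta(H_{+}) + V, \qquad V \;:=\; \sum_{\alpha} \Theta(B_{\alpha})\circ B_{\alpha},
\]
with $H_{+}\in\A_{+}^{\even}$ and $B_{\alpha}\in\A_{+}$. The $k=0$ and $k'=0$ contributions pair by reflection invariance to give $H_{+}+\Theta(H_{+})$, with $H_{+}$ forced to be even by gauge invariance; the across-the-plane block $J^{0}$ is diagonalized via the spectral theorem, so that its nonnegative eigenvalues can be absorbed into the $B_{\alpha}$.

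The key algebraic input is the multiplicativity identity
\[
(\Theta(B_{1})\circ B_{1})\,(\Theta(B_{2})\circ B_{2}) \;=\; \Theta(B_{1}B_{2})\circ (B_{1}B_{2}),\qquad B_{1},B_{2}\in\A_{+},
\]
verified case by case in the four parity combinations: the $\zeta$ factors from odd/odd pairings and the sign $(-1)^{|B_{1}||B_{2}|}$ from moving $B_{1}$ past $\Theta(B_{2})\in\A_{-}$ combine, together with $\zeta^{2}=-1$, to produce the correct $\zeta$ on the right. Two consequences of this identity drive the whole proof. First, since $H_{+}$ is even, $e^{\beta(H_{+}+\Theta(H_{+}))/n}=\Theta(e^{\beta H_{+}/n})\circ e^{\beta H_{+}/n}$ is itself of $\Theta(F)\circ F$ type. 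Second, $e^{\beta V/n}$ is a power series with nonnegative coefficients in the $\Theta(B_{\alpha})\circ B_{\alpha}$, and iterating the identity regroups each monomial $V^{k}$ as $\sum_{\alpha_{1},\ldots,\alpha_{k}}\Theta(B_{\alpha_{1}}\cdots B_{\alpha_{k}})\circ(B_{\alpha_{1}}\cdots B_{\alpha_{k}})$. Applying the Lie--Trotter formula
\[
e^{-\beta H} \;=\; \lim_{n\to\infty}\bigl(e^{\beta(H_{+}+\Theta(H_{+}))/n}\,e^{\beta V/n}\bigr)^{n}
\]
and using the multiplicativity identity one last time to absorb the prefactor $\Theta(A)\circ A$, the quantity $\Tr(\Theta(A)\circ A\,e^{-\beta H})$ is expressed in the limit as a nonnegative combination of terms of the form $\Tr(\Theta(E)\circ E)$ with $E\in\A_{+}$ built from $A$, the $B_{\alpha}$'s, and $e^{\beta H_{+}/n}$. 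Orthogonality of Majorana monomials under the normalized trace then gives $\Tr(\Theta(E)\circ E)=|E_{\vn}|^{2}\geq 0$, and dividing by $Z_{\beta H}>0$ yields reflection positivity for $\rho_{\beta H}$.

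For the converse, I would expand $\rho_{\beta H}(\Theta(A)\circ A)$ in $\beta$ for $A\in\A_{+}$ with vanishing scalar component; the $\beta^{0}$ term drops out, and Majorana-monomial orthogonality collapses the order-$\beta$ contribution to $\beta\sum_{I,I'\neq \vn}\bar a_{I}\,a_{I'}\,J^{0}_{I,I'}\,\kappa_{I,I'}$ with positive constants $\kappa_{I,I'}$, so that nonnegativity for all $A$ forces $J^{0}\geq 0$. The main obstacle is the combinatorial/parity bookkeeping in the sufficiency direction: one must track the $\zeta$ factors through the multiplicativity identity, through the expansion of $V^{k}$, and through the Trotter expansion, verifying that they recombine coherently to preserve the $\Theta(\cdot)\circ\cdot$ form all the way to the free-case inequality.
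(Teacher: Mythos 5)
Your proposal is correct in substance, and the sufficiency direction takes a genuinely different route from the paper. You use the classical Fr\"ohlich--Israel--Lieb--Simon strategy: diagonalize $J^{0}$ to write $-H = H_{+}+\Theta(H_{+})+\sum_{\alpha}\Theta(B_{\alpha})\circ B_{\alpha}$ up to a harmless real constant, establish the operator identity $(\Theta(B_{1})\circ B_{1})(\Theta(B_{2})\circ B_{2})=\Theta(B_{1}B_{2})\circ(B_{1}B_{2})$ --- which does hold for homogeneous $B_{1},B_{2}\in\A_{+}$, since the resulting exponent $\abs{B_{1}}+\abs{B_{2}}+2\abs{B_{1}}\abs{B_{2}}=(\abs{B_{1}}+\abs{B_{2}})^{2}$ of $\zeta$ agrees mod $4$ with $\abs{B_{1}B_{2}}$, in parallel with \eqref{eq:phaseis1} --- and then run a Trotter expansion whose terms all reduce to $\Tr(\Theta(E)\circ E)=\abs{e_{\vn}}^{2}\geq 0$. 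The paper instead expands $e^{-H}$ as a power series, factorizes the traces (Lemmas \ref{Lemma:Key-1} and \ref{Lemma:Key-2}) to exhibit the form as $\sum_{k}\frac{1}{k!}\lra{\chi^{k},J^{\otimes k}\chi^{k}}$ when the \emph{full} matrix $J$ is positive semidefinite (Theorem \ref{Thm:ReflectionPositivityH-I}), and then reduces $J^{0}\geq 0$ to $J\geq 0$ via the rank-one regularization $\widetilde J_{\vep}$ of \eqref{eq:TildeJvep}. Your route treats $J^{0}\geq 0$ directly and avoids the $\vep$-limit, at the price of the Trotter limit and the parity bookkeeping you acknowledge; the necessity direction (first derivative at $\beta=0$ on the codimension-one subspace $a_{\vn}=0$) is the same as the paper's.

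Two points should be tightened. First, your multiplicativity identity and the regrouping of $V^{k}$ require the $B_{\alpha}$ to be \emph{homogeneous}; this is available precisely because global gauge invariance makes $J^{0}$ block-diagonal with respect to the even/odd grading, so each block can be diagonalized separately. Likewise the prefactor $\Theta(A)\circ A$ must be reduced to homogeneous $A$, which is harmless since the cross terms are odd and the functional is even. Second, in the converse direction the constants $\kappa_{\I\I'}$ are \emph{not} positive: they are the signs $q_{\I}q_{\I'}$ with $q_{\I}=(-1)^{\frac12 k_{\I}(k_{\I}-1)}$, cf.\ \eqref{eq:poswithF}, and these can equal $-1$ even within a fixed parity sector (e.g.\ $k_{\I}=2$, $k_{\I'}=4$). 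What saves the argument is that they factor as $\overline{q_{\I}}\,q_{\I'}$, so the substitution $a_{\I}\mapsto q_{\I}a_{\I}$ absorbs them and positivity of the quadratic form still forces $J^{0}\geq 0$; a non-factoring positive Hadamard multiplier would not suffice. Finally, $Z_{\beta H}>0$ is not automatic from your expansion (each Trotter term is nonnegative but could a priori vanish); as in Theorem \ref{Theorem:RPGibbsFunctionals}, one should assume $Z_{\beta H}\neq 0$ so that $\rho_{\beta H}$ is defined, whereupon $Z_{\beta H}>0$ follows.
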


In the second part of the paper, we focus on spin algebras $\A^{\rm spin}$, generated 
by the Pauli matrices $\sigma^{1}_{j}$, $\sigma^{2}_{j}$, $\sigma^{3}_{j}$ 
associated to each lattice site $j \in \Lambda$.  
In \S \ref{sec:RPSpinSyst}
we study Hamiltonians of the form 
\be\label{eq:introspinham}
H^{\rm spin}= -
\sum
J{}^{a_1, \ldots, a_k}_{i_{1}, \ldots, i_{k}}
\,
\sigma_{i_{1}}^{a_{1}}\cdots \sigma_{i_{k}}^{a_{k}}\;.
\ee
By expressing the spins  $\sigma^{a}_{j}$ as even polynomials in 
the Majoranas, we translate Theorem \ref{Theorem:Summary} to the 
spin context. This yields necessary and sufficient conditions 
on reflection positivity in terms of 
the coupling constants 
$J{}^{a_1, \ldots, a_k}_{i_{1}, \ldots, i_{k}}$.
Again, the condition involves only the couplings
across the reflection plane.
 
In \S\ref{sec:reflectionsandgauge} we analyze different reflections 
$\Theta$ and $\Theta' = \alpha \Theta \alpha^{-1}$,
both of which interchange the same $\A_{\pm}$.
If they are related by a 
reflection-invariant gauge automorphism $\alpha$,
then
our characterization of reflection positivity
applies to $\Theta'$ as well as $\Theta$.

In  \S \ref{sec:exspin}
we illustrate the main results by showing that a number of spin Hamiltonians
with nearest neighbor as well as long-range interactions
fit naturally into our general framework. 
The central point of these examples is that 
our characterization of reflection positivity in Theorem \ref{Theorem:Summary}  can be applied easily to realistic physical systems. 

Reflection positivity of functionals has a long history in physics, as well as mathematics.  On the one hand, reflection positivity gives the relation between classical systems and quantum theory.  Furthermore reflection positivity is central in proving the existence of phase transitions/multiple equilibrium states in a number of classical and quantum systems. 
Concrete examples include, among many others, classical and quantum Heisenberg antiferromagnets,
hard-core nearest-neighbor and Coulomb lattice gases, and $\phi^4_{2}$ quantum fields.
Some earlier work can be found in \cite{OS1,OS2,GlimmJaffeSpencer-PhaseTransitions1975,FroehlichSimonSpencer,DysonLiebSimon1978,FroehlichIsraelLiebSimon1978,KleinLandau1981,FroehlichOsterwalderSeiler1983,Lieb1994,MacrisNachtergaele1996,NeebOlafsson2014,NeebOlafsson2015}. 

The present work was inspired by \cite{Jaffe-Pedrocchi2015-1,Jaffe-Pedrocchi2015-2}, and generalizes that work: here 
we obtain reflection positivity for couplings that are not necessarily diagonal 
(including long-range interactions),
for observables that are not necessarily even, and with hypotheses that are necessary as well as sufficient.

\subsection{Reflections}\label{sec:latsec}
Here we study a finite set $\Lambda$ which is an index set for the generators $c_{i}$ of our algebra. We assume that $\Lambda$ is invariant under an involution $\vartheta \colon \Lambda \to \Lambda$ that we call a {\em reflection}.   
We assume that $\vartheta$ exchanges two subsets $\Lambda_{\pm}$ whose union is $\Lambda$, and that $\vartheta$ has no fixed points.  

In specific models, $\Lambda$ is often a finite subset of a manifold 
$\calm$,  and $\vartheta$ is the restriction to $\Lambda \subset \calm$ of a reflection $\vartheta_{\calm} \colon \calm \rightarrow \calm$.
In the examples of interest, $\calm$ is a disjoint union
$\calm = \calm_{+} \sqcup \calm_{0} \sqcup \calm_{-}$,
where $\vartheta_{\calm}$ interchanges $\calm_{+}$
and $\calm_{-}$, and leaves the hypersurface 
$\calm_{0}$ invariant.
The set $\Lambda_{+}$ is then a finite set of points in $\calm_{+}$,
and $\Lambda_{-}$ is its reflection. 
 
We give a number of examples of this situation, where
$\calm$ is the Euclidean space $\R^{d}$, a torus $\T^{d}$, or a Riemann surface.

If $\calm = \R^{d}$, the 
reflection $\vartheta_{\R^{d}} \colon \R^{d} \rightarrow \R^{d}$
is given in suitable coordinates by 
\[\vartheta(x_0, x_1, \ldots, x_{d-1}) = (-x_0, x_1, \ldots, x_{d-1})\,.\]
The half-spaces $\R^{d}_{\pm} = \{ x \in \R^{d} \, : \, \pm x_{0} \geq 0\}$ have as a common boundary the 
reflection plane
$\R^{d}_{0} = \{ x \in \R^{d} \, : \, x_{0} = 0\}$.
Then $\Lambda_{+}\subset \R^{d}_{+}$ is a finite set of points on 
one side of the reflection plane $\R^{d}_{0}$, the set 
$\Lambda_{-}$ is its reflection, 
and $\Lambda = \Lambda_{+} \sqcup \Lambda_{-}$. 
Note that
$\Lambda$ contains no points in the reflection 
hyperplane.

An important example is  
the $d$-dimensional simple cubic lattice 
	\[
	\Lambda^{\mathrm{cubic}} 
		= \{-L-\half, -L + \half, \ldots, L-\half, L+\half\}^{d}\,,
	\]
with the reflection plane illustrated by the dashed line 
in  Figure~\ref{fig:cubic}.

\begin{figure}[h!]
   \centering
   \resizebox{0.75\textwidth}{!}{%
  		\includegraphics[width = 6 cm]{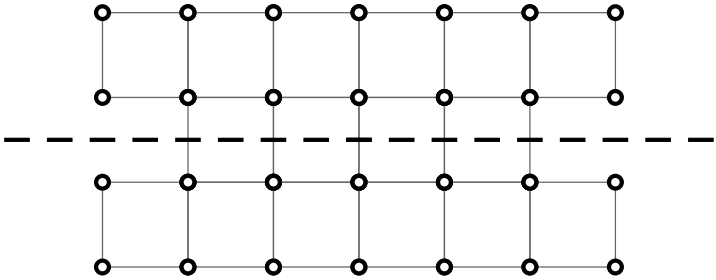}
		}
   \caption{Reflection in a cubic lattice.}
   \label{fig:cubic}
\end{figure}
Another example, with $\calm = \R^{2}$, is the  honeycomb lattice in Figure~\ref{fig:honeycomb}.
\begin{figure}[h!]
   \centering
      \resizebox{0.75\textwidth}{!}{%
    \includegraphics[width = 6 cm]{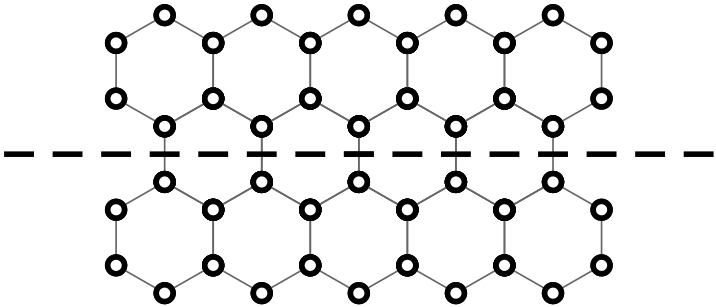}
   		}
   \caption{Reflection in the 2-dimensional honeycomb lattice.}\label{fig:honey}
   \label{fig:honeycomb}
\end{figure}

One often has periodic boundary conditions, 
in which case $\calm$ is
the torus $\T^{d}$ instead of $\R^{d}$.
The invariant hypersurface
$\calm_{0}$ is then the union of two $(d-1)$-tori.

Examples where $\calm$ is a Riemann surface of arbitrary genus
 arise from considering the conformal inversion $\vartheta$ of a Schottky double of an open Riemann surface $T$, with $\Lambda_{+}$ a finite set of points in $T$. 

In \S\ref{Sect:Introduction}--\S\ref{sec:RPSpinSyst}, including the main Theorems 
\ref{Thm:ReflectionPositivityH-I},
\ref{Thm:ReflectionPositivityH}, \ref{Theorem:RPGibbsFunctionals}, and 
\ref{thm:maincorollarySpinSyst}, we only need $\Lambda$ to be an abstract set with a fixed point free involution $\vartheta$.
In the discussion of examples in  \S\ref{sec:exspin}, we require additional structure for  
$\Lambda$, involving its geometric significance as a subset of a manifold $\calm$, as explained later.  
\subsection{Majoranas}\label{sec:majgau}
One defines an algebra of Majoranas
 on the lattice $\Lambda$ as  the  $*$-algebra $\A$ with self-adjoint generators $c_{i}=c_{i}^{*}$ that satisfy the Clifford relations \eqref{eq:Cliffrels}. 
For any subset $\Gamma\subset\Lambda$, let  $\A(\Gamma)$ denote the algebra generated by the $c_{j}$'s with 
$j\in\Gamma$.  In particular, $\A=\A(\Lambda)$, and we define $\A_{\pm}:=\A(\Lambda_{\pm})$.  

We call the automorphism  $\alpha \colon \A \rightarrow \A$ that implements the $\Z_{2}$ grading a {\em global gauge automorphism}. 
On the generators, it satisfies 
	\be\label{eq:GlobalGauge}
		c_{j} \mapsto \alpha(c_{j})= -c_{j}\;.
	\ee
The algebra $\A$ decomposes into the 
spaces $\A^{\even}$ and $\A^{\odd}$ of elements that are 
even and odd for the $\Z_2$-grading,
\[
	\A = \A^{\even} \oplus \A^{\odd}\,.
\] 
In the same vein, $\A(\Gamma) = \A(\Gamma)^{\even}\oplus \A(\Gamma)^{\odd}$.
An element $A\in \A$ that is either even or odd is called \emph{homogeneous}. 
Since $A\in \A$ is even if $\alpha(A) = A$ and odd if $\alpha(A) = -A$, 
the even elements are also called
\emph{globally gauge invariant}.

Define the \emph{degree} $\abs{A}$ of $A$ as  
$\abs{A} = 0$
for $A \in \A^{\even}$, and $\abs{A} = 1$ for $A \in \A^{\odd}$.
The algebra $\A(\Gamma)$ commutes with 
$\A^{\even}(\Gamma')$ when $\Gamma\cap\Gamma'=\varnothing$.  
More generally, if $A\in \A(\Gamma)$ and $B \in \A(\Gamma')$ are both eigenvectors of $\alpha$, then
\[
	AB = (-1)^{\abs{A}\abs{B}}BA\;,
	\qsp{when} 
	 \Gamma\cap\Gamma'=\varnothing\;.
\]
One says that $\A(\Gamma)$ and $\A(\Gamma')$ supercommute if $\Gamma$ and $\Gamma'$
are disjoint. 

\subsection{Reflections and Invariant Bases}\label{sec:bases}
The reflection $\vartheta \colon \Lambda \rightarrow \Lambda$
defines an \emph{anti-linear} *-automorphism 
$\Theta \colon \A \rightarrow \A$ given by 
\be\label{eq:ReflectionC}
\Theta(c_{i_1} \cdots c_{i_{k}}) := c_{\vartheta(i_1)} \cdots c_{\vartheta(i_{k})}\,.
\ee
Note that $\Theta$ exchanges 
$\A_{+}$ with $\A_{-}$, namely $\Theta(\A_{\pm})=\A_{\mp}$, and satisfies $\Theta^2 = \mathrm{Id}$.
We construct bases of $\A$ that are adapted to this reflection.

For $\Gamma \subseteq \Lambda$,
let $\cS_{\Gamma}$ denote the set of sequences
$\I = (i_{1}, \ldots, i_{k})$ of \emph{distinct} 
lattice points $i_{1}, \ldots, i_{k} \in \Gamma$. 
For the important choices $\Gamma = \Lambda$, 
$\Gamma = \Lambda_{+}$ and $\Gamma = \Lambda_{-}$,
we denote $\cS_{\Gamma}$ by $\cS$, $\cS_{+}$, and $\cS_{-}$,
respectively.
For $\I \in \cS_{\Gamma}$, define the monomial
\[
	C_{\I} := c_{i_{1}}\cdots c_{i_{k}}\,,
\]
and define $C_{\I} := I$ for $\I = \varnothing$.  Each $C_{\I}$ is an eigenvector of the gauge automorphism $\alpha$, and we denote its degree by
\be\label{eq:degree}
	\abs{\I} := \abs{C_{\I}}\,.
\ee
Then $\abs{\I} = 0$ if $k$ is even, and $\abs{\I} = 1$
if $k$ is odd.  Also
	\be\label{eq:Adjoint C}
		C_{\I}^{*} = (-1)^{\frac12 k(k-1)} C_{\I}\;.
	\ee

The algebra $\A(\Gamma)$ is spanned by the operators 
$C_{\I}$ with $\I \in \cS_{\Gamma}$,
but they are linearly dependent.
In fact, $C_{\I} = \pm C_{\I'}$ if the sets
$\{i_{1}, \ldots, i_{k}\}$ and 
$\{i'_{1}, \ldots, i'_{k'}\}$ are the same.
A choice 
$\cP_{+} \subseteq \cS_{+}$ 
such that every set $\{i_{1}, \ldots, i_{k}\}$
of distinct lattice points
corresponds to precisely one tuple
$(i_{1}, \ldots, i_{k})$ in $\cP_{+}$ yields a 
basis
\[
\B_{+} = \{C_{\I}\,;\, \I \in \cP_{+}\}
\]
 of $\A_{+}$.
 This, in turn, 
yields a basis 
$\B_{-} = \Theta(\B_{+})$
of $\A_{-}$.

\subsection{The Twist}

From the two bases $\B_{+}$ and $\B_{-}$, we construct a basis 
$\B$ of $\A$, that  is adapted to the reflection $\Theta$.
For this, fix a square root of minus one, 
	\be\label{zeta}
	\zeta = \pm \sqrt{-1}\;,
	\ee 
and define a basis for $\A$ by
\[
\B := \{\zeta^{\abs{\I}\abs{\I'}}\Theta(C_{\I}) C_{\I'}\,;\; C_{\I}, C_{\I'} \in \B_{+} \}\,.
\]
Although the main results on reflection positivity will hold for both twists $\zeta = \sqrt{-1}$
and $\zeta = -\sqrt{-1}$, the class of allowed Hamiltonians will \emph{not} be the same.

Note that, in a sense, the basis elements in $\B$ are the geometric mean of the operators
$\Theta(C_{\I}) C_{\I'}$ and $C_{\I'}\Theta(C_{\I})$, which differ by a factor 
$(-1)^{\abs{\I}\abs{\I'}}$.
The identity $I = C_{\varnothing} = \Theta(C_{\vn}) = \Theta(C_{\varnothing})C_{\varnothing}$ 
is a basis element in  all three bases $\B_+$, $\B_{-}$ and $\B$.
Every $A\in \A$ has an expansion
	\be\label{eq:basisexpansion-0}
		A 
		= \sum_{\I,\I'} a_{\I\I'}\, \zeta^{\abs{\I}\abs{\I'}}
			\Theta(C_{\I}) C_{\I'}\,,
	\ee
which is unique if the $\I, \I'$ are restricted to be in $\cP_{+}$.

\subsection{Twisted Product}\label{Sect:TwistedProduct}
In order to streamline notation, introduce the following (non-associative) twisted product
	\be
	\cz:\; \A\times \A\rightarrow \A\;.
	\ee 	

\begin{defini}\label{def:twistprod}
Let $A \in \A$ be of the form $A = A_{-}A_{+}$ 
with $A_{\pm} \in \A_{\pm}$, and similarly  
$B = B_{-}B_{+}$ with $B_{\pm} \in \A_{\pm}$.
If $A_{\pm}$ and $B_{\pm}$ are homogeneous, then 
$A\circ B$ is defined by
\[
A \circ B := 
\zeta^{\abs{A_{-}} \abs{B_{+}} - \abs{A_{+}} \abs{B_{-}}} 
AB\,.
\]
This extends bilinearly to the product
$\cz:\; \A\times \A\rightarrow \A$. 
\end{defini}
Note that the formula 
$
	A \circ B := 
	\zeta^{\abs{A_{-}} \abs{B_{+}} - \abs{A_{+}} \abs{B_{-}}} AB
$
also holds for $A = A_{+}A_{-}$ and $B = B_{+}B_{-}$. 
One finds
\be\label{eq:twistonbasis}
X_{\I_{1}\I'_{1}} \circ X_{\I_{2}\I'_{2}}
=
\zeta^{\abs{\I_{1}}\,\abs{\I'_{2}} - \abs{\I'_{1}}\,\abs{\I_{2}}} 
X_{\I_{1}\I'_{1}} X_{\I_{2}\I'_{2}}
\ee
for twisted products of elements of the form 
$X_{\I \I'} = \Theta(C_{\I})C_{\I'}$ or
$X_{\I \I'} =  C_{\I'} \Theta(C_{\I})$.

In terms of the twisted product, the basis $\B$ can be written
\be
	\B = \{ \Theta(C_{\I})\cz C_{\I'}\,:\,\I,\I' \in\cP_{+}\}\,. 
\ee
Correspondingly we can rewrite the expansion \eqref{eq:basisexpansion-0} of a general element $A\in \A$ in basis elements as 
	\be\label{eq:basisexpansion}
		A = \sum_{\I,\I'\in\cP_{+}}
		a_{\I\I'} \,\Theta(C_{\I})\circ C_{\I'}\;.
	\ee

The twisted product has a number of
useful properties.
For example $\A_{+}$ and $\A_{-}$ commute with respect to the twisted product.
\begin{prop}\label{prop:supercomm}
If $A_{+} \in \A_{+}$ and $B_{-} \in \A_{-}$, then
$A_{+}\circ B_{-} = B_{-} \circ A_{+}$.
\end{prop}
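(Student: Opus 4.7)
The plan is to reduce to the homogeneous case by bilinearity of $\circ$, then to apply Definition \ref{def:twistprod} directly with one side of the reflection trivial on each of $A_+$ and $B_-$. Explicitly, I would first note that it suffices to prove the identity for homogeneous $A_+ \in \A_+$ and homogeneous $B_- \in \A_-$, since both sides of $A_+ \circ B_- = B_- \circ A_+$ are bilinear in $A_+$ and $B_-$ and $\A_\pm$ are spanned by their homogeneous components.

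For the homogeneous calculation, I would view $A_+$ as the product $A_- A_+$ with trivial $\A_-$-factor $A_- = I$ of degree $0$, and likewise $B_-$ as $B_- B_+$ with $B_+ = I$. Applying Definition \ref{def:twistprod} twice gives
\begin{equation*}
A_+ \circ B_- = \zeta^{-|A_+|\,|B_-|} A_+ B_-\,, \qquad B_- \circ A_+ = \zeta^{|A_+|\,|B_-|}\, B_- A_+\,.
\end{equation*}
The remaining task is to compare $A_+ B_-$ and $B_- A_+$: since $\A_+ = \A(\Lambda_+)$ and $\A_- = \A(\Lambda_-)$ are generated by Majoranas on disjoint index sets, they supercommute (as noted just after \eqref{eq:GlobalGauge}), so $A_+ B_- = (-1)^{|A_+|\,|B_-|} B_- A_+$.

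Putting the three relations together, the identity $A_+ \circ B_- = B_- \circ A_+$ becomes $\zeta^{-|A_+||B_-|}(-1)^{|A_+||B_-|} = \zeta^{|A_+||B_-|}$. The case $|A_+||B_-| = 0$ is immediate, and the case $|A_+||B_-| = 1$ reduces to $-\zeta^{-1} = \zeta$, i.e.\ $\zeta^2 = -1$, which is precisely the definition \eqref{zeta} of $\zeta$. I do not expect a real obstacle here; the only subtle point is that the whole construction is calibrated so that the supercommutation sign $(-1)^{|A_+||B_-|}$ is exactly absorbed by the twist factor $\zeta^{2|A_+||B_-|}$, which is the structural reason for choosing $\zeta^2 = -1$ in \eqref{zeta}.
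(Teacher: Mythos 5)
Your proof is correct and is essentially identical to the paper's: both reduce to homogeneous elements, compute $A_{+}\circ B_{-} = \zeta^{-\abs{A_{+}}\abs{B_{-}}}A_{+}B_{-}$ and $B_{-}\circ A_{+} = \zeta^{\abs{A_{+}}\abs{B_{-}}}B_{-}A_{+}$ from Definition \ref{def:twistprod}, and close with the supercommutation relation $A_{+}B_{-} = (-1)^{\abs{A_{+}}\abs{B_{-}}}B_{-}A_{+}$. Your added remark that the twist $\zeta^{2} = -1$ is calibrated precisely to absorb the supercommutation sign is a fair gloss on why the argument works, but the route is the same.
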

\begin{proof}
It suffices to prove this for homogeneous elements, in which case
the result
follows from 
$A_{+} \circ B_{-} = \zeta^{- \abs{A_{+}} \abs{B_{-}}} A_{+}B_{-}$, 
$B_{-} \circ A_{+} = \zeta^{\abs{A_{+}} \abs{B_{-}}} B_{-}A_{+}$,
and $A_{+}B_{-} = (-1)^{\abs{A_{+}} \abs{B_{-}}} B_{-}A_{+}$.
\end{proof}

The twisted product respects the reflection:

\begin{prop}\label{prop:refresp}
For all $A, B \in \A$, one has $\Theta(A \circ B) = \Theta(A)\circ \Theta(B)$.
\end{prop}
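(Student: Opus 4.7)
The strategy is to reduce to homogeneous elements and then verify that the $\zeta$-prefactors on the two sides agree. By the bilinear extension in Definition \ref{def:twistprod} and the anti-linearity of $\Theta$, both sides depend anti-bilinearly on $(A,B)$, so it suffices to treat $A = A_-A_+$ and $B = B_-B_+$ with homogeneous factors $A_\pm, B_\pm \in \A_\pm$.

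For the left-hand side, I apply $\Theta$ to $A\circ B = \zeta^{|A_-||B_+| - |A_+||B_-|}AB$. Since $\Theta$ is an anti-linear $*$-automorphism and $\zeta^2 = -1$ gives $\bar\zeta = \zeta^{-1}$, the result is $\Theta(A\circ B) = \zeta^{|A_+||B_-| - |A_-||B_+|}\Theta(A)\Theta(B)$. For the right-hand side, observe that $\Theta$ swaps $\A_\pm$ and preserves parity (it sends a degree-$k$ monomial to a degree-$k$ monomial), so $\Theta(A) = \Theta(A_-)\Theta(A_+)$ is the product of an $\A_+$-factor $\Theta(A_-)$ of parity $|A_-|$ and an $\A_-$-factor $\Theta(A_+)$ of parity $|A_+|$, and similarly for $\Theta(B)$. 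Invoking the remark immediately after Definition \ref{def:twistprod} — that the twisted-product formula is equally valid when the $\A_+$-factor is written first — one reads off
$$\Theta(A)\circ\Theta(B) = \zeta^{|\Theta(A_+)||\Theta(B_-)| - |\Theta(A_-)||\Theta(B_+)|}\Theta(A)\Theta(B) = \zeta^{|A_+||B_-| - |A_-||B_+|}\Theta(A)\Theta(B),$$
matching the expression above.

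I expect the only subtlety to be the sign bookkeeping: the exponent flip contributed by the complex conjugation of $\zeta$ must coincide exactly with the interchange of the roles of $+$ and $-$ introduced by $\Theta$ on the $\A_\pm$-factorization, and these two effects must cancel. Proposition \ref{prop:supercomm} is not needed for the argument, but it offers an alternative path: one could reorder $\Theta(A_-)\Theta(A_+) = (-1)^{|A_-||A_+|}\Theta(A_+)\Theta(A_-)$ and then apply Definition \ref{def:twistprod} directly to this $(\A_-)(\A_+)$-factorization, absorbing the extra sign into the product and again comparing exponents.
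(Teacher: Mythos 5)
Your proof is correct and follows essentially the same route as the paper: reduce to homogeneous factorized elements, compute $\Theta(A\circ B)$ using antilinearity and $\overline{\zeta}=\zeta^{-1}$, and compute $\Theta(A)\circ\Theta(B)$ by noting that $\Theta$ swaps $\A_{\pm}$ while preserving parity, so the two $\zeta$-exponents agree. Your explicit appeal to the remark after Definition \ref{def:twistprod} (the formula's validity for the $A_{+}A_{-}$ ordering) just makes precise a step the paper leaves implicit.
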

\begin{proof}
It suffices to check this for $A = A_{-}A_{+}$ and $B = B_{-}B_{+}$
as in Definition \ref{def:twistprod}.
By antilinearity of $\Theta$, one then finds
\[
\Theta(A\circ B) = \Theta(\zeta^{\abs{A_{-}} \abs{B_{+}} - \abs{A_{+}} \abs{B_{-}}} AB)=
\zeta^{-\abs{A_{-}} \abs{B_{+}} + \abs{A_{+}} \abs{B_{-}}} \Theta(A)\Theta(B)
\]
for the left side of the equation. For the right side, one finds the same expression
\beqs
\Theta(A)\circ \Theta(B) &=& 
\Theta(A_{-})\Theta(A_{+}) \circ \Theta(B_{-})\Theta(B_{+})\\
&=&
\zeta^{\abs{A_{+}} \abs{B_{-}} -\abs{A_{-}} \abs{B_{+}}}
\Theta(A)\Theta(B)\,,
\eeqs
since $\Theta(A_{\pm}), \Theta(B_{\pm}) \in \A_{\mp}$.
\end{proof}

It follows that the reflection permutes the basis $\B$.  
\begin{cor}\label{cor:refbasis}
The twisted product satisfies
	\be\label{eq:ThetaReverse} 
		\Theta \big( \Theta(A)\cz B \big) 
		= \Theta(B)\cz A\,,
		\quad\text{for} \quad
		A, B \in \A_{+}\;, \text{  or  } A,B \in \A_{-}\;.
	\ee
In particular, the basis $\B$ is permuted by $\Theta$, 
	\be\label{Theta-Basis}
	\Theta(\Theta(C_{\I})\cz C_{\I'}) = \Theta(C_{\I'})\cz C_{\I}\;.
	\ee 
\end{cor}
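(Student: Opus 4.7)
The plan is to derive the identity \eqref{eq:ThetaReverse} as a direct consequence of the two preceding propositions, and then read off \eqref{Theta-Basis} by specialization.

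First I would take $A, B \in \A_{+}$. Since $\Theta(A) \in \A_{-}$, the expression $\Theta(A) \circ B$ is of the form needed to apply Proposition~\ref{prop:refresp}. Applying $\Theta$ to $\Theta(A)\circ B$ and using Proposition~\ref{prop:refresp} together with $\Theta^{2} = \mathrm{Id}$ gives
\[
\Theta(\Theta(A) \circ B) = \Theta(\Theta(A)) \circ \Theta(B) = A \circ \Theta(B)\;.
\]
Now $A \in \A_{+}$ and $\Theta(B) \in \A_{-}$, so Proposition~\ref{prop:supercomm} yields $A \circ \Theta(B) = \Theta(B) \circ A$, which is the desired identity. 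For the case $A, B \in \A_{-}$ the argument is identical after swapping the roles of $\A_{+}$ and $\A_{-}$: then $\Theta(A) \in \A_{+}$, and the same two propositions apply in the same order.

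The "in particular" claim \eqref{Theta-Basis} is obtained by specializing $A = C_{\I}$ and $B = C_{\I'}$, both of which lie in $\A_{+}$ when $\I, \I' \in \cP_{+}$.

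There is no real obstacle here: the only point requiring a moment of care is the interchange at the end, where one must observe that the factor $\Theta(A)$ \emph{inside} the twisted product becomes a factor $A$ on the \emph{other} side after applying $\Theta$, so that the twisted commutation of Proposition~\ref{prop:supercomm} is precisely what is needed to match the form $\Theta(B) \circ A$ on the right-hand side of \eqref{eq:ThetaReverse}.
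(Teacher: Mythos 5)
Your proof is correct and follows exactly the paper's own argument: apply Proposition~\ref{prop:refresp} together with $\Theta^{2}=\mathrm{Id}$ to get $A\circ\Theta(B)$, then invoke Proposition~\ref{prop:supercomm} to commute the factors. The only difference is that you spell out the $\A_{-}$ case explicitly, which the paper leaves implicit.
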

\begin{proof}
By Proposition \ref{prop:refresp}, one has $\Theta(\Theta(A) \circ B) = A \circ \Theta(B)$,
which equals  $\Theta(B)\circ A$
by Proposition \ref{prop:supercomm}.
\end{proof}

\begin{prop}\label{prop:reflectioninv}
Let $A\in\A$ have the expansion \eqref{eq:basisexpansion}. Then  $A$ is reflection invariant, namely $\Theta(A) = A$, if and only if the matrix $a_{\I \I'}$ is hermitian, namely $a_{\I'\I} = \overline{a_{\I\I'}}$.
\end{prop}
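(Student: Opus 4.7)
The plan is to apply the anti-linear reflection $\Theta$ term-by-term to the expansion \eqref{eq:basisexpansion}, use Corollary \ref{cor:refbasis} to see how $\Theta$ permutes the basis $\B$, and then compare coefficients. Since $\B$ is a basis, the expansion coefficients of $A$ and $\Theta(A)$ must agree if and only if $\Theta(A) = A$, and matching them gives exactly the hermiticity condition.

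More concretely, I would first compute $\Theta(A)$. Using the anti-linearity of $\Theta$ and equation \eqref{Theta-Basis} of Corollary \ref{cor:refbasis},
\begin{equation*}
\Theta(A) = \sum_{\I,\I'\in\cP_{+}} \overline{a_{\I\I'}}\,\Theta\big(\Theta(C_{\I})\cz C_{\I'}\big)
= \sum_{\I,\I'\in\cP_{+}} \overline{a_{\I\I'}}\,\Theta(C_{\I'})\cz C_{\I}\,.
\end{equation*}
Relabeling the summation indices by swapping $\I \leftrightarrow \I'$, this becomes
\begin{equation*}
\Theta(A) = \sum_{\I,\I'\in\cP_{+}} \overline{a_{\I'\I}}\,\Theta(C_{\I})\cz C_{\I'}\,,
\end{equation*}
which is again an expansion in the basis $\B$.

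Next I would invoke uniqueness of the basis expansion \eqref{eq:basisexpansion}: the identity $\Theta(A) = A$ is equivalent to the equality of the two coefficient families, i.e.\ $\overline{a_{\I'\I}} = a_{\I\I'}$ for all $\I,\I' \in \cP_{+}$. This is precisely the hermiticity condition $a_{\I'\I} = \overline{a_{\I\I'}}$ stated in the proposition, and the argument is bi-directional since we only used linear independence of $\B$ and the bijection $\Theta \colon \B \to \B$.

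The proof is essentially bookkeeping; there is no real obstacle beyond making sure that (i) $\Theta$ really does act on the basis by the index-swap given in \eqref{Theta-Basis} (already proved), and (ii) the relabeling in the double sum is permissible, which it is because the sum runs over the symmetric index set $\cP_{+} \times \cP_{+}$. Thus the only subtlety worth flagging is that the hermiticity condition is phrased intrinsically in terms of $\cP_{+}$, so it implicitly depends on the choice of $\cP_{+}$, but this dependence is harmless since any two choices are related by signs that cancel between rows and columns.
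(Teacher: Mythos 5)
Your proof is correct and is exactly the argument the paper intends: the paper's own proof is the one-line remark that the claim ``follows from anti-linearity of $\Theta$ and Corollary \ref{cor:refbasis}'', and your computation simply writes out that argument in full. No issues.
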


\begin{proof}
This follows from anti-linearity of $\Theta$ and Corollary \ref{cor:refbasis}.
\end{proof}

Define $k \colon \Lambda \rightarrow \mathbb{N}$ 
by $k_{\I} = r$ for $\I = (i_1, \ldots, i_{r})$. 
When dealing with adjoint operators, one 
frequently encounters the derived expressions
\be\label{eq:q}
q_{\I} := (-1)^{\frac{1}{2}k_{\I} (k_{\I}-1)} \quad\text{and}\quad
s_{\I} := \zeta^{\frac{1}{2}k_{\I}(k_{\I}-1)}\,.
\ee
Note that $q_{\I}$ is 4-periodic in $k$, and $s_{\I}$ is 8-periodic.

\begin{prop}\label{Prop:Adjoints}
Let $\I,\I'\in \cP_{+}$.  Then  
	\beq\label{eq:Early Identity-0}
		(\Theta(C_{\I})\cz C_{\I'})^* 
		&=& \label{eq:Basis-Adjoint-0}
			q_{\I}\,q_{\I'}\,
			\Theta(C_{\I})\cz C_{\I'}\;.
	\eeq
\end{prop}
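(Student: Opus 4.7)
The plan is to unfold the twisted product, take the adjoint directly, and then repackage the result as a twisted product again, exploiting the identity $\zeta^{2}=-1$ to absorb the supercommutation sign.

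First I would use Definition \ref{def:twistprod}. Since $\Theta(C_{\I}) \in \A_{-}$ and $C_{\I'} \in \A_{+}$, the ``other factors'' are trivial ($A_{+}=B_{-}=I$), so the definition gives
\[
\Theta(C_{\I}) \circ C_{\I'} = \zeta^{\abs{\I}\,\abs{\I'}} \Theta(C_{\I})\, C_{\I'}\,.
\]
Taking adjoints and using \eqref{eq:Adjoint C} together with the fact that $\Theta$ is an anti-linear $*$-automorphism (so $\Theta(C_{\I})^{*}=\Theta(C_{\I}^{*})=q_{\I}\Theta(C_{\I})$, since $q_{\I}$ is real), this becomes
\[
\big(\Theta(C_{\I}) \circ C_{\I'}\big)^{*} = \overline{\zeta^{\abs{\I}\abs{\I'}}}\, q_{\I'}\, q_{\I}\, C_{\I'}\,\Theta(C_{\I}) = \zeta^{-\abs{\I}\abs{\I'}}\, q_{\I}q_{\I'}\, C_{\I'}\,\Theta(C_{\I})\,.
\]

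Next I would swap the order of $C_{\I'}$ and $\Theta(C_{\I})$. These live in the supercommuting subalgebras $\A_{+}$ and $\A_{-}$ indexed by disjoint sets, so by the supercommutation relation stated in \S\ref{sec:majgau} one has $C_{\I'}\,\Theta(C_{\I}) = (-1)^{\abs{\I}\abs{\I'}}\,\Theta(C_{\I})\, C_{\I'}$. Substituting and collecting all phases yields
\[
\big(\Theta(C_{\I}) \circ C_{\I'}\big)^{*} = \zeta^{-\abs{\I}\abs{\I'}}(-1)^{\abs{\I}\abs{\I'}}\, q_{\I}q_{\I'}\, \Theta(C_{\I})\, C_{\I'}\,.
\]
Finally, the key observation is that $\zeta^{2}=-1$ by \eqref{zeta}, so $\zeta^{-\abs{\I}\abs{\I'}}(-1)^{\abs{\I}\abs{\I'}} = \zeta^{-\abs{\I}\abs{\I'}}\zeta^{2\abs{\I}\abs{\I'}} = \zeta^{\abs{\I}\abs{\I'}}$. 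Folding this prefactor back into the twisted product via Definition \ref{def:twistprod} gives the claimed identity.

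There is no real obstacle here; the only thing to be careful about is the bookkeeping of the phases. In particular one must remember that the adjoint reverses the order of the product (producing $C_{\I'}\Theta(C_{\I})$ rather than $\Theta(C_{\I})C_{\I'}$), and that since $\zeta$ is imaginary, complex conjugation sends $\zeta^{n}$ to $\zeta^{-n}$. The crucial cancellation is the cosmetic coincidence $\zeta^{-n}(-1)^{n}=\zeta^{n}$, which is precisely what restores the twisted product on the right-hand side.
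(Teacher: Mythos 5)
Your proof is correct and follows essentially the same route as the paper: unfold the twisted product, apply the adjoint and \eqref{eq:Adjoint C} together with the fact that $\Theta$ is a $*$-automorphism, then use the supercommutation $C_{\I'}\Theta(C_{\I}) = \zeta^{2\abs{\I}\abs{\I'}}\Theta(C_{\I})C_{\I'}$ to restore the twisted product. The phase bookkeeping matches the paper's computation exactly.
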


\begin{proof}
As $\Theta$ is a $*$-automorphism, 
	\[
	(\Theta(C_{\I})\cz C_{\I'})^* 
	= \zeta^{-\abs{\I}\abs{\I'}}
(\Theta(C_{\I}) C_{\I'})^{*}
		= \zeta^{-\abs{\I}\abs{\I'}}
C_{\I'}^{*}\Theta(C^{*}_{\I}) \;.
	\]
Inserting \eqref{eq:Adjoint C} gives
\beqs
	(\Theta(C_{\I})\cz C_{\I'})^*  
	&=&
	\zeta^{-\abs{\I}\abs{\I'}}\,q_{\I}\,q_{\I'}\,C_{\I'}\Theta(C_{\I})\\
	&=& 
	\,q_{\I}\,q_{\I'}\,\Theta(C_{\I})\cz C_{\I'}\,.
\eeqs
In the last equality we use $C_{\I'}\Theta(C_{\I}) = \zeta^{2\abs{\I} \abs{\I'} }\Theta(C_{\I})C_{\I'}$ and the definition of the circle product to give the desired relation \eqref{eq:Basis-Adjoint-0}.  
\end{proof}

Using this, one derives the following characterization of hermiticity.
\begin{cor}
If $A \in \A$ has an expansion \eqref{eq:basisexpansion} with coefficients 
$a_{\I\I'}$, then $A^*$ has coefficients $q_{\I}\,q_{\I'}\,\overline{a_{\I\I'}}$.
The operator $A$ is hermitian if and only if  
$s_{\I}\,s_{\I'}\,a_{\I\I'}$ is real for all $\I,\I'\in \cP_{+}$.
\end{cor}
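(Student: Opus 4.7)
The plan is to take the adjoint of the expansion \eqref{eq:basisexpansion} term by term, identify the coefficients, and then translate the hermiticity condition into the stated reality condition.

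First I would apply complex conjugation (antilinearity of $*$) and Proposition \ref{Prop:Adjoints} to the expansion
\[
A = \sum_{\I,\I' \in \cP_+} a_{\I\I'}\, \Theta(C_\I) \cz C_{\I'}\,,
\]
obtaining
\[
A^* = \sum_{\I,\I' \in \cP_+} \overline{a_{\I\I'}}\, (\Theta(C_\I)\cz C_{\I'})^*
    = \sum_{\I,\I' \in \cP_+} q_\I q_{\I'}\,\overline{a_{\I\I'}}\,\Theta(C_\I) \cz C_{\I'}\,.
\]
Since the $\Theta(C_\I)\cz C_{\I'}$ with $\I,\I'\in \cP_+$ form the basis $\B$, the uniqueness of the expansion immediately yields the claim that the coefficients of $A^*$ are $q_\I q_{\I'} \overline{a_{\I\I'}}$. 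Invoking uniqueness again, $A=A^*$ holds if and only if
\[
a_{\I\I'} = q_\I q_{\I'}\,\overline{a_{\I\I'}} \qquad \text{for all } \I,\I'\in \cP_+.
\]

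The remaining step is to show this is equivalent to $s_\I s_{\I'} a_{\I\I'}$ being real. For this I would use two elementary identities for $\zeta = \pm\sqrt{-1}$. Setting $n_\I := \frac12 k_\I(k_\I -1)$, so that $s_\I = \zeta^{n_\I}$ and $q_\I = (-1)^{n_\I}$, we have $s_\I^2 = q_\I$ and, since $\overline{\zeta} = -\zeta$, $\overline{s_\I} = (-\zeta)^{n_\I} = q_\I s_\I$. Thus
\[
\overline{s_\I s_{\I'}\, a_{\I\I'}} \;=\; q_\I q_{\I'}\, s_\I s_{\I'}\,\overline{a_{\I\I'}}
\;=\; s_\I s_{\I'}\bigl( q_\I q_{\I'}\,\overline{a_{\I\I'}} \bigr),
\]
which equals $s_\I s_{\I'}\, a_{\I\I'}$ exactly when $a_{\I\I'} = q_\I q_{\I'}\,\overline{a_{\I\I'}}$. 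This gives both directions of the equivalence.

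The only mild obstacle is the bookkeeping with the eighth roots of unity $s_\I$; the key point is the identity $\overline{s_\I} = q_\I s_\I$, which converts the multiplicative hermiticity condition with weights $q_\I q_{\I'}$ into a pure reality condition on the rescaled coefficients $s_\I s_{\I'} a_{\I\I'}$. No other ingredients are needed beyond Proposition \ref{Prop:Adjoints} and the uniqueness of the expansion in the basis $\B$.
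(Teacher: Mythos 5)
Your proposal is correct and follows essentially the same route as the paper: apply Proposition \ref{Prop:Adjoints} term by term, invoke uniqueness of the expansion in the basis $\B$, and then translate $a_{\I\I'} = q_{\I}q_{\I'}\overline{a_{\I\I'}}$ into reality of $s_{\I}s_{\I'}a_{\I\I'}$. Your identity $\overline{s_{\I}} = q_{\I}s_{\I}$ is just a repackaging of the paper's observation that $s_{\I}^{2} = s_{\I}^{-2} = q_{\I}$, so the two arguments coincide.
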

\begin{proof}
The first statement follows from
	\beq
		A^{*}
		&=&  \sum_{\I,\I'\in\cP_{+}}
		\overline{a_{\I\I'}} \lrp{\Theta(C_{\I})\circ C_{\I'}}^{*}\nn
		&=&   \sum_{\I,\I'\in\cP_{+}}
		\overline{a_{\I\I'}} \,q_{\I}\,q_{\I'}\,
		\Theta(C_{\I})\cz C_{\I'}\;.
	\eeq
Therefore, $A$ is hermitian if and only if 
$a_{\I\I'} = \overline{a_{\I\I'}}\,q_{\I}\,q_{\I'}$.
Since $s^2_{\I} = s^{-2}_{\I} = q_{\I}$, 
this is equivalent to $s_{\I}\,s_{\I'}\,a_{\I\I'} = \overline{s_{\I}\,s_{\I'}\,a_{\I\I'}}$.
\end{proof}

\subsection{The Tracial State}
Define the functional  $\Tr:\A \to\C$  by
\be\label{eq:deftr}
\Tr(A) = a_{\vn\vn}\,,
\ee
where $a_{\I \I'}$ are the coefficients in \eqref{eq:basisexpansion}.

\begin{prop}\label{Prop:Orthogonality}
Let $\I_{0}, \I_{1}$ and $\I'_{0}, \I'_{1}$ be elements of $\cP_{+}$. Then
\be \label{eq:traces1}
\Tr\Big(\lrp{\Theta(C_{\I_0})\cz C_{\I'_{0}}}^* {\Theta(C_{\I_{1}})\cz C_{\I'_{1}} }\Big)
= \delta_{\I_0 \I_{1}} \delta_{\I'_{0} \I'_{1}}\,.
\ee
Also  
\be \label{eq:traces2}
\Tr\lrp{\lrp{\Theta(C_{\I_0})\cz C_{\I'_{0}} }\lrp{ \Theta(C_{\I_{1}})\cz C_{\I'_{1}}} }
= q_{\I_{0}} \,q_{\I'_{0}}\,\delta_{\I_0 \I_{1}} \delta_{\I'_{0} \I'_{1}}  \,.
\ee
\end{prop}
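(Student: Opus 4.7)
The plan is to prove the second identity \eqref{eq:traces2} by a direct expansion in the basis $\B$, and then deduce the first identity \eqref{eq:traces1} from it via Proposition \ref{Prop:Adjoints}. Since $\Tr$ picks out the coefficient $a_{\vn\vn}$ of $I = \Theta(C_{\vn})\cz C_{\vn}$ in the expansion \eqref{eq:basisexpansion}, it suffices in each case to compute the identity component of the corresponding product.

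First, I would unfold the twisted products using $\Theta(C_{\I})\cz C_{\I'} = \zeta^{\abs{\I}\abs{\I'}}\Theta(C_{\I})C_{\I'}$ and then commute the inner factors $C_{\I'_0}$ and $\Theta(C_{\I_1})$, which lie in the supercommuting subalgebras $\A_+$ and $\A_-$ (cf.\ the sign rule of \S\ref{sec:majgau}). This rewrites the left-hand side of \eqref{eq:traces2} as
\[
\zeta^{\abs{\I_0}\abs{\I'_0}+\abs{\I_1}\abs{\I'_1}}\,(-1)^{\abs{\I_1}\abs{\I'_0}}\,\Theta(C_{\I_0}C_{\I_1})\,C_{\I'_0}C_{\I'_1}.
\]
Next, I would identify the identity components of the two remaining factors. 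Since each Majorana generator is a self-adjoint involution, every $C_{\I}$ is unitary; combined with $C_{\I}^* = q_{\I}C_{\I}$ from \eqref{eq:Adjoint C}, this gives $C_{\I}^2 = q_{\I} I$. For $\I_0 \neq \I_1$ in $\cP_+$, the Clifford relations reduce $C_{\I_0}C_{\I_1}$ to a signed basis element $\pm C_{J}$ with $J \in \cP_{+}$, $J \neq \vn$ (namely the symmetric difference of the index sets), so the identity component vanishes. Hence the identity component of $C_{\I_0}C_{\I_1}$ equals $q_{\I_0}\delta_{\I_0 \I_1}$, and similarly that of $C_{\I'_0}C_{\I'_1}$ equals $q_{\I'_0}\delta_{\I'_0 \I'_1}$. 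Since $\Theta$ is anti-linear but fixes real scalars, applying it leaves $q_{\I_0}\delta_{\I_0\I_1}$ unchanged.

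Because $\{\Theta(C_{J})C_{J'} : J, J' \in \cP_{+}\}$ is a basis of $\A$ obtained by pairing a basis of $\A_{-}$ with one of $\A_{+}$, the identity component of $\Theta(C_{\I_0}C_{\I_1})\,C_{\I'_0}C_{\I'_1}$ factorizes into the product of identity components, giving $q_{\I_0}q_{\I'_0}\delta_{\I_0 \I_1}\delta_{\I'_0 \I'_1}$. Once the deltas force $\I_0 = \I_1$ and $\I'_0 = \I'_1$, the prefactor collapses:
\[
\zeta^{2\abs{\I_0}\abs{\I'_0}}\,(-1)^{\abs{\I_0}\abs{\I'_0}} \;=\; (-1)^{\abs{\I_0}\abs{\I'_0}}(-1)^{\abs{\I_0}\abs{\I'_0}} \;=\; 1,
\]
yielding \eqref{eq:traces2}. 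For \eqref{eq:traces1}, I would substitute the adjoint formula $(\Theta(C_{\I_0})\cz C_{\I'_0})^* = q_{\I_0}q_{\I'_0}\,\Theta(C_{\I_0})\cz C_{\I'_0}$ from Proposition \ref{Prop:Adjoints} and use $q_{\I_0}^2 = q_{\I'_0}^2 = 1$ to cancel the prefactor obtained from \eqref{eq:traces2}.

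The main obstacle is purely bookkeeping: tracking the interplay of three different sign systems — the supercommutation sign $(-1)^{\abs{\cdot}\abs{\cdot}}$ between $\A_+$ and $\A_-$, the twist factor $\zeta^{\abs{\cdot}\abs{\cdot}}$, and the adjoint sign $q_{\I} = (-1)^{\frac12 k_{\I}(k_{\I}-1)}$ — and verifying that, once the Kronecker deltas identify the indices, these signs conspire via $\zeta^2 = -1$ to produce exactly the claimed expressions.
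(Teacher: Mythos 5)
Your proof is correct and follows essentially the same route as the paper's: both unfold the twist, supercommute the middle factors, observe that products of distinct monomials reduce to non-identity basis elements (so the trace vanishes off the diagonal), and pass between the two identities via the adjoint formula of Proposition \ref{Prop:Adjoints}. The only difference is the order — the paper proves \eqref{eq:traces1} directly, where $C_{\I}^{*}C_{\I} = I$ makes the sign bookkeeping trivial, and then deduces \eqref{eq:traces2}, whereas you prove \eqref{eq:traces2} first and therefore carry the $q_{\I}$ factors through $C_{\I}^{2} = q_{\I}I$; this is slightly more work but entirely equivalent.
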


\begin{proof}
The identity \eqref{eq:traces2} is equivalent to \eqref{eq:traces1} as a consequence of \eqref{eq:Basis-Adjoint-0}.  
The left hand side of \eqref{eq:traces1} vanishes
unless $\I_{0} = \I_{1}$ and $\I'_{0} = \I'_{1}$, in which
case \eqref{eq:Early Identity-0} along with $C^{*}_{\I_{1}} C_{\I_{1}} = C^{*}_{\I'_{1}} C_{\I'_{1}} = I$ 
yields 
\beq
&&\hskip -.5in
(\Theta(C_{\I_0})\cz C_{\I'_{0}})^* \,\cdot\, \Theta(C_{\I_{1}})\cz C_{\I'_{1}}\nn
&&= \zeta^{-\abs{\I_{0}}\abs{\I'_{0}}+\abs{\I_{1}}\abs{\I'_{1}}}
C_{\I'_{0}}^{*}\Theta(C^{*}_{\I_0})
\Theta(C_{\I_{1}}) C_{\I'_{1}}
\nn
&&=
C_{\I'_{1}}^*\Theta(C_{\I_{1}}^*C_{\I_{1}})C_{\I'_{1}} = I\,.
\eeq
This proves equation \eqref{eq:traces1}.  
\end{proof}

\begin{prop}[\bf The Normalized Trace]  \label{prop:propertiesTr}
The functional $\Tr$
is a tracial, factorizing, reflection-invariant state.  Namely 
\begin{itemize}
\item[(a)]  It is  normalized, $\Tr(I) = 1$. 
\item[(b)] It is positive definite, $\Tr(A^{*}A)  \geq 0$ for all $A\in\A$, with equality only for $A=0$.
\item[(c)] It is cyclic, 
	\be
		\Tr(AB) = \Tr(BA) \qsp{for all} A,B\in \A\;.
	\ee
\item[(d)]  It satisfies
	\be\label{eq:Trace-Reflection}
		\Tr(\Theta(A)) = \overline{\Tr(A)}\qsp{for all} A\in\A\;.
	\ee
\item[(e)]  It factorizes,
\be\label{factorization}
	\Tr(A_{-}A_{+}) = \Tr(A_{-})\Tr(A_{+})\;,
	\qsp{for} A_{\pm} \in \A_{\pm}\;.
\ee
\end{itemize}
\end{prop}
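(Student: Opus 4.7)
The overarching strategy is to exploit Proposition \ref{Prop:Orthogonality}, which tells us that the basis $\B = \{\Theta(C_\I)\circ C_{\I'} : \I, \I' \in \cP_+\}$ is orthonormal with respect to the sesquilinear form $(A,B) \mapsto \Tr(A^* B)$. Since $\Tr(A)$ is defined as the $(\vn, \vn)$-coefficient of $A$ in this basis, each property reduces to reading off a single coefficient in a specific expansion.

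Properties (a), (b), (d) and (e) then follow quickly. For (a), note that $I = \Theta(C_\vn)\circ C_\vn$ is itself the $(\vn,\vn)$-basis element, so $\Tr(I) = 1$. For (b), substitute \eqref{eq:basisexpansion} into $\Tr(A^* A)$ and apply \eqref{eq:traces1} to obtain $\Tr(A^*A) = \sum_{\I,\I' \in \cP_+} |a_{\I\I'}|^2 \geq 0$, with equality only when all $a_{\I\I'}$ vanish, i.e.\ $A=0$. For (d), use Corollary \ref{cor:refbasis} together with the antilinearity of $\Theta$: since $\Theta$ sends the basis element $\Theta(C_\I)\circ C_{\I'}$ to $\Theta(C_{\I'})\circ C_\I$, it permutes $\B$ (fixing $I$) and conjugates coefficients, so the $(\vn,\vn)$-coefficient of $\Theta(A)$ is exactly $\overline{a_{\vn\vn}}$. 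For (e), expand $A_-$ in the basis $\Theta(\B_+)$ of $\A_-$ and $A_+$ in $\B_+$, convert each product $\Theta(C_\I) C_{\I'}$ into the circle-product basis via the factor $\zeta^{-\abs{\I}\abs{\I'}}$, and read off the $(\vn,\vn)$-coefficient: it is simply $b_\vn\, b'_\vn = \Tr(A_-)\,\Tr(A_+)$.

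The step that requires the most care is cyclicity (c). I first check it on pairs of basis elements $X = \Theta(C_{\I_0})\circ C_{\I_0'}$ and $Y = \Theta(C_{\I_1})\circ C_{\I_1'}$: by \eqref{eq:traces2}, $\Tr(XY) = q_{\I_0}\,q_{\I_0'}\,\delta_{\I_0 \I_1} \delta_{\I_0' \I_1'}$, while swapping gives $q_{\I_1}\,q_{\I_1'}\,\delta_{\I_0 \I_1} \delta_{\I_0' \I_1'}$. These agree because the Kronecker deltas force $q_{\I_0} = q_{\I_1}$ and $q_{\I_0'} = q_{\I_1'}$; bilinearity then extends cyclicity to all of $\A$. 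This is the only part of the proposition where one must check that the phase factors attached to the basis elements behave symmetrically under the swap; the remaining four properties are direct bookkeeping with the orthonormal basis $\B$.
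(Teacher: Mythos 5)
Your proposal is correct and follows essentially the same route as the paper: every item is reduced to reading off the $(\vn,\vn)$-coefficient via the orthogonality relations of Proposition \ref{Prop:Orthogonality}, with (c) resting on the observation that the Kronecker deltas in \eqref{eq:traces2} make the $q$-phases symmetric under the swap (the paper states this as the manifestly symmetric formula $\Tr(AB)=\sum q_{\I}q_{\I'}a_{\I\I'}b_{\I\I'}$). No gaps.
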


\begin{proof}
(a) As $I=\Theta(C_{\vn})\circ C_{\vn}$, one has $\Tr(I)=1$.

(b) From \eqref{eq:traces1} and the expansion \eqref{eq:basisexpansion}, one finds 
	\[
	\Tr(A^{*}A) = \sum_{\I,\I' \in \cP_{+}} |a_{\I\I'}|^2 \geq 0\;.
	\]
Furthermore $\Tr(A^{*}A) =0$ only if all the $a_{\I\I'}=0$.  As the $\Theta(C_{\I})\circ C_{\I'}$ are a basis, the vanishing of $a_{\I\I'}$ ensures that $A=0$.  Hence $\Tr$ is positive definite.

(c) From equation \eqref{eq:traces2}, one obtains
\be\label{eq:Trace-Of-Product}
\Tr(AB) = \Tr(BA) = \sum_{\I,\I' \in \cP_{+}} q_{\I} q_{\I'} a_{\I\I'}b_{\I\I'}\,. 
\ee
Hence the state $\Tr$ is cyclic.

(d) As $\Theta$ is antilinear and the basis elements satisfy \eqref{Theta-Basis}, it follows that  $\Tr$ satisfies \eqref{eq:Trace-Reflection}.

(e) To demonstrate factorization, consider  $A_{-} = \sum_{\I \in \cP_{+}} a_{\I\vn} \,\Theta(C_{\I})$ and $B_{+} = 
\sum_{\mathfrak{K}' \in \cP_{+}} b_{\vn\mathfrak{K}'}\,C_{\mathfrak{K}'}$.  In this case, identity \eqref{eq:Trace-Of-Product} takes the form $\Tr(A_{-}B_{+}) = a_{\vn\vn}b_{\vn\vn}=\Tr(A_{-})\,\Tr(B_{+})$.  So the factorization property follows.
\end{proof}

\begin{cor}\label{cor:Zisreal}
If $H \in \A$ is reflection invariant, $\Theta(H) = H$, then the partition sum
$Z_{H} = \tr(e^{-H})$ is real.
\end{cor}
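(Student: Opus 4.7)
The plan is to combine two ingredients already established in the excerpt: the fact that $\Theta$ is an anti-linear $*$-automorphism with $\Theta^{2}=I$, and property (d) of Proposition \ref{prop:propertiesTr} which states $\Tr(\Theta(A))=\overline{\Tr(A)}$ for all $A\in\A$.

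First I would show that reflection invariance of $H$ lifts to reflection invariance of the exponential $e^{-H}$. Since $\Theta$ is an algebra automorphism, $\Theta(H^{n}) = \Theta(H)^{n} = H^{n}$ for every $n\in\mathbb{N}$. Because the coefficients $(-1)^{n}/n!$ in the power series $e^{-H}=\sum_{n\geq0}(-H)^{n}/n!$ are real, the anti-linearity of $\Theta$ does not introduce any complex conjugation, so
\[
\Theta(e^{-H}) \;=\; \sum_{n\geq0}\frac{(-1)^{n}}{n!}\,\Theta(H^{n}) \;=\; \sum_{n\geq0}\frac{(-1)^{n}}{n!}\,H^{n} \;=\; e^{-H}.
\]
(The sum is finite since $\A$ is finite-dimensional, so there is no convergence issue.)

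Next I would apply property (d) with $A=e^{-H}$ to obtain
\[
Z_{H} \;=\; \Tr(e^{-H}) \;=\; \Tr(\Theta(e^{-H})) \;=\; \overline{\Tr(e^{-H})} \;=\; \overline{Z_{H}},
\]
which gives $Z_{H}\in\R$.

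There is no real obstacle here; the only thing worth being careful about is the interaction of anti-linearity of $\Theta$ with the power series for $e^{-H}$, and this is handled by noting that all Taylor coefficients of the exponential are real. Everything else is a direct invocation of Proposition \ref{prop:propertiesTr}(d).
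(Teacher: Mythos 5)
Your proof is correct and follows the same route as the paper: establish $\Theta(e^{-H})=e^{-H}$ from $\Theta(H)=H$, then apply Proposition \ref{prop:propertiesTr}.d. The only difference is that you spell out the power-series argument (with the observation that the real Taylor coefficients are unaffected by anti-linearity), which the paper leaves implicit; this is a welcome bit of extra care but not a different proof.
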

\begin{proof}
Since $\Theta$ is an automorphism, it follows from $\Theta(H) = H$ that 
$\Theta(e^{-H}) = e^{-H}$.
Using Proposition \ref{prop:propertiesTr}.d, one then finds
\[Z_{H} = \Tr(e^{-H}) = \Tr(\Theta(e^{-H})) = \overline{\Tr(e^{-H})} =  \overline{Z_{H}}\,,\]
so that $Z_{H}$ is real.
\end{proof}

\section{Reflection Positive Functionals}\label{Sect:Reflection-InvariantFunctionals}
In this section, we characterize reflection invariance and 
reflection positivity of linear functionals in terms of their
density matrix.

\subsection{Reflection Invariance}
Let $\omega \colon \A \rightarrow \C$ be a linear functional on $\A$.
From  Proposition \ref{prop:propertiesTr}.b,  we infer that every functional can be written 
	\be\label{General Boltzmann Functional}
	\omega(A) 
	= \Tr(A R)
	\ee
for a unique \emph{density matrix} $R \in \A$.  
If $\omega$ is a state, then $R$ is a positive operator with trace 1.

Consider the sesquilinear form $\lra{\,\cdot\,,\,\cdot\,}_{\rz  }$ on $\A$ given as 
	\be\label{eq:rpform}
		\lra{A,B}_{\rz  }
		:= \omega(\Theta(A)\cz B) 
		= \Tr((\Theta(A)\cz B)R)\,.
	\ee
If we expand $R$ in terms of matrix elements $r_{\I\I'}$ as
	\be\label{eq:R}
	R = \sum_{\I, \I'\in\cP_{+}} r_{\I \I'} \,\Theta(C_{\I})\cz C_{\I'}\,,
	\ee
then  \eqref{eq:Adjoint C} and Proposition \ref{Prop:Orthogonality} ensure that 
	\be\label{eq:MatrixElements-R}
		r_{\I\I'}
		= \lra{C_{\I}^{*}, C_{\I'}^{*}}_{\rz}\;,
		\quad\text{where}\quad
		C_{\I}, C_{\I'}\in\B_{+}\;.
	\ee

\begin{defini}[\bf Reflection Invariance]
The linear functional $\omega$ is reflection invariant on $\A$ if $\omega(\Theta(A)) = \overline{\omega(A)}$ for all $A \in \A$.
\end{defini}

\begin{prop}[\bf Reflection-Invariant Functionals]
\label{prop:Hermiticity}
The following conditions are equivalent:
\begin{itemize}
\item[(a)] The functional $\omega(A) = \Tr(AR)$ is reflection invariant on $\A$.

\item[(b)] The  operator $R$ is reflection invariant, $\Theta(R) = R$.

\item[(c)] The matrix $r_{\I \I'}$ is hermitian, 
	$r_{\I'\I} = \overline{r_{\I\I'}}$.
\item[(d)] The sesquilinear form $\lra{\,\cdot\,,\,\cdot\,}_{\rz}$
is hermitian on $\A_+$,
\[
  \lra{A,B}_{\rz} = \overline{\lra{B,A}}_{\rz}\;,
  \quad\text{for all}\quad
  A,B\in\A_{+}\,.
 \] 
 \end{itemize}
\end{prop}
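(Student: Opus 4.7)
The plan is to establish the cycle of implications (b) $\Leftrightarrow$ (c), (b) $\Leftrightarrow$ (a), and (a) $\Leftrightarrow$ (d), leveraging the trace properties from Proposition \ref{prop:propertiesTr} together with the reflection identities in Corollary \ref{cor:refbasis}.

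\medskip

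\noindent\emph{Step 1: (b) $\Leftrightarrow$ (c).} The density matrix $R$ admits the expansion \eqref{eq:R}, so this equivalence is immediate from Proposition \ref{prop:reflectioninv} applied to $R$.

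\medskip

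\noindent\emph{Step 2: (a) $\Leftrightarrow$ (b).} The idea is to rewrite $\overline{\omega(A)}$ using properties (c), (d) of the trace. Since $\Theta$ is an anti-linear $*$-automorphism, $\Theta(AR)=\Theta(A)\Theta(R)$, and Proposition \ref{prop:propertiesTr}(d) gives
\[
\overline{\omega(A)} \;=\; \overline{\Tr(AR)} \;=\; \Tr(\Theta(A)\Theta(R))\,,
\]
while by definition $\omega(\Theta(A)) = \Tr(\Theta(A)R)$. Reflection invariance of $\omega$ therefore says $\Tr(\Theta(A)(R - \Theta(R))) = 0$ for all $A \in \A$. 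Since $\Theta$ is a bijection on $\A$, this is equivalent to $\Tr(B(R-\Theta(R)))=0$ for all $B \in \A$, and the positive-definiteness of $\Tr$ from Proposition \ref{prop:propertiesTr}(b) forces $R = \Theta(R)$.

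\medskip

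\noindent\emph{Step 3: (a) $\Leftrightarrow$ (d).} Assume (a). For $A, B \in \A_{+}$, apply Corollary \ref{cor:refbasis} to get $\Theta(\Theta(A)\circ B) = \Theta(B)\circ A$. Then
\[
\overline{\lra{B,A}_{\rz}} \;=\; \overline{\omega(\Theta(B)\circ A)} \;=\; \omega(\Theta(\Theta(B)\circ A)) \;=\; \omega(\Theta(A)\circ B) \;=\; \lra{A,B}_{\rz}\,,
\]
which is (d). Conversely, assume (d). Every element of $\A$ is a linear combination of basis elements $\Theta(C_\I)\circ C_{\I'}$ with $\I, \I' \in \cP_{+}$, and both $\omega \circ \Theta$ and complex conjugation of $\omega$ are anti-linear in their argument, so it suffices to verify $\omega(\Theta(X)) = \overline{\omega(X)}$ for each $X = \Theta(C_\I)\circ C_{\I'}$. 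By Corollary \ref{cor:refbasis} again, $\Theta(X) = \Theta(C_{\I'})\circ C_\I$, and hence
\[
\omega(\Theta(X)) \;=\; \lra{C_{\I'}, C_\I}_{\rz} \;=\; \overline{\lra{C_\I, C_{\I'}}_{\rz}} \;=\; \overline{\omega(X)}\,,
\]
using (d) in the middle equality. This proves (a).

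\medskip

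\noindent\emph{Main obstacle.} The only subtle point is the direction (d) $\Rightarrow$ (a): the hypothesis concerns only $\A_{+}$, whereas the conclusion must hold on all of $\A$. This is handled by noting that $\{\Theta(C_\I)\circ C_{\I'}: \I,\I'\in \cP_+\}$ is a basis for all of $\A$, so the hermiticity on $\A_+$ propagates to reflection invariance on $\A$ via the anti-linear extension, with Corollary \ref{cor:refbasis} supplying the crucial identity $\Theta(\Theta(A)\circ B) = \Theta(B)\circ A$.
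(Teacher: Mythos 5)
Your proof is correct and uses essentially the same ingredients as the paper's: reflection invariance and nondegeneracy of $\Tr$, Proposition \ref{prop:reflectioninv}, and the identity $\Theta(\Theta(A)\circ B)=\Theta(B)\circ A$ from Corollary \ref{cor:refbasis}. The only difference is organizational — you prove the biconditionals (a)$\Leftrightarrow$(b) and (a)$\Leftrightarrow$(d) separately, whereas the paper closes the cycle (b)$\Rightarrow$(a)$\Rightarrow$(d)$\Rightarrow$(b) — which is slightly redundant but entirely sound.
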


\begin{proof}
(b)$\Rightarrow$(a):
By Proposition \ref{prop:propertiesTr}.d, the trace is reflection invariant, 
$\Tr(\Theta(X)) = \overline{\Tr(X)}$. If $\Theta(R)=R$, one finds
\[
\overline{\Tr(AR)} = \Tr(\Theta(AR)) = \Tr(\Theta(A)\Theta(R)) = \Tr(\Theta(A)R)\,.
\]
Thus $\overline{\omega(A)} = \omega(\Theta(A))$, and $\omega$ is reflection invariant.

(a)$\Rightarrow$(d):
If $\omega$ is reflection invariant, then 
\[
\overline{\omega(\Theta(B)\circ A)} 
= \omega(\Theta(\Theta(B)\circ A)) = \omega(\Theta(A)\circ B)\,,
\]
where the second equality follows from Proposition \ref{cor:refbasis}. 

(d)$\Rightarrow$(b):
Since $\overline{\lra{B,A}}_{\rz} = \overline{\Tr((\Theta(B)\circ A) R)}$, 
reflection invariance of the trace and Proposition\ \ref{cor:refbasis} yield
\[
 \overline{\lra{B,A}}_{\rz}
= \Tr(\Theta(\Theta(B)\circ A) \Theta(R))
= \Tr((\Theta(A)\circ B) \Theta(R))\,,
\]
for all $A, B \in \A_{+}$.
Since $\lra{A,B}_{\rz} = \Tr((\Theta(A)\circ B) R)$, 
we infer from 
$\lra{A,B}_{\rz} = \overline{\lra{B,A}}_{\rz}$
that
\[
\Tr((\Theta(A)\circ B) R)
=
\Tr((\Theta(A)\circ B) \Theta(R))\,.
\]
Since $\A$ is spanned by elements of the form $\Theta(A)\circ B$
with $A, B \in \A_{+}$, nondegeneracy of the trace implies $\Theta(R) = R$. 

We conclude that (a)$\Leftrightarrow$(b)$\Leftrightarrow$(d).
The equivalence (b)$\Leftrightarrow$(c) was already 
proven in Proposition~\ref{prop:reflectioninv}.
\end{proof}

A linear functional $\omega \colon \A \rightarrow \C$ is called \emph{even} if $\omega(\A^{\rm odd}) = \{0\}$.
Note that if $R$ is even, then also $\omega(A) = \Tr(RA)$ is even.

\begin{prop}\label{evenform}
If $\omega$ is even, then
$\A^{\rm even}_{+}$ and $\A^{\rm odd}_{+}$ are orthogonal,
\[\lra{\A^{\rm even}_{+}, \A^{\rm odd}_{+}}_{\rz} = \{0\}\,.\] 
\end{prop}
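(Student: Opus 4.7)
The plan is to observe that the twisted product $\circ$ preserves the $\mathbb{Z}_2$-grading (since it differs from the ordinary product only by a scalar $\zeta^{\pm 1}$), and then to exploit evenness of $\omega$ to conclude that the form must vanish on pairs of opposite parity.

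First I would take homogeneous elements $A\in\A_{+}^{\abs{A}}$ and $B\in\A_{+}^{\abs{B}}$ with $\abs{A}\neq\abs{B}$, and compute $\Theta(A)\circ B$. Since $\Theta$ preserves the Clifford grading (it sends a monomial of length $k$ to a monomial of length $k$ in the $c_{\vartheta(i)}$), one has $\Theta(A)\in\A_{-}^{\abs{A}}$. Using Definition \ref{def:twistprod} with the decomposition $\Theta(A)=\Theta(A)\cdot I$ and $B=I\cdot B$, the twist exponent $\abs{\Theta(A)_{-}}\abs{B_{+}} - \abs{\Theta(A)_{+}}\abs{B_{-}} = \abs{A}\abs{B} - 0$ is a well-defined integer, so
\[
\Theta(A)\circ B = \zeta^{\abs{A}\abs{B}}\,\Theta(A)B\;.
\]
In particular $\Theta(A)\circ B$ is a nonzero scalar multiple of $\Theta(A)B$, and the latter is homogeneous of degree $\abs{A}+\abs{B}\pmod 2=1$, i.e.\ $\Theta(A)\circ B\in\A^{\odd}$.

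Since $\omega$ is even, $\omega(\A^{\odd})=\{0\}$, so
\[
\lra{A,B}_{R,\Theta} = \omega(\Theta(A)\circ B)=0\;.
\]
Extending by (anti-)linearity in each argument over arbitrary $A\in\A_{+}^{\even}$ and $B\in\A_{+}^{\odd}$ yields the claim. There is no real obstacle: the only point that needs a brief check is that the twisted product preserves the overall parity, which is immediate from its definition as a scalar multiple of the ordinary product together with the fact that $\Theta$ commutes with the grading.
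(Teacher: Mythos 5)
Your proposal is correct and follows essentially the same route as the paper: the paper's proof likewise observes that for $A \in \A_{+}^{\even}$ and $B \in \A_{+}^{\odd}$ the element $\Theta(A)\circ B$ lies in $\A^{\odd}$, so evenness of $\omega$ forces $\lra{A,B}_{\rz}=\omega(\Theta(A)\circ B)=0$. You have merely spelled out the parity bookkeeping for the twisted product in more detail than the paper does.
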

\begin{proof}
For $A \in \A_{+}^{\rm even}$ and $B \in \A^{\rm odd}_{+}$, 
one has $\lra{A, B}_{\rz} = \omega(\Theta(A)\circ B).$ This 
equals zero, as
$\Theta(A)\circ B \in \A^{\rm odd}$. 
\end{proof}

\subsection{Reflection Positivity} \label{sect:Reflection-Positivity}
In this section, we characterize reflection positive functionals in terms 
of their density matrix.

\begin{defini}  \label{Defn:RP}
The linear functional $\omega$ in \eqref{General Boltzmann Functional} is reflection positive on $\A_{+}$  with respect to  $\Theta$,  if the form $\lra{\,\cdot\,,\,\cdot\,}_{\rz  }$ in \eqref{eq:rpform} is positive, semidefinite on $\A_{+}$.
\end{defini}

The reflection positive Hilbert space $\cH$ is defined as the completion with respect to $\lra{\,\cdot\,,\,\cdot\,}_{\rz}$ of 
the quotient of $\A_{+}$ by the null space. If $\omega$ is even, $\cH$ will be a \emph{super Hilbert space}
in the following sense.  

\begin{defini}[\cite{SuperSolutions}, \S 4.4]
A super Hilbert space is a $\Z_2$-graded vector space $\cH = \cH^{\rm even} \oplus \cH^{\rm odd}$
with a form $(\, \cdot \,,\,\cdot\,) \colon \cH \times \cH \rightarrow \C$ that is 
\begin{itemize}
\item[-] linear in the second argument,
\item[-] graded symmetric, 
\[(w, v) = (-1)^{\abs{v}\abs{w}}\overline{(v,w)}
\] 
for $v,w \in \cH$ homogeneous
\item[-] even, $(v, w) = 0$ for $v \in \cH^{\rm even}$ and $w \in \cH^{\rm odd}$
\item[-] positive, in the sense that
\begin{eqnarray}\label{eq:superpositive2}
0 &< & (v,v)\phantom{z} \qsp{for} 0 \neq v \in \cH^{\rm even}\\
0 &<& \zeta (v,v) \qsp{for} 0 \neq v \in \cH^{\rm odd}\,. \nonumber
\end{eqnarray}
\end{itemize}
Furthermore, the total space $\cH$ is required to be complete for the 
scalar product defined by  
$\lra{v,w} := (v,w)$ for $v,w \in \cH^{\rm even}$,
$\lra{v,w} := \zeta (v,w)$ for $v,w \in \cH^{\rm odd}$,
 and $\lra{v,w} := 0$ for $v\in \cH^{\rm even}$, $w \in \cH^{\rm odd}$.
\end{defini}

\begin{prop}
If $\omega$ is even, reflection invariant, and reflection positive, then the completion $\cH$ with respect to
$\lra{\,\cdot\,, \,\cdot\, }_{\rz}$
of $\A_{+}$ modulo the null space 
is a super Hilbert space with 
the form
\[(A, B) := \omega(\Theta(A)B).\]
\end{prop}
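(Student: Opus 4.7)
The plan is to unpack the axioms of a super Hilbert space one by one and check each against what we know about $\omega$, using the key fact that the ordinary product $\Theta(A)B$ and the twisted product $\Theta(A)\cz B$ differ only by a factor of $\zeta$ when both $A,B \in \A_+^{\odd}$, and agree otherwise. Consequently, for $A,B \in \A_+$ homogeneous one has $\lra{A,B}_{\rz} = (A,B)$ if $\abs{A}\abs{B} = 0$, and $\lra{A,B}_{\rz} = \zeta (A,B)$ if $\abs{A} = \abs{B} = 1$. Thus $\lra{\,\cdot\,,\,\cdot\,}_{\rz}$ restricted to $\A_+$ is precisely the scalar product associated to the sesquilinear form $(\,\cdot\,,\,\cdot\,)$ in the definition of a super Hilbert space, modulo the parity bookkeeping.

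First I would verify that the graded form $(A,B) = \omega(\Theta(A)B)$ on $\A_+$ descends to the quotient of $\A_+$ by the null space $\caln$ of $\lra{\,\cdot\,,\,\cdot\,}_{\rz}$, and that this quotient inherits a $\Z_2$-grading. Descent follows from Cauchy--Schwarz applied to $\lra{\,\cdot\,,\,\cdot\,}_{\rz}$ (which forces $\caln = \{A \in \A_+ : \lra{A,B}_{\rz} = 0 \text{ for all } B \in \A_+\}$), together with the identity $(A,B) = \zeta^{-|A||B|}\lra{A,B}_{\rz}$ on homogeneous elements. The grading is preserved because Proposition \ref{evenform} gives $\lra{\A_+^{\even}, \A_+^{\odd}}_{\rz} = 0$, so $\caln = (\caln \cap \A_+^{\even}) \oplus (\caln \cap \A_+^{\odd})$, and consequently $\cH = \cH^{\even} \oplus \cH^{\odd}$ with the expected components.

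Next I would check the three algebraic axioms. Linearity of $(\,\cdot\,,\,\cdot\,)$ in the second argument is immediate since $\omega$ is linear. Evenness — that $(v,w) = 0$ whenever $v \in \cH^{\even}$ and $w \in \cH^{\odd}$ — follows because $\omega$ is even and $\Theta(A)B$ is odd when $A$ and $B$ have opposite parity. The graded symmetry requires the most care. For $A,B \in \A_+^{\even}$ we have $(A,B) = \lra{A,B}_{\rz}$, and reflection invariance of $\omega$ (Proposition \ref{prop:Hermiticity}.d) gives $\lra{A,B}_{\rz} = \overline{\lra{B,A}_{\rz}} = \overline{(B,A)}$. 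For $A, B \in \A_+^{\odd}$, the same invariance yields $\zeta(A,B) = \lra{A,B}_{\rz} = \overline{\lra{B,A}_{\rz}} = \overline{\zeta(B,A)} = -\zeta \overline{(B,A)}$, so $(A,B) = -\overline{(B,A)} = (-1)^{|A||B|}\overline{(B,A)}$, which is precisely the graded symmetry since $\bar\zeta = -\zeta$. The mixed case is trivial because both sides vanish by evenness.

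For positivity, an even element $v = [A]$ with $A \in \A_+^{\even}$ satisfies $(v,v) = \lra{A,A}_{\rz} > 0$ by reflection positivity and the fact that $A \notin \caln$, whereas an odd element $v = [A]$ satisfies $\zeta(v,v) = \lra{A,A}_{\rz} > 0$, matching \eqref{eq:superpositive2} exactly. Finally, completeness of $\cH$ for the scalar product $\lra{\,\cdot\,,\,\cdot\,}$ is automatic: by construction $\cH$ is the completion of $\A_+/\caln$ for $\lra{\,\cdot\,,\,\cdot\,}_{\rz}$, and this form is identified on each graded component with $\lra{\,\cdot\,,\,\cdot\,}$ as defined above. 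The only step needing genuine vigilance is the graded symmetry in the odd-odd sector, where the two factors of $\zeta$ must be combined with the conjugation rule $\overline{\zeta} = -\zeta$ to recover the expected sign $(-1)^{|A||B|}$; this is the place where the choice $\zeta^2 = -1$ is essential.
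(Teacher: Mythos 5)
Your proof is correct and follows essentially the same route as the paper: both arguments reduce everything to the hermiticity and positive semidefiniteness of $\lra{\,\cdot\,,\,\cdot\,}_{\rz}$ (via Proposition \ref{prop:Hermiticity} and reflection positivity), use Proposition \ref{evenform} for the orthogonality $\cH^{\even}\perp\cH^{\odd}$, and transfer these properties to $(\,\cdot\,,\,\cdot\,)$ through the relation $\lra{A,B}_{\rz}=\zeta^{\abs{A}\abs{B}}(A,B)$. Your explicit sign check $\overline{\zeta}=-\zeta$ in the odd--odd sector and your verification that the form descends to the quotient are details the paper leaves implicit, but the underlying argument is the same.
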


\begin{proof}
The form $\lra{A,B}_{\rz} = \omega(\Theta(A)\circ B)$ is 
hermitian by Proposition \ref{prop:Hermiticity}, and positive semidefinite 
by reflection positivity of $\omega$.
As $\omega$ is even, Proposition \ref{evenform} yields $\cH^{\rm even} \perp \cH^{\rm odd}$.
Since $\lra{A,B}_{\rz} = (A,B)$ for $A, B \in \A_{+}^{\rm even}$ and 
$\lra{A,B}_{\rz} = \zeta (A,B)$ for $A, B \in \A_{+}^{\rm odd}$, 
graded symmetry of $(\,\cdot\,,\,\cdot\,)$ follows from hermiticity
of $\lra{\,\cdot\,,\,\cdot\,}_{\rz}$, and positivity of $(\,\cdot\,,\,\cdot\,)$ 
(equation \eqref{eq:superpositive2}) follows from the fact that 
$\lra{\,\cdot\,,\,\cdot\,}_{\rz}$ is positive semidefinite.
\end{proof}

As $\cH^{\rm even} \perp \cH^{\rm odd}$, the value of $A \circ B$ for $A \in \A^{\rm even}$ and 
$B \in \A^{\rm odd}$ is quite immaterial for even functionals; 
the relevant property of the twisted product 
is that $A \circ B = AB$ for $A, B \in \A^{\rm even}_{+}$, and 
$A \circ B = \zeta AB$ for $A, B \in \A^{\rm odd}_{+}$. 
Our choice for Definition \ref{def:twistprod} was merely motivated by the wish 
to treat $\A_{+}$ and $\A_{-}$ on equal footing.

\begin{prop}
The functional $\omega$ in \eqref{General Boltzmann Functional} is reflection positive on $\A_{+}$, if and only if  it is reflection positive  on $\A_{-}$.  In fact
	\be\label{eq:RP-Agree}
	\lra{\Theta(A), \Theta(B)}_{\rz}
	=\lra{B,A}_{\rz}\;,
	\qsp{for} A,B\in\A_{+}\;.
	\ee
\end{prop}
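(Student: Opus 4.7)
The plan is to first establish the identity \eqref{eq:RP-Agree} directly, and then deduce the equivalence of reflection positivity on $\A_{+}$ and on $\A_{-}$ as an immediate consequence.

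For the identity, I will unfold the definition of the form in \eqref{eq:rpform}. This gives
\[
\lra{\Theta(A), \Theta(B)}_{\rz} = \omega\bigl(\Theta(\Theta(A)) \circ \Theta(B)\bigr) = \omega\bigl(A \circ \Theta(B)\bigr),
\]
using $\Theta^{2}=\mathrm{Id}$. Now the key point is that $A \in \A_{+}$ while $\Theta(B) \in \A_{-}$, so Proposition \ref{prop:supercomm} applies and yields $A \circ \Theta(B) = \Theta(B) \circ A$. Substituting back recovers exactly $\omega(\Theta(B)\circ A) = \lra{B,A}_{\rz}$, which is \eqref{eq:RP-Agree}.

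To conclude the equivalence of reflection positivity on $\A_{+}$ and on $\A_{-}$, I will use that $\Theta \colon \A_{+} \to \A_{-}$ is a bijection which, since it is built from the point map $\vartheta$ on single generators, preserves the $\Z_{2}$-grading (each monomial $c_{i_{1}}\cdots c_{i_{k}}$ is sent to a monomial of the same length). Thus every $A' \in \A_{+}^{\rm even}$, respectively $\A_{+}^{\rm odd}$, is of the form $\Theta(A)$ with $A \in \A_{-}^{\rm even}$, respectively $\A_{-}^{\rm odd}$, and \eqref{eq:RP-Agree} (applied with $A=B$) gives $\lra{\Theta(A), \Theta(A)}_{\rz} = \lra{A,A}_{\rz}$. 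Hence the positivity conditions on even and odd elements from the two sides of the reflection plane match one-to-one.

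There is no real obstacle here; the only thing to watch is that the twisted product is non-associative and grading-dependent, so I should avoid manipulating $\Theta(A)\circ B$ as if it were the ordinary product. The calculation as written above only uses $\Theta^{2}=\mathrm{Id}$ and Proposition \ref{prop:supercomm}, both of which have already been established, so no further book-keeping in terms of the basis $\B$ or the factors $\zeta^{|\I||\I'|}$ will be necessary.
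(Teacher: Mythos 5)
Your proof is correct and follows essentially the same route as the paper: both establish \eqref{eq:RP-Agree} by unfolding the form, applying $\Theta^{2}=\mathrm{Id}$, and then commuting $A$ past $\Theta(B)$ via Proposition \ref{prop:supercomm}. The extra paragraph spelling out how the identity transfers positivity between $\A_{+}$ and $\A_{-}$ is a harmless elaboration of what the paper leaves implicit.
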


\begin{proof}
For $A,B \in \A_{+}$, we infer from Proposition \ref{prop:supercomm} and Corollary \ref{cor:refbasis} that
\[
\omega(\Theta(\Theta(A))\circ \Theta(B))
=
\omega(A\circ \Theta(B))
=
\omega(\Theta(B)\circ A)\,.
\]
The first term equals $\lra{\Theta(A), \Theta(B)}_{\rz}$
and the last one $\lra{B,A}_{\rz}$.  
\end{proof}

\begin{thm}[\bf Basic Reflection Positivity]\label{Theorem:RPFunctionals}
The linear functional $\omega$ in \eqref{General Boltzmann Functional} is reflection positive on $\A_{+}$, if and only if the matrix $r_{\I \I'}$ defined in \eqref{eq:R} is positive semidefinite.
\end{thm}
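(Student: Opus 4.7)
The plan is to reduce the question to a direct calculation in the basis $\{C_{\I}:\I\in\cP_+\}$ of $\A_+$, using the orthogonality relations for the tracial state established in Proposition~\ref{Prop:Orthogonality}. Concretely, I would expand a general element $A\in\A_+$ as $A=\sum_{\I\in\cP_+}a_{\I}C_{\I}$, and likewise expand the density matrix $R$ in the basis $\B$ of $\A$ as in \eqref{eq:R}, namely $R=\sum_{\J,\J'\in\cP_+}r_{\J\J'}\,\Theta(C_{\J})\cz C_{\J'}$. The goal is to rewrite $\lra{A,A}_{\rz}=\omega(\Theta(A)\cz A)=\Tr((\Theta(A)\cz A)R)$ purely in terms of the coefficients $a_\I$ and the matrix $r_{\I\I'}$.

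Since $\Theta$ is antilinear, $\Theta(A)=\sum_{\I}\overline{a_{\I}}\,\Theta(C_{\I})$. Substituting this into the formula for $\lra{A,A}_{\rz}$ and expanding $R$ gives a quadruple sum whose coefficients are traces of products of four basis elements. Invoking the orthogonality identity \eqref{eq:traces2} collapses this to a double sum: each surviving term comes from matching $\J=\I$ and $\J'=\I'$, and carries a factor $q_{\I}q_{\I'}$. Thus
\be
\lra{A,A}_{\rz}
=\sum_{\I,\I'\in\cP_+}q_{\I}q_{\I'}\,r_{\I\I'}\,\overline{a_{\I}}\,a_{\I'}.
\ee
(One may equivalently derive this by observing that \eqref{eq:MatrixElements-R} combined with $C_{\I}^{*}=q_{\I}C_{\I}$ gives $\omega(\Theta(C_{\I})\cz C_{\I'})=q_{\I}q_{\I'}r_{\I\I'}$, which is the one-line version of the argument.)

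At this point the equivalence is just a change of variables. Setting $b_{\I}:=q_{\I}a_{\I}$ (recall $q_{\I}=\pm 1$ is real) turns the formula into
\be
\lra{A,A}_{\rz}=\sum_{\I,\I'\in\cP_+}\overline{b_{\I}}\,b_{\I'}\,r_{\I\I'}.
\ee
The map $(a_{\I})\mapsto(b_{\I})$ is a bijection on coefficient vectors, so requiring $\lra{A,A}_{\rz}\geq0$ for every $A\in\A_+$ is equivalent to requiring the quadratic form $\sum\overline{b_{\I}}b_{\I'}r_{\I\I'}$ to be nonnegative for every choice of $(b_{\I})\in\C^{\cP_+}$, which is the definition of $r_{\I\I'}$ being positive semidefinite.

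I do not anticipate any real obstacle: everything has been set up so that the two nontrivial inputs, namely the trace orthogonality \eqref{eq:traces2} and the expansion \eqref{eq:basisexpansion} in the $\Theta$-adapted basis, do all the work. The only small bookkeeping point is the presence of the signs $q_{\I}q_{\I'}$, which is harmless because they square to one and are absorbed by rescaling the coefficients.
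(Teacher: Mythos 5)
Your proposal is correct and follows essentially the same route as the paper: expand $A$ in the basis $\{C_{\I}\}$, use the trace orthogonality relation \eqref{eq:traces2} to collapse the quadruple sum to $\sum_{\I,\I'}\overline{a_{\I}}\,q_{\I}\,a_{\I'}\,q_{\I'}\,r_{\I\I'}$, and absorb the signs $q_{\I}=\pm1$ by rescaling the coefficient vector. The paper's proof is the same computation, so there is nothing to add.
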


\begin{proof}
Expand $A,B \in \A_{+}$ as $A = \sum_{\I\in \cP_{+}}a_{\I}\,C_{\I}$
and $B = \sum_{\I\in \cP_{+}}b_{\I}\,C_{\I}$.  Using \eqref{eq:Adjoint C} and \eqref{eq:MatrixElements-R} we obtain
	\beqs
	\lra{A,B}_{\rz  } 
	&=& \Tr((\Theta(A)\cz B)R) \\
	&=& 
	\sum_{{\I_{0},\I'_{0}\in \cP_{+}}\atop{\I_{1},\I'_{1}} \in \cP_{+}}
	\overline{a_{\I_{0}}}\,b_{\I'_{0}},r_{\I_{1}\I'_{1}}
	\Tr\Big(\lrp{\Theta(C_{\I_{0}})\cz C_{\I'_{0}}}
		(\Theta(C_{\I_{1}})\cz C_{\I'_{1}})\Big)\\
	&=& 
	\sum_{\I,\I'}
	\overline{a_{\I}}\,q_{\I} \,b_{\I'}q_{\I'}\, r_{\I \I'}\,.
\eeqs
It follows that $\lra{A,A}_{\rz} \geq 0$ for all $A \in \A_{+}$ if and only if 
the matrix $r_{\I\I'}$ is is positive semidefinite. 
\end{proof}

\section{Sufficient Conditions for Reflection Positivity}\label{Sect:Sufficient}
In statistical physics, Gibbs states are defined in terms 
of a Hamiltonian $H$, which in turn is given by a matrix $J$
of coupling constants.
In this section, we provide a sufficient condition on $J$ for the 
associated Gibbs state to be reflection positive.
This will be further refined to a necessary and sufficient condition 
in Section \ref{Sect:Which}.

\subsection{Density Matrices and Hamiltonians}
For a (not necessarily hermitian) Hamiltonian $H \in \A$, consider the unnormalized density matrix  $R=e^{-H}$.  We now focus on the Hamiltonian $H$ rather than $R$, and define the \emph{Boltzmann functional}
$\omega_{H} \colon \A \rightarrow \C$ by
	\be\label{eq:Boltzmann}
		\omega_{H}(A)= \Tr(A\,e^{-H})\,.
	\ee
If the partition function $Z_H:= \Tr(e^{-H})$ is nonzero, then define the \emph{Gibbs functional} 
$\rho_{H} \colon \A \rightarrow \C$  as the normalization of $\omega_{H}$,
	\be\label{eq:defequilibrium}
		\rho_{H}(A) := \frac{\omega_{H}(A)}{Z_H} =\frac{\Tr(Ae^{-H})}{\Tr(e^{-H})}\,.	
	\ee
	
Using equation \ref{eq:rpform}, the (unnormalized) Boltzmann functional $\omega_{H}$ yields the ses\-qui\-li\-near form
	\be\label{eq:Hsesq0}
		\lra{A,B}^{0}_{\hz} := \Tr((\Theta(A) \cz B)\, e^{-H})
	\ee
on $\A_{+}$.
Similarly, the (normalized) Gibbs functional $\rho_{H}$ yields the form
	\be \label{eq:Hsesq}
		\lra{A,B}_{\hz} := \frac{\Tr((\Theta(A) \cz B) \,e^{-H})}{\Tr(e^{-H})}\,.
	\ee

\begin{Remark}\rm
The functional $\omega_{H}$ in \eqref{eq:Boltzmann} is reflection positive on $\A_{+}$ if $\lra{A,B}^{0}_{\hz} $  in \eqref{eq:Hsesq0} is positive semidefinite on $\A_{+}$, 
and the  
functional $\rho_{H}$ defined in \eqref{eq:defequilibrium} is reflection positive on $\A_{+}$ if the   form  $\lra{A,B}_{\hz}$ in \eqref{eq:Hsesq} is positive semidefinite on $\A_{+}$.  
Note that $\omega_{H}$ and $\rho_{H}$ are even if $H$ is globally gauge invariant.
\end{Remark}
	
In \S\ref{sec:RPomega} we show reflection positivity of the Boltzmann functional $\omega_{H}$  
for a large class of reflection symmetric, globally gauge invariant Hamiltonians $H$, namely all those for which the matrix of coupling constants is positive semidefinite.  For such Hamiltonians $Z_{H}\geq1$.

We use this result to prove reflection positivity
for an even wider class of Hamiltonians,
namely those for which the matrix of coupling constants 
\emph{across the reflection plane}
is positive semidefinite. 

Neither result will require $H$ to be hermitian, but if this happens to be the case,
$Z_H$ is automatically nonzero, and
$\rho_{H}$ is the Gibbs state with respect to the Hamiltonian $H$.

\subsection{Hamiltonians}\label{Sect:Hamiltonians}
The class of Hamiltonians for which these reflection positivity results hold, is defined in terms of the matrix of coupling constants,
\be \label{eq:DefnCC}
	 J = (J_{\I \I'})\;,
	 \qsp{where} {\I,\I' \in \cP_{+}} \;.
\ee
By definition, these are the coefficients $J_{\I\I'}\in \C$ of the Hamiltonian $H$ in its expansion with respect to the basis $\B$,
	\be\label{eq:DefnH}
		H = - \sum_{{\phantom{,}\I, \I'\in \cP_{+}}  
		} J_{\I \I'} 
		\, \Theta(C_{\I})\cz   C_{\I'}\;. 
	\ee
The following proposition expresses some relevant properties of $H$ in terms of the matrix $J$. Recall that $H$ is called reflection invariant if $\Theta(H) = H$, and globally gauge invariant 
if $\alpha(H)=H$, where $\alpha$ is the global gauge automorphism of \eqref{eq:GlobalGauge}. 

\begin{prop}\label{prop:propertiesH}
The Hamiltonian $H$ in \eqref{eq:DefnH} is 
\begin{itemize}
	\item[\bf RI:]
	reflection-invariant 
	if and only if 
	$J$ is hermitian, $J_{\I'\I} = \overline{J_{\I\I'}}$.
	\item[\bf GI:]
	globally gauge-invariant 
	if and only if $J_{\I\I'} = 0$ for $\abs{\I} \neq \abs{\I'}$.
	\item[\bf H:]
	hermitian 
	if and only if 
	$\zeta^{\frac{1}{2}k_{\I}(k_{\I}-1)+\frac{1}{2}k_{\I'}(k_{\I'}-1)}J_{\I\I'}$
	is real.
\end{itemize}
\end{prop}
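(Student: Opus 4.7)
The plan is to derive each of the three equivalences by reading off the expansion coefficients of $H$ in the basis $\B$ and then invoking the corresponding results already proved in \S\ref{sec:bases}--\S\ref{Sect:TwistedProduct}. Writing $H = \sum_{\I,\I'\in\cP_+} h_{\I\I'}\,\Theta(C_{\I})\circ C_{\I'}$ with $h_{\I\I'} = -J_{\I\I'}$ identifies $J$ (up to a sign) with the coordinates of $H$ in the basis $\B$, so each condition on $H$ translates mechanically into a condition on $J$.

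For part \textbf{RI}, I would apply Proposition \ref{prop:reflectioninv} directly: $H$ is reflection invariant iff the matrix $h_{\I\I'}=-J_{\I\I'}$ is hermitian, which is equivalent to $J_{\I'\I}=\overline{J_{\I\I'}}$ since the minus sign is irrelevant to hermiticity. For part \textbf{H}, I would use the Corollary immediately after Proposition \ref{Prop:Adjoints}: $H$ is hermitian iff $s_{\I}s_{\I'}h_{\I\I'}$ is real for all $\I,\I'\in\cP_+$, which (again ignoring the harmless minus sign) becomes reality of $s_{\I}s_{\I'}J_{\I\I'}=\zeta^{\frac12 k_{\I}(k_{\I}-1)+\frac12 k_{\I'}(k_{\I'}-1)}J_{\I\I'}$ by definition \eqref{eq:q} of $s_{\I}$.

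Part \textbf{GI} requires a short separate computation because it is not stated explicitly in the earlier propositions. The idea is to note that the gauge automorphism $\alpha$, acting by $c_j\mapsto -c_j$, sends $C_{\I}\mapsto(-1)^{k_{\I}}C_{\I}=(-1)^{|\I|}C_{\I}$ and commutes with $\Theta$ in the sense that $\alpha(\Theta(C_{\I}))=(-1)^{|\I|}\Theta(C_{\I})$. Because $\zeta$ is a scalar, $\alpha$ acts on the twisted product simply as $\alpha(\Theta(C_{\I})\circ C_{\I'})=(-1)^{|\I|+|\I'|}\Theta(C_{\I})\circ C_{\I'}$, so the basis elements in $\B$ are eigenvectors of $\alpha$ with eigenvalue $(-1)^{|\I|+|\I'|}$. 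Since $\B$ is a basis, $\alpha(H)=H$ is equivalent to $J_{\I\I'}=0$ whenever $(-1)^{|\I|+|\I'|}=-1$, i.e.\ whenever $|\I|\neq|\I'|$.

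The only mild obstacle is verifying the formula for $\alpha$ on a twisted product of basis elements; the overall $\zeta$-phase in the definition of $\Theta(C_{\I})\circ C_{\I'}$ is unaffected by $\alpha$, so the $\alpha$-eigenvalue factorizes cleanly. All three equivalences then follow by linear independence of $\B$ and by the fact that the overall sign $h_{\I\I'}=-J_{\I\I'}$ has no effect on hermiticity of a matrix, reality of a scalar, or vanishing of a coefficient.
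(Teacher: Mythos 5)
Your proposal is correct and follows essentially the same route as the paper: \textbf{RI} is exactly Proposition \ref{prop:reflectioninv}, \textbf{H} is read off from Proposition \ref{Prop:Adjoints} and its corollary, and \textbf{GI} is the observation that each basis element $\Theta(C_{\I})\circ C_{\I'}$ is an $\alpha$-eigenvector with eigenvalue $(-1)^{\abs{\I}+\abs{\I'}}$ together with linear independence of $\B$. Your explicit check that $\alpha$ passes through the $\zeta$-phase, and your remark that the overall sign $h_{\I\I'}=-J_{\I\I'}$ is harmless, merely spell out details the paper leaves implicit.
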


\begin{proof}
The first statement is Proposition \ref{prop:reflectioninv}.
For the second statement, note that
the global gauge transformation $\alpha$ leaves the 
basis element
$\Theta(C_{\I}) \cz C_{\I'}$ fixed if
$\abs{\I} = \abs{\I'}$, and otherwise multiplies it by
$-1$. Linear independence of the basis $\B$ ensures that 
each term in the expansion of $H$ must be gauge invariant.
The third statement is a consequence of Proposition \ref{Prop:Adjoints}.
\end{proof}

\begin{prop}\label{Prop:HermitianForm}
If $H$ is reflection invariant, then the sesquilinear form $\lra{A,B}^{0}_{\hz}$ on 
$\A_{+}$ given by  \eqref{eq:Hsesq0} is hermitian, 
and $Z_{H}=\Tr(e^{-H})$ is real:
 	 \[
		 \Theta(H)=H
		 \quad\Rightarrow\quad
		 \lra{A,B}^{0}_{\hz} = \overline{\lra{B,A}^{0}}_{\hskip -5pt \hz}\;,
		 \qsp{and}
		 \overline{Z_{H}} = Z_{H}\;.
	 \]
If both $H$ is reflection invariant and  $Z_{H} \neq 0$, then the form $\lra{A,B}_{\hz}$ is defined in \eqref{eq:Hsesq} and is  hermitian.
\end{prop}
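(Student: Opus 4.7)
My plan is to reduce everything to results already established: Proposition~\ref{prop:Hermiticity} characterizes hermiticity of the sesquilinear form associated to a functional $\omega(A) = \Tr(AR)$ via the condition $\Theta(R) = R$, and Corollary~\ref{cor:Zisreal} shows that $\Theta(H) = H$ forces $Z_H$ to be real. So the entire proposition reduces to verifying that the Boltzmann density $R = e^{-H}$ is reflection invariant whenever $H$ is, and then handling the normalization.

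First I would check that $\Theta(H) = H$ implies $\Theta(e^{-H}) = e^{-H}$. This uses two facts about $\Theta$: it is an algebra automorphism, so $\Theta(H^n) = \Theta(H)^n = H^n$ for every $n$; and it is anti-linear, but the coefficients $(-1)^n/n!$ appearing in the exponential power series are real, so anti-linearity does not disturb them. Passing the series through $\Theta$ term by term gives $\Theta(e^{-H}) = e^{-H}$. (In the finite-dimensional algebra $\A$ there are no convergence issues.)

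With this in hand, I apply Proposition~\ref{prop:Hermiticity} to $\omega_H(A) = \Tr(AR)$ with $R = e^{-H}$: condition (b) holds, so condition (d) holds, which says precisely that $\lra{A,B}^0_{\hz} = \omega_H(\Theta(A)\cz B)$ is hermitian on $\A_{+}$. The reality of $Z_H$ is then exactly the content of Corollary~\ref{cor:Zisreal}. Finally, when $Z_H \neq 0$, the normalized form $\lra{A,B}_{\hz} = Z_H^{-1}\lra{A,B}^0_{\hz}$ is a real scalar multiple of a hermitian form, and is therefore itself hermitian.

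I do not expect any real obstacle here; the only step that deserves attention is the anti-linear passage of $\Theta$ through the exponential series, and that is resolved by noting that the series coefficients are real. The rest is bookkeeping, citing the previously proved Proposition~\ref{prop:Hermiticity} and Corollary~\ref{cor:Zisreal}.
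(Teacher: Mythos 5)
Your proposal is correct and follows essentially the same route as the paper: verify $\Theta(e^{-H}) = e^{-H}$ from $\Theta(H)=H$ (the paper writes $\Theta(R) = e^{-\Theta(H)}$, implicitly using the same real-coefficients observation you make explicit), then invoke Proposition~\ref{prop:Hermiticity} (b)$\Rightarrow$(d) for hermiticity of the form and (b)$\Rightarrow$(a) for the reality of $Z_H$, and finally note that dividing by the real nonzero $Z_H$ preserves hermiticity. The only cosmetic difference is that you cite Corollary~\ref{cor:Zisreal} for $\overline{Z_H}=Z_H$ where the paper re-derives it in place; the underlying argument is identical.
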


\begin{proof}
The operator $R$ of \S \ref{Sect:Reflection-InvariantFunctionals} equals $e^{-H}$ here.  So  $\Theta(R) = e^{-\Theta(H)}$, and if $H$ is reflection invariant, then 
so is $R$.
By the implication 
(b)$\Rightarrow$(d) of 
Proposition \ref{prop:Hermiticity}, 
the form $\lra{\,\cdot\,,\,\cdot\,}_{\rz  }$
is hermitian.  Also (b)$\Rightarrow$(a) ensures that $Z_{H} = \Tr(e^{-H}) = \Tr(\Theta(e^{-H}))= \overline{Z_{H}}$ is real. Hence if $Z_{H}\neq0$, the form $\lra{A,B}_{\hz}$ is also hermitian.
\end{proof}

\subsection{Reflection Positivity: Preliminary Results}
We now prove reflection positivity of the Boltzmann functional $\omega_{H}$
for Hamiltonians $H$ that arise from a positive semidefinite matrix $J$ of coupling constants.

\begin{thm}[\bf Reflection Positivity of $\omega_{H}$, Part I]
\label{Thm:ReflectionPositivityH-I}
Let $H \in \A$ be reflection symmetric and globally gauge invariant. 
If the matrix $J$ of coupling constants for $H$, defined in equation~\eqref{eq:DefnH},
is positive semidefinite, then $\omega_{H}$ is reflection positive on~$\A_{+}$.
\end{thm}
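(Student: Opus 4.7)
The plan is to use the square-root strategy characteristic of Gaussian-type reflection positivity. Since $J$ is positive semidefinite on $\cP_+\times\cP_+$, I factorize it as $J_{\I\I'} = \sum_\alpha \overline{K_{\alpha\I}}\,K_{\alpha\I'}$ and set $B_\alpha := \sum_{\I\in\cP_+} K_{\alpha\I}\,C_\I \in \A_+$, obtaining the representation
\[
  -H \;=\; \sum_\alpha \Theta(B_\alpha)\circ B_\alpha\,.
\]
Because $H$ is globally gauge invariant, Proposition~\ref{prop:propertiesH} ensures that $J_{\I\I'}$ vanishes unless $\abs{\I}=\abs{\I'}$, so $J$ is block diagonal with respect to the $\Z_2$-grading; I choose the factorization separately on each block, so that every $B_\alpha$ is homogeneous.

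The heart of the argument will be the multiplicative identity
\[
  \bigl(\Theta(A)\circ A\bigr)\,\bigl(\Theta(B)\circ B\bigr) \;=\; \Theta(AB)\circ (AB)
\]
for homogeneous $A,B \in \A_+$. To verify it, I unwind $\Theta(A)\circ A = \zeta^{\abs{A}^2}\,\Theta(A)A$, move $A \in \A_+$ past $\Theta(B) \in \A_-$ using the super-commutation sign $(-1)^{\abs{A}\abs{B}}$, and check that the total power of $\zeta$ produced is $\zeta^{\abs{A}^2+\abs{B}^2+2\abs{A}\abs{B}} = \zeta^{(\abs{A}+\abs{B})^2} = \zeta^{\abs{AB}^2}$, the last equality using $\zeta^4=1$. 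This twist bookkeeping is where I expect the main effort to lie---it is precisely what motivates the definition of the twisted product. An immediate induction then yields $\prod_{i=1}^n \bigl(\Theta(A_i)\circ A_i\bigr) = \Theta(A_1\cdots A_n)\circ (A_1\cdots A_n)$ for homogeneous $A_1,\ldots,A_n\in \A_+$.

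Expanding $e^{-H} = \sum_n (-H)^n/n!$, which converges in the finite-dimensional algebra $\A$, and applying the identity term by term yields
\[
  e^{-H} \;=\; \sum_{n=0}^\infty \frac{1}{n!}\sum_{\alpha_1,\ldots,\alpha_n} \Theta(B_{\vec\alpha})\circ B_{\vec\alpha}\,,
\]
where $B_{\vec\alpha} := B_{\alpha_1}\cdots B_{\alpha_n}\in \A_+$ is again homogeneous. For homogeneous $A\in \A_+$, one further application of the multiplicative identity gives $(\Theta(A)\circ A)(\Theta(B_{\vec\alpha})\circ B_{\vec\alpha}) = \Theta(AB_{\vec\alpha})\circ (AB_{\vec\alpha})$, so that
\[
  \omega_H\bigl(\Theta(A)\circ A\bigr) \;=\; \sum_{n=0}^\infty \frac{1}{n!}\sum_{\vec\alpha}\Tr\bigl(\Theta(AB_{\vec\alpha})\circ (AB_{\vec\alpha})\bigr)\,.
\]

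To finish, I invoke Proposition~\ref{Prop:Orthogonality} and the definition of $\Tr$ to show that for any $C = \sum_\J c_\J\,C_\J \in \A_+$ one has $\Tr\bigl(\Theta(C)\circ C\bigr) = \abs{c_\vn}^2 \geq 0$, so every summand above is non-negative and $\omega_H(\Theta(A)\circ A) \geq 0$ for homogeneous $A\in \A_+$. For general $A\in \A_+$, I split $A = A^{\rm even}+A^{\rm odd}$; the cross terms $\omega_H(\Theta(A^{\rm even})\circ A^{\rm odd})$ and $\omega_H(\Theta(A^{\rm odd})\circ A^{\rm even})$ vanish because $\omega_H$ is even (indeed $e^{-H}\in \A^{\rm even}$ by gauge invariance of $H$), via Proposition~\ref{evenform}, and the two remaining diagonal contributions are both non-negative, establishing reflection positivity on all of $\A_+$.
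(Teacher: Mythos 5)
Your proof is correct, and it takes a genuinely different route from the paper's. You factor $J=K^{*}K$ at the level of the Hamiltonian, writing $-H=\sum_{\alpha}\Theta(B_{\alpha})\circ B_{\alpha}$ with homogeneous $B_{\alpha}\in\A_{+}$ (the block-diagonality of $J$ coming from gauge invariance is exactly what you need here, and you use it), and then push this through $e^{-H}$ via the multiplicative identity $(\Theta(A)\circ A)(\Theta(B)\circ B)=\Theta(AB)\circ(AB)$; the sign bookkeeping $\zeta^{\abs{A}^{2}+\abs{B}^{2}+2\abs{A}\abs{B}}=\zeta^{(\abs{A}+\abs{B})^{2}}=\zeta^{\abs{AB}^{2}}$ checks out, and the final step $\Tr(\Theta(C)\circ C)=\abs{c_{\vn}}^{2}$ together with Proposition~\ref{evenform} for the even/odd cross terms closes the argument. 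The paper instead expands everything in the basis $\B$ and proves a factorization of traces (Lemma~\ref{Lemma:Key-2}), $\Tr\big(\prod_{j}\Theta(C_{\I_{j}})\cz C_{\I'_{j}}\big)=\overline{\Tr(C_{\I_{0}}\cdots C_{\I_{k}})}\,\Tr(C_{\I'_{0}}\cdots C_{\I'_{k}})$, reducing positivity to positivity of the tensor powers $J^{\otimes k}$ paired against vectors $\chi^{k}$. The combinatorial core is the same in both arguments --- the phase cancellation $\zeta^{(\sum_{j}\abs{\I_{j}})^{2}}=1$ in the paper's Lemma~\ref{Lemma:Key-2} is your identity applied along the basis --- but your packaging is multiplicative and closer in spirit to the classical Osterwalder--Schrader/FILS mechanism, making the positivity of each term manifest rather than deferred to a quadratic form. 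What the paper's version buys is the explicit expansion \eqref{eq:niceexpansioninlem}, which is reused verbatim for the corollary $\lra{A,A}^{0}_{\hz}\geq\abs{a_{\vn\vn}}^{2}$ and for the derivative computations in the necessity direction of Theorem~\ref{Thm:ReflectionPositivityH}; your expansion yields the same $k=0$ lower bound, but the basis-free form is less directly adapted to those later calculations.
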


We give some preliminary results before proving the theorem.

\begin{lem}\label{Lemma:Key-1} 
Let $\I_{1}, \ldots, \I_{k}, \I'_{1}, \ldots \I'_{k} \in \cS_{+}$ and $\abs{\I_{j}} = \abs{\I_{j}'}$ for $j \geq 1$.  Then for all $\I_{0},\I'_{0} \in \cS_{+}$,
\be\label{eq:productoftraces}
		\overline{\Tr(C_{\I_{0}}\cdots C_{\I_{k}})}
		\Tr(C_{\I'_{0}}\cdots C_{\I'_{k}})
\ee
is nonzero only if $\abs{\I_{0}} = \abs{\I'_{0}}$.
\end{lem}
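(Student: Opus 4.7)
The plan is to argue by parities. The key observation is that the trace of an odd element of $\A$ vanishes: since $\Tr(A) = a_{\vn\vn}$ picks out the coefficient of $I = \Theta(C_{\vn}) \circ C_{\vn}$, which is a basis element of degree zero, an element $A$ that lies in $\A^{\rm odd}$ has $a_{\vn\vn} = 0$ by linear independence of the basis $\B$. So I would first record this as a preliminary: $\Tr(A) = 0$ whenever $A \in \A^{\rm odd}$.

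Next I would compute the degrees. The product $C_{\I_{0}} \cdots C_{\I_{k}}$ is a monomial of the generators $c_i$, and its total degree modulo $2$ equals $\sum_{j=0}^{k} \abs{\I_{j}}$. Hence this product lies in $\A^{\rm even}$ iff $\sum_{j=0}^{k} \abs{\I_{j}} \equiv 0 \pmod 2$; otherwise it is odd and has trace zero. The same applies to the primed product. Therefore, for the quantity in \eqref{eq:productoftraces} to be nonzero, we need simultaneously
\[
\sum_{j=0}^{k} \abs{\I_{j}} \equiv 0 \pmod 2 \qquad\text{and}\qquad \sum_{j=0}^{k}\abs{\I'_{j}} \equiv 0 \pmod 2\,.
\]
Subtracting these two congruences and using the hypothesis $\abs{\I_{j}} = \abs{\I'_{j}}$ for $j \geq 1$, the only surviving term is $\abs{\I_{0}} - \abs{\I'_{0}} \equiv 0 \pmod 2$. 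Since $\abs{\I_{0}}, \abs{\I'_{0}} \in \{0,1\}$, this forces $\abs{\I_{0}} = \abs{\I'_{0}}$, which is the desired conclusion.

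There is no real obstacle here, since everything rests on the single fact that $\Tr$ annihilates $\A^{\rm odd}$. The only point of care is to make sure this fact is invoked cleanly at the start, so that no computation with circle products is needed: once it is in hand, the lemma is just an observation about the $\Z_{2}$-grading together with the hypothesis that the paired entries $\I_{j}, \I'_{j}$ have matching parity for $j \geq 1$.
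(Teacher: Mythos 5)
Your proof is correct and follows essentially the same strategy as the paper: show that nonvanishing of each trace forces $\sum_{j=0}^{k}\abs{\I_{j}}$ (resp.\ $\sum_{j=0}^{k}\abs{\I'_{j}}$) to be even, then subtract the two congruences using $\abs{\I_{j}}=\abs{\I'_{j}}$ for $j\geq 1$. The only (harmless) difference is in how the parity claim is justified: the paper counts occurrences of each generator $c_{i}$ and notes that each must appear an even number of times for the trace to survive, whereas you invoke the evenness of $\Tr$ directly via the $\Z_{2}$-grading of the basis $\B$ — both are valid, and yours is marginally more economical.
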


\begin{proof}
For every lattice point $i \in \Lambda$, let $k_{i}(\I)$ 
be $1$ if $i$ occurs in $\I = (i_1, \ldots, i_{s})$, and $0$ otherwise.
Then $s = k_{\I} =\sum_{i\in \Lambda} k_{i}(\I)$.
If $\Tr(C_{\I_{0}}\cdots C_{\I_{k}})$ is nonzero, 
then 
$\sum_{j=0}^{k} k_{i}(\I_{j})$ is even, as
every $i \in \Lambda$
must occur an even number of times.
Therefore, 
\[ 
\sum_{i\in \Lambda}\sum_{j=0}^{k} k_{i}(\I_{j})
= 
\sum_{j=0}^{k} \left(\sum_{i\in \Lambda}k_{i}(\I_{j})\right)
=
\sum_{j=0}^{k} k_{\I_{j}}
\]
is even. Since $\abs{\I} = k_{\I} \text{ mod }2$, the sum
$\sum_{j=0}^{k}\abs{\I_{j}}$ is even.

Similarly, one finds that $\sum_{j=0}^{k}\abs{\I'_{j}}$ is even
 if $\Tr(C_{\I'_{0}}\cdots C_{\I'_{k}})$ is nonzero.
Since $\abs{\I_{j}} = \abs{\I'_{j}}$ for $j\geq 1$ by assumption, 
we infer that $\abs{\I_{0}} = \abs{\I'_{0}}$
if \ref{eq:productoftraces} is nonzero.
\end{proof}

\begin{lem}\label{Lemma:Key-2}
Under the hypotheses of Lemma \ref{Lemma:Key-1},
\beq\label{eq:needslabel}
		& &\hskip -.5in \Tr
		\big((\Theta(C_{\I_{0}})\cz C_{\I'_{0}})
		\cdots (\Theta(C_{\I_{k}})\cz C_{\I'_{k}})\big)\nn
		& &\quad= 
		\overline{\Tr(C_{\I_{0}}\cdots C_{\I_{k}})}
		\Tr(
		C_{\I'_{0}}\cdots C_{\I'_{k}})\,.
\eeq
\end{lem}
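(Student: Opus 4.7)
The strategy is to unfold each twisted factor via Definition \ref{def:twistprod}, supercommute the $\A_+$-pieces past the $\A_-$-pieces so that the product separates into an $\A_-$-block times an $\A_+$-block, and then apply factorization (Proposition \ref{prop:propertiesTr}(e)) together with reflection invariance (Proposition \ref{prop:propertiesTr}(d)) of the trace. The real content is verifying that the $\zeta$-phase picked up in the first step cancels the supercommutation sign from the second, whenever both sides of the identity are not trivially zero.

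Concretely, since $\Theta(C_{\I_j}) \in \A_-$ has degree $\abs{\I_j}$ and $C_{\I'_j} \in \A_+$ has degree $\abs{\I'_j}$, Definition \ref{def:twistprod} gives $\Theta(C_{\I_j}) \cz C_{\I'_j} = \zeta^{\abs{\I_j}\abs{\I'_j}}\Theta(C_{\I_j})C_{\I'_j}$. Multiplying over $j = 0, \ldots, k$, the full product of twisted factors equals $\zeta^{T} \prod_{j}\Theta(C_{\I_j})C_{\I'_j}$ with $T := \sum_j \abs{\I_j}\abs{\I'_j}$. Next, I would move each $C_{\I'_j}$ to the right past every $\Theta(C_{\I_m})$ with $m > j$. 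As $\A_+$ and $\A_-$ are supported on disjoint index sets, each such swap contributes $(-1)^{\abs{\I'_j}\abs{\I_m}}$, giving an overall sign $\varepsilon := (-1)^{\sum_{j<m} \abs{\I'_j}\abs{\I_m}}$ and a rearranged product $\Theta(C_{\I_0}\cdots C_{\I_k})\cdot (C_{\I'_0}\cdots C_{\I'_k})$, where I have used that $\Theta$ is an algebra automorphism. Applying (e) and then (d) yields $\zeta^T \varepsilon\,\overline{\Tr(C_{\I_0}\cdots C_{\I_k})}\,\Tr(C_{\I'_0}\cdots C_{\I'_k})$.

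What remains is the sign chase: I must show $\zeta^T\varepsilon = 1$ whenever the right-hand side is nonzero. First observe that if $\abs{\I_0} \neq \abs{\I'_0}$, then Lemma \ref{Lemma:Key-1} already forces the right-hand side to vanish, while the rearranged left-hand product lies in $\A^{\mathrm{odd}}$ (the $j\geq 1$ factors are even by hypothesis), so its trace also vanishes. Hence I may assume $\abs{\I_0} = \abs{\I'_0}$; setting $d_j := \abs{\I_j} = \abs{\I'_j} \in \{0,1\}$ and $S := \sum_j d_j$, one has $T = S$ and $\varepsilon = (-1)^{\binom{S}{2}}$. If $S$ is odd, the products $C_{\I_0}\cdots C_{\I_k}$ and $C_{\I'_0}\cdots C_{\I'_k}$ are both odd and their traces vanish, so the identity is trivial. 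If $S = 2m$ is even, a direct check gives $\zeta^{2m}(-1)^{m(2m-1)} = (-1)^m(-1)^m = 1$, finishing the proof.

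The main obstacle is exactly this phase bookkeeping: the cancellation depends crucially on the parity hypothesis $\abs{\I_j} = \abs{\I'_j}$ for $j\geq 1$, and on separately carving out the two degenerate regimes—$\abs{\I_0}\neq \abs{\I'_0}$ (handled by Lemma \ref{Lemma:Key-1}) and $S$ odd (handled by tracial vanishing on $\A^{\mathrm{odd}}$)—so that the surviving case $S$ even is the one where the $\zeta$-phase and supercommutation sign cleanly annihilate each other.
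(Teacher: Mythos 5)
Your proposal is correct and follows essentially the same route as the paper: unfold the twist to get $\zeta^{\sum_j\abs{\I_j}\abs{\I'_j}}$, supercommute to separate the $\A_-$- and $\A_+$-blocks, apply factorization and reflection invariance of the trace, and kill the residual phase using the parity information from Lemma \ref{Lemma:Key-1}. Your explicit case split on $S$ odd/even is just an unpacking of the paper's observation that the combined exponent equals $\big(\sum_j\abs{\I_j}\big)^2\equiv 0 \pmod 4$ whenever the traces are nonzero, so there is no substantive difference.
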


\begin{proof}
Use the definition of $\cz$ to write 
\beq\label{eq:start}
	&&\hskip-.5in\Tr((\Theta(C_{\I_{0}})\cz C_{\I'_{0}})
		\cdots (\Theta(C_{\I_{k}})\cz C_{\I'_{k}}))\nn
		&&= \zeta^{\sum_{j=0}^{k} 
		 \abs{\I_{j}} \abs{\I'_{j}} }
		\Tr\big(\Theta(C_{\I_{0}})C_{\I'_{0}}
		\cdots \Theta(C_{\I_{k}})C_{\I'_{k}}\big)\,,
\eeq
and bring the terms of the form $\Theta(C_{\I_{j}})$ to the left.
In doing so, one has to exchange $\Theta(C_{\I_{j}})$ with $C_{\I'_{j'}}$
for each $j'< j$, yielding a factor 
\[(-1)^{\sum_{j'= 0}^{j}\abs{\I'_{j'}} \abs{\I_{j}}} = 
\zeta^{2\sum_{j'= 0}^{j}\abs{\I'_{j'}} \abs{\I_{j}}}\,.\]
The right hand side in equation \ref{eq:start} can thus be written
	\be\label{eq:littlemonster}
	\zeta^{\sum_{j=0}^{k} 
		 \abs{\I_{j}}\abs{\I'_{j}} + 2 \sum_{0\leq j'<j \leq k} \abs{\I_{j}}\abs{\I'_{j'}}}
		\Tr(\Theta(C_{\I_{0}}\cdots C_{\I_{k}})
		C_{\I'_{0}}\cdots C_{\I'_{k}})\,,
	\ee
where we used that $\Theta(C_{\I_{0}}) \cdots \Theta(C_{\I_{k}})$ equals
$\Theta(C_{\I_{0}} \cdots C_{\I_{k}})$. 

Using the factorization of the trace,
$\Tr(X_{-}X_{+}) = \Tr(X_{-})\Tr(X_{+})$
for $X_{\pm} \in \A_{\pm}$, and reflection invariance,
$\Tr(\Theta(X)) = \overline{\Tr(X)}$, given in Proposition~\ref{prop:propertiesTr}.d and e,
\eqref{eq:littlemonster} becomes
	\be\label{eq:littlemonster2}
	\zeta^{\sum_{j=0}^{k} 
		 \abs{\I_{j}}\abs{\I'_{j}} + 2 \sum_{0\leq j'<j \leq k} 
		 \abs{\I_{j}}\abs{\I'_{j'}}}
		\overline{\Tr(C_{\I_{0}}\cdots C_{\I_{k}})}
		\Tr(
		C_{\I'_{0}}\cdots C_{\I'_{k}})\,.	
\ee

Using Lemma \ref{Lemma:Key-1}, we rewrite the phase in \eqref{eq:littlemonster}
\be\label{eq:phaseis1}
	\zeta^{\sum_{j=0}^{k} 
	\abs{\I_{j}}\abs{\I'_{j}} + 2 \sum_{0\leq j'<j \leq k} 	
	\abs{\I_{j}}\abs{\I'_{j'}}}
	 =
	\zeta^{\big(\sum_{j=0}^{k}\abs{\I_{j}} \big)^{2}} = 1\,.
\ee
The last equality holds as $\sum_{j=0}^{k}\abs{\I_{j}}$ must be even, so its square is $0 \text{ mod }4$, and the phase vanishes.
Combining \eqref{eq:phaseis1} with \eqref{eq:littlemonster2}, the proof is complete.
\end{proof}

\begin{proof}[Proof of Theorem \ref{Thm:ReflectionPositivityH-I}]
Expand $A, B \in \A_{+}$ as 
\[
	A = \sum_{\I \in \cP_{+}} a_{\I} \, C_{\I}
\qsp{and} 
	B = \sum_{\I \in \cP_{+}} b_{\I}\,C_{\I}\;,
	\qsp{with} C_{\I}\in\B_{+}\,.
\]
We claim that the sesquilinear form 
$\lra{A,B}^{0}_{\hz} = \Tr(\Theta(A)\cz B\,e^{-H})$
can then be written in the form 
\beq\label{eq:niceexpansioninlem}
	\lra{A,B}^{0}_{\hz}
	&=&	\sum_{k=0}^{\infty}\frac{1}{k!}
	\sum_{\I_{0}, \ldots \I_{k}}
	\sum_{\I'_{0}, \ldots, \I'_{k}}
	\overline{a_{\I_{0}}}b_{\I'_{0}}
	J_{\I_{1},\I'_{1}} \cdots J_{\I_{k},\I'_{k}} 
	\label{eq:numerator2}\nn
	& & \qquad \times \;
\overline{\Tr(C_{\I_{0}}\cdots C_{\I_{k}})}\,
		\Tr(C_{\I'_{0}} \cdots C_{\I'_{k}})\,.
\eeq
From the power series 
for $e^{-H}$ with $H$ given by \eqref{eq:DefnH}, 
one obtains the expansion 
\beq\label{eq:exponentialexpansion}
\lra{A,B}^{0}_{\hz} &=&
	\sum_{k=0}^{\infty}\frac{1}{k!}
	\sum_{\I_{0}, \ldots \I_{k} \in \cP_{+}}
	\;
	\sum_{\I'_{0}, \ldots, \I'_{k}\in \cP_{+}}
	\overline{a_{\I_{0}}}\,b_{\I'_{0}}
	J_{\I_{1}\I'_{1}} \cdots J_{\I_{k}\I'_{k}} \label{eq:numerator}\nn
	& & \times \;
	\Tr((\Theta(C_{\I_{0}})\cz C_{\I'_{0}}) \,\cdots\, 
		(\Theta(C_{\I_{k}})\cz C_{\I'_{k}}))\,.
\eeq
The terms with $\I_{0}$ and $\I'_{0}$ arise from $A$ and $B$,
while the remaining $\I_{j}$, $\I'_{j}$ come from powers of 
$H$.
By Proposition \ref{prop:propertiesH},  global gauge invariance of $H$
ensures that $\abs{\I_{j}} = \abs{\I'_{j}}$ for all $j\geq 1$.
From Lemma \ref{Lemma:Key-2},
we conclude that 
\beq\label{eq:productagain}
		&&\hskip-.5in
		\Tr((\Theta(C_{\I_{0}})\cz C_{\I'_{0}}) \,\cdots\, 
		(\Theta(C_{\I_{k}})\cz C_{\I'_{k}}))
		\nn
		& &\quad =
		\overline{\Tr(C_{\I_{0}}\cdots C_{\I_{k}})}
		\Tr(
		C_{\I'_{0}}\cdots C_{\I'_{k}})\,.
\eeq
So by Lemma \ref{Lemma:Key-1}, $\abs{\I_{0}} = \abs{\I'_{0}}$
unless \eqref{eq:productagain} vanishes.
Using this and the expansion \ref{eq:exponentialexpansion}, 
one obtains \ref{eq:niceexpansioninlem}.

Let $\chi^{k}, \psi^{k}$ denote vectors with components 
\[
\chi^{k}_{\I_{1}, \ldots, \I_{k}}
=
\sum_{\I_{0} \in \cP_{+}} a_{\I_{0}} \Tr(C_{\I_{0}}\cdots C_{\I_{k}})\;,
\]
and
\[
\psi^{k}_{\I_{1}, \ldots, \I_{k}}
=
\sum_{\I_{0} \in \cP_{+}} b_{\I_{0}} \Tr(C_{\I_{0}}\cdots C_{\I_{k}})\;,
\] 
labelled
by $\cP_{+}^{k}$. Let 
$J^{\otimes k}_{\I_{1}, \ldots, \I_{k} ; \I'_{1},\ldots, \I'_{k}} := 
J_{\I_{1}\I'_{1}} \cdots J_{\I_{k} \I'_{k}}$ be the $k^{\rm th}$ 
tensor power of the matrix $J_{\I \I'}$.
Since $J_{\I \I'}$ is a positive semidefinite matrix, 
$J^{\otimes k}$ is also positive semidefinite.
Then 
	\be\label{eq:ValueRPExpectation}
		\lra{A,B}^{0}_{\hz}
		= \sum_{k=0}^{\infty}\frac{1}{k!}\,
		\langle{\chi^{k}, J^{\otimes {k}} \psi^{k} } \rangle\;,
	\ee
with the inner product 
\[
	\langle \chi^{k}, \psi^{k} \rangle := 
	\sum_{\I_{1}, \ldots \I_{k} \in \cP_{+}} 
	\overline{\chi^{k}_{\I_{1} \ldots \I_{k}}}\,
	\psi^{k}_{\I_{1} \ldots \I_{k}}\,.	
\]	

Setting $B=A$ one has $\psi^{k}=\chi^{k}$. 
Since each term in the sum \eqref{eq:ValueRPExpectation} is non-negative,
the theorem follows.
\end{proof}

\begin{cor}
If $A \in \A_{+}$ has the expansion \eqref{eq:basisexpansion}, then
under the conditions of Theorem \ref{Thm:ReflectionPositivityH-I}, one has
	\be\label{eq:est}
		\lra{A,A}^{0}_{\hz}
		\geq \abs{a_{\varnothing\varnothing}}^{2}\;.
	\ee  
\end{cor}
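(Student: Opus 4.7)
The plan is to extract the bound directly from the series representation \eqref{eq:ValueRPExpectation} derived in the proof of Theorem~\ref{Thm:ReflectionPositivityH-I}, by isolating the $k=0$ term and using positivity of the remaining terms.

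First I would observe that for $A \in \A_{+}$, the expansion \eqref{eq:basisexpansion} reduces to $A = \sum_{\I' \in \cP_{+}} a_{\varnothing\, \I'}\, C_{\I'}$, since $\Theta(C_{\I}) \in \A_{-}$ is non-scalar whenever $\I \neq \varnothing$. In particular, $a_{\varnothing\varnothing}$ is the coefficient of $I$ in $A$. Then, specializing the proof of Theorem~\ref{Thm:ReflectionPositivityH-I} with $B=A$, one has
\[
\lra{A,A}^{0}_{\hz}
= \sum_{k=0}^{\infty}\frac{1}{k!}\, \langle \chi^{k}, J^{\otimes k}\chi^{k}\rangle\;,
\]
with $\chi^{k}_{\I_{1},\ldots,\I_{k}} = \sum_{\I_{0} \in \cP_{+}} a_{\varnothing\,\I_{0}}\,\Tr(C_{\I_{0}}\cdots C_{\I_{k}})$.

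Next I would evaluate the $k=0$ term. Since $\Tr(C_{\I_0}) = \delta_{\I_{0}\varnothing}$ by the definition \eqref{eq:deftr} of $\Tr$ (or equivalently by Proposition~\ref{Prop:Orthogonality}), the scalar $\chi^{0}$ equals $a_{\varnothing\varnothing}$, so the $k=0$ term contributes exactly $|a_{\varnothing\varnothing}|^{2}$.

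For $k \geq 1$, the tensor power $J^{\otimes k}$ is positive semidefinite because $J$ is (by the hypothesis of Theorem~\ref{Thm:ReflectionPositivityH-I}), so every term $\frac{1}{k!}\langle \chi^{k}, J^{\otimes k} \chi^{k}\rangle$ is non-negative. Dropping the non-negative tail yields $\lra{A,A}^{0}_{\hz} \geq |a_{\varnothing\varnothing}|^{2}$, which is the claim.

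This proof is essentially a bookkeeping exercise built directly on top of Theorem~\ref{Thm:ReflectionPositivityH-I}; the only subtle point is recognizing that the $k=0$ contribution in \eqref{eq:ValueRPExpectation} is precisely $|a_{\varnothing\varnothing}|^{2}$, which requires tracking the convention that $J^{\otimes 0} = 1$ and that the empty trace reduces to $\Tr(I) = 1$. I do not anticipate a real obstacle.
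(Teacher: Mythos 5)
Your proposal is correct and follows exactly the paper's argument: the right-hand side of \eqref{eq:est} is the $k=0$ term of the expansion \eqref{eq:niceexpansioninlem}/\eqref{eq:ValueRPExpectation}, and all remaining terms are nonnegative because $J^{\otimes k}$ is positive semidefinite. Your additional bookkeeping (identifying $\chi^{0}=a_{\varnothing\varnothing}$ via $\Tr(C_{\I_{0}})=\delta_{\I_{0}\varnothing}$) is a correct elaboration of what the paper leaves implicit.
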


\begin{proof}
The right side of \ref{eq:est} is the $k=0$ term in \ref{eq:niceexpansioninlem}.
This yields a lower bound, as all the other terms are nonnegative by the proof of 
Theorem \ref{Thm:ReflectionPositivityH-I}. 
\end{proof}

\begin{prop}\label{Proposition:JPosSemiDef}
Suppose that the matrix $J$ of coupling constants for $H$, defined in \eqref{eq:DefnH}, 
is positive semidefinite. Then
$Z_{\beta H}$ is a non-decreasing function of $0 \leq \beta$ with $Z_{0} = 1$. 
In particular, $1 \leq Z_{\beta H}$ for all $0 \leq \beta$.
\end{prop}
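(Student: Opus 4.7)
The plan is to reuse the power-series expansion already established in the proof of Theorem \ref{Thm:ReflectionPositivityH-I} (equation \eqref{eq:ValueRPExpectation}), applied to the very special choice $A = B = I$ and to the rescaled Hamiltonian $\beta H$.

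First I would observe that writing $I = \Theta(C_{\vn}) \circ C_{\vn}$ gives $a_{\I_0} = \delta_{\I_0, \vn}$ in the expansion of $A = I$, and similarly for $B = I$. With the identification $\Theta(I) \circ I = I$, the basic form evaluates to
\[
\langle I, I \rangle^{0}_{\beta H, \Theta} = \Tr\bigl(e^{-\beta H}\bigr) = Z_{\beta H}.
\]
Replacing $J$ by $\beta J$ throughout the expansion \eqref{eq:niceexpansioninlem} and \eqref{eq:ValueRPExpectation}, I would then obtain
\[
Z_{\beta H} = \sum_{k=0}^{\infty} \frac{\beta^{k}}{k!}\, \bigl\langle \chi^{k}, J^{\otimes k} \chi^{k} \bigr\rangle,
\]
where, with $a_{\I_0} = \delta_{\I_0,\vn}$, the vectors $\chi^{k}$ of the Theorem \ref{Thm:ReflectionPositivityH-I} proof reduce to $\chi^{k}_{\I_{1}, \ldots, \I_{k}} = \Tr(C_{\I_{1}} \cdots C_{\I_{k}})$, and $\chi^{0} = \Tr(I) = 1$.

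Positive semidefiniteness of $J$ implies the same for every tensor power $J^{\otimes k}$, so $\langle \chi^{k}, J^{\otimes k} \chi^{k} \rangle \geq 0$ term by term. Hence $Z_{\beta H}$ is given by a power series in $\beta$ with nonnegative coefficients, which is therefore non-decreasing on $[0, \infty)$. The $k=0$ term contributes $\langle \chi^{0}, \chi^{0}\rangle = 1$, giving $Z_{0} = 1$ and, for $\beta \geq 0$, the bound $Z_{\beta H} \geq 1$.

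The proposition sits inside the framework of Theorem \ref{Thm:ReflectionPositivityH-I}, whose proof silently used reflection symmetry and global gauge invariance to arrive at \eqref{eq:niceexpansioninlem}; I would therefore either invoke those hypotheses (which are in force throughout \S\ref{Sect:Sufficient}) or simply note that they are inherited from Theorem \ref{Thm:ReflectionPositivityH-I}. I do not anticipate any serious obstacle: the only minor subtlety is tracking the substitution $J \mapsto \beta J$ correctly through the expansion, which follows because $H$ enters \eqref{eq:exponentialexpansion} through a single power of $J$ per factor of $H$. Given the machinery already developed, the proof reduces to reading off the $A = B = I$ specialization of \eqref{eq:ValueRPExpectation}.
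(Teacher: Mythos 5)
Your proof is correct, but it takes a different route from the paper's. The paper differentiates: it computes $\tfrac{d}{d\beta}Z_{\beta H}=-\Tr(HR)=\sum_{\I,\I'}q_{\I}J_{\I\I'}r_{\I\I'}q_{\I'}$ via the orthogonality relations of Proposition \ref{Prop:Orthogonality}, notes that the coefficient matrix $r_{\I\I'}$ of $R=e^{-\beta H}$ is positive semidefinite (by reflection positivity of $\omega_{\beta H}$ from Theorem \ref{Thm:ReflectionPositivityH-I} together with Theorem \ref{Theorem:RPFunctionals}), and then invokes the Schur product theorem to conclude that the Hadamard product $K_{\I\I'}=J_{\I\I'}r_{\I\I'}$ is positive semidefinite, whence $\tfrac{d}{d\beta}Z_{\beta H}=\lra{q,Kq}\geq 0$. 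You instead specialize the expansion \eqref{eq:ValueRPExpectation} to $A=B=I$ and read off $Z_{\beta H}=\sum_{k\geq 0}\tfrac{\beta^{k}}{k!}\lra{\chi^{k},J^{\otimes k}\chi^{k}}$, a power series in $\beta$ with nonnegative coefficients and constant term $1$. Your version is more direct (no Schur product theorem, no detour through the positivity of $r_{\I\I'}$) and actually yields a slightly stronger conclusion -- $Z_{\beta H}$ is absolutely monotonic in $\beta$, not merely non-decreasing -- at the cost of leaning on the full combinatorial expansion \eqref{eq:niceexpansioninlem} rather than just the $k=1$ orthogonality relations. Your remark about the hypotheses is apt: both arguments need global gauge invariance of $H$ (yours through Lemma \ref{Lemma:Key-2}, the paper's through its appeal to Theorem \ref{Thm:ReflectionPositivityH-I}), and reflection invariance is automatic since a positive semidefinite $J$ is hermitian; the proposition's statement leaves this implicit in both cases.
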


\begin{proof}
Let $R=e^{-\beta H}$ and consider $Z_{\beta H}=\Tr(e^{-\beta H})=\Tr(R)$ for $\beta\geq0$.  Note that $Z_{0}=1$ by Proposition \ref{prop:propertiesTr}.a.  
Using Proposition \ref{Prop:Orthogonality} to evaluate the trace, one obtains 
\beq\label{DerivativeZ}
		\frac{dZ_{\beta H}}{d\beta}
		= -\Tr(He^{-\beta H})=-\Tr(HR)
 = \sum_{\I,\I'\in\cP_{+}} q_{\I}\,J_{\I\I'} \,r_{\I\I'}\,q_{\I'}\;,
	\eeq
with $q_{\I} = (-1)^{k_{\I}(k_{\I}-1)}$ as defined in equation \eqref{eq:q}.
Since the matrix $J_{\I\I'}$ is positive semidefinite, 
the Boltzmann functional $\omega_{H}$ is reflection positive by Theorem 
\ref{Proposition:JPosSemiDef}. 
The matrix $r_{\I\I'}$ of coefficients of $R=e^{-\beta H}$ is
positive semidefinite, as a consequence of Theorem \ref{Theorem:RPFunctionals}. 

It follows that the Hadamard product matrix $K$, with matrix elements $K_{\I\I'}=J_{\I\I'}\,r_{\I\I'}$, is also positive semidefinite.  From \ref{DerivativeZ}, we infer that 
	\be
		\frac{dZ_{\beta H}}{d\beta}
		=\sum_{\I,\I'\in\cP+} q_{\I}\, K_{\I\I'}\,q_{\I'}
		= \lra{q,Kq}_{\ell^{2}}\geq0\;.
	\ee
Thus $Z_{\beta H}$ is a non-decreasing function of $\beta$.
\end{proof}

\section{Necessary and Sufficient Conditions}\label{Sect:Which}
In Theorem \ref{Thm:ReflectionPositivityH-I} we have given sufficient conditions for reflection positivity of the Boltzmann functional $\omega_{H}$; 
it is reflection positive 
if $J\geq 0$, where $J$ is the matrix \eqref{eq:DefnCC} of couplings by which $H$
is defined. 

Now we establish a stronger result, providing necessary and sufficient conditions
in terms of the submatrix $J^0$ of $J$ that contains only the couplings 
between Majoranas on different sides of the reflection plane. If $J$ is positive 
semidefinite, then $J^0$ is positive semidefinite, but the converse does not hold.

In Section \ref{sec:RPomega} we prove that $\omega_{H}$ is reflection positive
if and only if $J^0 \geq 0$. 
Using this, 
we prove the analogous statement for the Gibbs functional $\rho_{H}$
in \S \ref{sec:RPgibbs}.

\subsection{Coupling Constants Across the Reflection Plane}
\label{Sect:CouplingAcross}
Let $H$ be reflection invariant, so that the coupling-constant matrix $J$ is hermitian.  Order the basis elements in $\B_{+}$ so $C_{\vn} = I$ is the first one, and consider the decomposition of $J$,
	\be\label{eq:matrix22}
		J 
		=	
		\bm{J_{\vn\vn}}{J_{\vn \I'}}{J_{\I\vn}}{J_{\I \I'}} 
		= \bm{E} {V^{*}} {V} {J^0}\;.
	\ee
Here $E = J_{\vn\vn}$ yields the additive constant $-E$ in $H$. Reflection invariance of $H$ ensures that $E$ is real.  

In fact $E$ is not of physical relevance.  It does not affect whether the functional $\omega_{H}$ is reflection positive. Furthermore it does not even enter the normalized Gibbs functional.   The energy shift $H \mapsto H - E $ multiplies both $\omega_{H}$ and $Z_{H}$ by $e^{E}$, so it does not affect their quotient $\rho_{H}$.   The column vector $V_{\I} = J_{\I\vn}$ has indices labelled by $\I \in \cP_{+} - \{\vn\} $, as does its hermitian adjoint $V^{*}$. The
hermitian matrix 
	\be  \label{Defn:CCreflection}
		J^{0} = (J^{0}_{\I,\I'})\;,
		\qsp{with indices} 
		{\I, \I' \in \cP_{+} - \{\vn\}}
	\ee 
is called the matrix of \emph{coupling constants 
across the reflection plane}.

The matrix decomposition \eqref{eq:matrix22}
corresponds to the four terms in the  decomposition
	\be\label{eq:DecompositionH}
		H 
		= H_{-} + H_{0} + H_{+} - E \,,
	\ee
where
	\be
		-H_{-} 
		= \sum_{\I \in \cP_{+} - \{\vn\}} 
		J_{\I\vn} \,\Theta(C_{\I})
		= \sum_{\I \in \cP_{+} - \{\vn\}} 
		V_{\I} \,\Theta(C_{\I}) \in \A_{-}
	\ee
is the sum of the interactions on one side of the reflection plane, namely on sites in $\Lambda_{-}$.  The reflection $H_{+}$ of $H_{-}$ is the interaction within $\Lambda_{+}$,
	\be
		-H_{+} 
		= \Theta(-H_{-})
		= \sum_{\I \in \cP_{+} - \{\vn\}} 
		\overline{V_{\I}} \,C_{\I}  \in \A_{+}\;.
	\ee
The interaction across the reflection plane is  
	\be
		-H_{0} 
		= \sum_{\I,\I' \in \cP_{+} - \{\vn\}} 
			J^{0}_{\I\I'} \,\Theta(C_{\I})\cz C_{\I'}\;.
	\ee

\subsection{Characterization of  Reflection Positivity}\label{sec:RPomega}
We give necessary and sufficient conditions on the
Hamiltonian $H\in \A$ for the Boltzmann functional 
\[\omega_{H}(A) = \Tr(Ae^{-H})\]
to be reflection 
positive on $\A_+$.

\begin{Remark}\label{Rk:GibbsRP}\rm
Reflection positivity
of $\omega_{H}$
means that the hermitian form on $\A_{+}$ defined by
\[
	\lra{A,B}^0_{\hz} = \Tr(\Theta(A)\cz B \cdot e^{-H})
\] 
is positive semidefinite; 
$0\leq\lra{A,A}^{0}_{\hz}$ for $A \in \A_{+}$. 
In particular,
\be \label{eq:Zgeq0}
	 Z_{H} = \Tr(e^{-H}) = \lra{I,I\,}^0_{\hz}  \geq 0\,.
\ee
If $Z_{H} \neq 0$, reflection positivity of 
the Boltzmann functional $\omega_{H}$ 
therefore implies 
reflection positivity of the (physically relevant) Gibbs functional $\rho_{H} = Z_{H}^{-1}\omega_{H}$. 
\end{Remark}

\begin{thm}[\bf Reflection Positivity of $\boldsymbol{\omega_{H}}$, Part II]\label{Thm:ReflectionPositivityH}
Let $H \in \A$ be reflection symmetric and globally gauge invariant. Let  $J^{0}$ be the matrix of coupling constants across the reflection plane, defined in \eqref{eq:matrix22}--\eqref{Defn:CCreflection}.  Then: 
\begin{itemize}
\item[(a)]
If $J^{0}$ is positive semidefinite, the functional $\omega_{H}$ is reflection positive on~$\A_{+}$. 
\item[(b)] Conversely, if there exists an $\varepsilon > 0$ such that 
$\omega_{\beta H}$ is reflection positive on $\A_{+}$  for all $\beta \in [0,\varepsilon)$, then the matrix $J^{0}$ is positive semidefinite.
\end{itemize}
\end{thm}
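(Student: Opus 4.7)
My plan is to extract $J^0 \geq 0$ from the first derivative of $\beta \mapsto \lra{A,A}^{0}_{\beta H, \Theta}$ at $\beta = 0$. Writing $A = \sum_{\I \in \cP_+} a_\I C_\I$ and expanding $\Theta(A) \circ A = \sum_{\I, \I'} \overline{a_\I}\, a_{\I'}\, \Theta(C_\I) \circ C_{\I'}$ in the basis $\B$, the zeroth-order term of the form in $\beta$ is the $(\vn, \vn)$ coefficient of $\Theta(A)\circ A$, which equals $|a_\vn|^2$. For $A$ with $a_\vn = 0$ this vanishes, so reflection positivity of $\omega_{\beta H}$ on the interval $[0, \varepsilon)$ forces $-\Tr((\Theta(A) \circ A) H) \geq 0$. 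Applying the trace-of-product identity \eqref{eq:Trace-Of-Product} to the basis expansions of $\Theta(A) \circ A$ and $H$, this derivative becomes
\[
\sum_{\I, \I' \in \cP_+ \setminus \{\vn\}} q_\I\, q_{\I'}\, \overline{a_\I}\, a_{\I'}\, J^0_{\I \I'} \geq 0,
\]
where the restriction to $\I, \I' \neq \vn$ comes from $a_\vn = 0$. Since $(a_\I)_{\I \neq \vn}$ is arbitrary and the substitution $a_\I \mapsto q_\I a_\I$ is a sign change ($q_\I = \pm 1$), this condition is equivalent to positive semidefiniteness of $J^0$.

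\textbf{Part (a) (sufficiency).} My plan is to reduce to Theorem \ref{Thm:ReflectionPositivityH-I} via the decomposition $H = H_+ + H_- + H_0 - E$ of \eqref{eq:DecompositionH}. The assumption $J^0 \geq 0$ admits a Cholesky factorization $J^0 = W^* W$, which yields the sum-of-squares representation
\[
-H_0 = \sum_\alpha \Theta(D_\alpha) \circ D_\alpha, \qquad D_\alpha = \sum_{\I' \neq \vn} W_{\alpha \I'}\, C_{\I'} \in \A_+.
\]
The key structural facts I would exploit are that $H_+ \in \A_+^{\rm even}$ and $H_- = \Theta(H_+) \in \A_-^{\rm even}$ commute with each other and supercommute (hence commute, by evenness) with the opposite subalgebra. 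Using the Trotter product
\[
e^{-H} = e^E \lim_{n \to \infty} \bigl( e^{-H_+/n} e^{-H_-/n} e^{-H_0/n}\bigr)^n
\]
and expanding each $e^{-H_0/n}$ in the sum-of-squares form, I would rearrange $\Tr((\Theta(A) \circ A)\, e^{-H})$ using cyclicity and the freedom to move $e^{-H_\pm /n}$ past elements of the opposite subalgebra, reaching a limit of manifestly non-negative quantities.

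\textbf{Main obstacle and alternative.} The difficulty is the non-commutativity of $H_0$ with $H_\pm$: were they to commute, one could directly factor $e^{-H} = e^E\, e^{-H_0}\, e^{-H_+}\, e^{-H_-}$ and apply Theorem \ref{Thm:ReflectionPositivityH-I} to the auxiliary Hamiltonian $H_0$ (whose coupling matrix $\mathrm{diag}(0, J^0)$ is positive semidefinite). A cleaner alternative that bypasses Trotter is to mimic the proof of Theorem \ref{Thm:ReflectionPositivityH-I} at the level of the power series of $e^{-H}$: expand $e^{-H}$ directly in the basis $\B$, factor each trace via Lemma \ref{Lemma:Key-2}, and split each index pair $(\I_j, \I_j')$ into cases according to whether $\I_j$ and $\I_j'$ are $\vn$ or not. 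The boundary contributions ($J_{\vn \I'} = V^*_{\I'}$, $J_{\I \vn} = V_{\I}$, $J_{\vn\vn} = E$) should then be absorbed into modified vectors $\tilde\chi^k$ that depend on $a_\I$, $V$, and $E$, reducing the form to $\sum_k \tfrac{1}{k!} \lra{\tilde\chi^k, (J^0)^{\otimes k} \tilde\chi^k}$, manifestly non-negative since $J^0 \geq 0$. I expect the hardest step to be the careful combinatorial bookkeeping that identifies the correct $\tilde\chi^k$ and verifies that the orders/signs from the non-commutative product collapse as needed.
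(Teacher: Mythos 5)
Your part (b) is correct and is essentially identical to the paper's argument: take $A$ with $a_{\vn}=0$ so that the $\beta=0$ value $\abs{a_{\vn}}^{2}$ of $\lra{A,A}^{0}_{\beta H,\Theta}$ vanishes, deduce nonnegativity of the first derivative $-\Tr((\Theta(A)\circ A)H)$, and evaluate it via Proposition \ref{Prop:Orthogonality} to get the quadratic form $\sum \overline{(q_{\I}a_{\I})}\,J^{0}_{\I\I'}(q_{\I'}a_{\I'})\geq 0$. Nothing to add there.

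Part (a), however, is not a proof but a plan, and both routes you sketch have a genuine gap. The Trotter/sum-of-squares route is the classical FILS strategy and can be pushed through, but you have not done the work of combining the Trotter limit with Lemma \ref{Lemma:Key-2} and verifying the signs for non-homogeneous $D_{\alpha}$; you yourself flag the non-commutativity of $H_{0}$ with $H_{\pm}$ as unresolved. More seriously, your ``cleaner alternative'' fails as stated: after applying Lemma \ref{Lemma:Key-2} to the power series of $e^{-H}$ and classifying each position $j$ according to whether $\I_{j}$ and $\I'_{j}$ equal $\vn$, the left trace depends on the \emph{interleaving} of the $H_{-}$-type positions with the $H_{0}$-type positions, the right trace on the interleaving of the $H_{+}$-type positions with the $H_{0}$-type positions, and the combinatorial weight $\tfrac{1}{k!}\cdot\#\{\text{words realizing a given pair of interleavings}\}$ does \emph{not} factor as a product of a function of the left pattern and a function of the right pattern (already for one $J^{0}$-factor, one $V$-factor on each side, the weights are $\tfrac13$ for matching patterns and $\tfrac16$ for mismatched ones). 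Hence the form cannot be collapsed to $\sum_{k}\tfrac{1}{k!}\lra{\tilde\chi^{k},(J^{0})^{\otimes k}\tilde\chi^{k}}$ with a single modified vector $\tilde\chi^{k}$; one would additionally need positive definiteness of the shuffle kernel (i.e.\ a Duhamel/time-ordered representation), which is exactly the extra ingredient Trotter supplies. The paper avoids all of this with a purely matrix-level device you did not find: after normalizing $E=0$, it perturbs the coupling matrix to $\widetilde J_{\vep}= \bm{0}{0}{0}{J^{0}} + \bm{\vep^{-1}}{V^{*}}{V}{\vep VV^{*}}$, which is a sum of two positive semidefinite matrices (the second factors as a rank-one square), so Theorem \ref{Thm:ReflectionPositivityH-I} applies directly to the corresponding Hamiltonian $\widetilde H_{\vep}$; since $\widetilde H_{\vep}$ differs from $H_{\vep}$ only by an additive constant and $H_{\vep}\to H$ as $\vep\to 0$, reflection positivity of $\omega_{H}$ follows by continuity. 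You should either adopt that perturbation argument or carry the Trotter computation through in full detail.
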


\begin{proof}
(a) Since $H$ is reflection invariant, we infer from Proposition \ref{prop:propertiesH} that $J$ is hermitian.  Writing $J$ as in \eqref{eq:matrix22}, recall that  reflection positivity of $\omega_{H}$ is independent of the value of $E$.  So for simplicity we can add a constant to $H$ so that $E=0$.  Now we approximate $J$ by  $J_{\vep}$ defined as the matrix
	\be
		J_{\vep} 
		:= \bm{0}{V^{*}}{V}{J^{0}} +\varepsilon \bm{0}{0}{0}{VV^{*}}\;,
	\ee	
where $0\leq\vep$ is a small parameter.  Here $VV^{*}$ denotes the matrix with elements $\lrp{VV^{*}}_{\I\I'}=V_{\I} \overline{V_{\I'}}$ with $\I,\I'\in \cP-\varnothing$.  Clearly $J_{\ep}\to J$ as $\vep\to0$, so that $H_{\vep}\to H$ as $\vep\to0$.   Hence $\omega_{H_{\ep}} \to \omega_{H}$ as $\vep\to0$.

Assume that the functional $\omega_{H_{\vep}}$ satisfies reflection positivity on $\A_{+}$ for every $\vep>0$.  Then the convergence explained above means that for $A\in\A_{+}$, the  expectations $\omega_{H_{\vep}}(\Theta(A)\circ A)\geq0$ converge to $\omega_{H}(\Theta(A)\circ A)\geq0$ as $\vep\to0$.  We infer that $\omega_{H}$ is reflection positive.

Now we show that $\omega_{H_{\vep}}$ does satisfy reflection positivity for every $\vep>0$.   In order to see this, we make a second modification to $J$, by adding the constant $\vep^{-1}$ to $H_{\ep}$.  Thus we obtain a new matrix of couplings $\widetilde {J}_{\vep}$ defined as  
	\be\label{eq:TildeJvep}
		\widetilde {J}_{\vep}
		= J_{\vep} + \frac{1}{\vep}\bm{1}{0}{0}{0}
		= \bm{0}{0}{0}{J^{0}} + \bm{\vep^{-1}}{V^{*}}{V}{\vep VV^{*}}\;.
	\ee 
The couplings $\widetilde {J}_{\vep}$ correspond to a Hamiltonian $\widetilde{H}_{\vep}$, that differs from $H_{\vep}$ only by an additive constant.  So $\omega_{\widetilde{H}_{\vep}}$ satisfies reflection positivity if and only if $\omega_{H_{\vep}}$ does.  Furthermore we can appeal to Theorem \ref{Thm:ReflectionPositivityH-I}, so it is sufficient to show that the matrix $\widetilde {J}_{\vep}$ is positive semidefinite for every $\vep>0$. 

We claim that $\widetilde{J}_{\vep}$ is positive semidefinite, since each of the two matrices on the right of \eqref{eq:TildeJvep} are positive semidefinite, as is the sum of two positive semidefinite matrices.   The first matrix on the right is positive semidefinite by the assumption that $J^{0}$ is positive semidefinite.  
The second matrix is also positive definite, since it can be written as
\[
 \bm{\vep^{-1}}{V^{*}}{V}{\vep VV^{*}} =  \bm{\vep^{-1/2}}{0}{\vep^{1/2}V}{0}  \bm{\vep^{-1/2}}{\vep^{1/2}V^{*}}{0}{0}\,.
\] 
 This  concludes the proof  that $\omega_{H}$ is reflection positive on $\A_{+}$.

(b).  Suppose that $\omega_{\beta H}$ is reflection positive on $\A_{+}$ for $\beta \in [0,\varepsilon)$.
Choose $A = \sum_{\I \in \cP_{+}} a_{\I} \,C_{\I}$ with $a_{\vn} = 0$, so $A$ is in the null space of the form $\lra{A,A}^{0}_{0,\Theta}$, as  
$\Tr(\Theta(A)\circ A) = \abs{a_{\vn}}^2 = 0$.
Reflection positivity then ensures that the first derivative cannot be negative,
\be\label{eq:der0beta}
	0 \leq \left.\frac{d}{d\beta} \lra{A,A}^{0}_{\beta H, \Theta}\right|_{\beta =0}
	= - \Tr((\Theta(A)\circ A)H)\,,
\ee
for otherwise reflection positivity would be violated for small $\beta$.
One can evaluate \eqref{eq:der0beta} in a fashion similar to the computation of \eqref{DerivativeZ}, but with $\Theta(A)\cz A$ replacing $R$.  

Expanding $\Theta(A)\circ A$ as 
$\sum_{\I,\I'} \overline{a_{\I}}a_{\I'} \Theta(C_{\I})\circ C_{\I'}$,
and using Proposition \ref{Prop:Orthogonality} to evaluate the trace, one obtains 
\be\label{eq:poswithF}
0\leq 
-\Tr(\Theta(A)\circ A H) = \sum_{\I,\I' \in \cP_{+} - \vn} \overline{(q_{\I}a_{\I})}\, J^{0}_{\I\I'} \, (q_{\I'}a_{\I'}),
\ee
with $q_{\I} = (-1)^{k_{\I}(k_{\I}-1)}$ as in \eqref{eq:q}.
As $a_{\vn}=0$, the sum restricts to $\cP_{+}{-}\vn$, and only $J^{0}$ contributes.
From equation \ref{eq:poswithF}, one then obtains
	\be \label{eq:J0pos}
		0 \leq 
		\lra{f, J^{0}f} \;.
	\ee
Since this holds for all $f \in \ell^2(\cP_{+})$ with $f_{\vn} = 0$, this assures that the matrix $J^{0}$ is positive semidefinite.  
\end{proof}

\begin{Remark}\rm
This theorem is somewhat similar in flavor to Schoenberg's Theorem \cite{S38a,S38b}, which states that $e^{-\beta K}$ is a positive 
definite kernel for all $\beta \in [0, \varepsilon)$ if and only if $K$ is conditionally negative definite.
In particular, it is striking that just as in Schoenberg's Theorem, $J$ is only required to be positive definite 
on a subspace of codimension 1. 
\end{Remark}

\subsection{Reflection Positive Gibbs Functionals}\label{sec:RPgibbs} 
Using Theorem~\ref{Thm:ReflectionPositivityH}, we obtain the following
necessary and sufficient conditions on 
$H$ for
the Gibbs functional 
 	\[
		\rho_{H}(A) =Z^{-1}_{H} \Tr(Ae^{-H})\,,
	\]
to be reflection positive on $\A_{+}$.
In this expression, $Z_{H} =\Tr(e^{-H})$ denotes the partition sum.

\begin{thm}[\bf Reflection Positivity of Gibbs Functionals]
\label{Theorem:RPGibbsFunctionals}
Let $H \in \A$ be a reflection symmetric, globally gauge invariant Hamiltonian.
\begin{itemize}
\item[(a)] Suppose that $Z_{H} \neq 0$, and that the matrix $J^{0}$
of coupling constants across the reflection plane 
is positive semidefinite.
Then $\rho_{H}$ is reflection positive, and $Z_{H} > 0$.
\item[(b)] If there exists an $\varepsilon >0$ such that 
$\rho_{\beta H}$ is reflection positive for all $\beta \in [0,\varepsilon)$, then
the matrix  $J^{0}$
of coupling constants across the reflection plane is positive semidefinite.
\end{itemize}
\end{thm}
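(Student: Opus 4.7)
My plan is to deduce both parts of Theorem \ref{Theorem:RPGibbsFunctionals} from the Boltzmann-functional version, Theorem \ref{Thm:ReflectionPositivityH}, using the identity $\rho_{\beta H} = Z_{\beta H}^{-1}\omega_{\beta H}$ and the fact that multiplication by a strictly positive scalar preserves reflection positivity. The only real work is to arrange for $Z_{\beta H}$ to be a strictly positive real number, which uses reflection invariance together with continuity in $\beta$.

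For part (a), I first apply Theorem \ref{Thm:ReflectionPositivityH}(a) with the hypothesis $J^{0}\geq 0$ to conclude that $\omega_{H}$ is reflection positive on $\A_{+}$. Remark \ref{Rk:GibbsRP} (equation \eqref{eq:Zgeq0}) then yields $Z_{H}=\langle I,I\rangle^{0}_{H,\Theta}\geq 0$; combining this with the standing assumption $Z_{H}\neq 0$ gives $Z_{H}>0$. Since $\rho_{H}=Z_{H}^{-1}\omega_{H}$ is a strictly positive scalar multiple of the reflection positive functional $\omega_{H}$, the form $\langle A,B\rangle_{H,\Theta}=Z_{H}^{-1}\langle A,B\rangle^{0}_{H,\Theta}$ is positive semidefinite on $\A_{+}$, so $\rho_{H}$ is reflection positive.

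For part (b), the aim is to pass from reflection positivity of $\rho_{\beta H}$ to that of $\omega_{\beta H}$ and then invoke Theorem \ref{Thm:ReflectionPositivityH}(b). Since $H$ is reflection invariant, so is $\beta H$ for every real $\beta$, and Corollary \ref{cor:Zisreal} guarantees that $Z_{\beta H}$ is real. Because $\beta\mapsto Z_{\beta H}=\Tr(e^{-\beta H})$ is continuous with $Z_{0}=1$, there exists $\delta\in(0,\varepsilon]$ such that $Z_{\beta H}>0$ for all $\beta\in[0,\delta)$. On this interval, $\rho_{\beta H}$ is reflection positive by hypothesis, so $\omega_{\beta H}=Z_{\beta H}\,\rho_{\beta H}$ is likewise reflection positive. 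Applying Theorem \ref{Thm:ReflectionPositivityH}(b) with the interval $[0,\delta)$ then delivers the desired positive semidefiniteness of $J^{0}$.

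The only mildly delicate point, and the step I expect to be the main obstacle, is the appeal to continuity of $Z_{\beta H}$ to guarantee a uniform sign on some $[0,\delta)$. In the non-hermitian setting there is no a priori sign for $Z_{\beta H}$, so the reflection-invariance of $H$, which makes $Z_{\beta H}$ real and equal to $1$ at $\beta=0$, is essential; without this one could not convert reflection positivity of $\rho_{\beta H}$ into reflection positivity of $\omega_{\beta H}$ and the reduction to Theorem \ref{Thm:ReflectionPositivityH} would fail.
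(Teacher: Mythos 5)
Your proposal is correct and follows essentially the same route as the paper: part (a) reduces to Theorem \ref{Thm:ReflectionPositivityH}(a) together with Remark \ref{Rk:GibbsRP}, and part (b) uses reality of $Z_{\beta H}$ (Corollary \ref{cor:Zisreal}), $Z_{0}=1$, and continuity (the paper says analyticity) in $\beta$ to get $Z_{\beta H}>0$ near $\beta=0$, so that $\omega_{\beta H}=Z_{\beta H}\,\rho_{\beta H}$ inherits reflection positivity and Theorem \ref{Thm:ReflectionPositivityH}(b) applies. Your explicit remark that the sign of $Z_{\beta H}$ is the one delicate point matches the paper's treatment exactly.
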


\begin{Remark}\rm
If $H \in \A^{\rm even}$ is self-adjoint, then
the condition $Z_{\beta H} \neq 0$ is automatically satisfied for all $\beta \geq 0$.
\end{Remark}

\begin{proof}
(a) If $J^{0}$ is positive semidefinite, then $\omega_{H}$ is reflection positive 
by Theorem~\ref{Thm:ReflectionPositivityH}.
Reflection positivity of the Gibbs functional $\rho_{H}$ then follows by
Remark \ref{Rk:GibbsRP}.

(b)
The partition function $Z_{\beta H} = \Tr(e^{-\beta H})$ is analytic in $\beta$,  
and real by Corollary \ref{cor:Zisreal}. Since $Z_{0} = 1$, the 
expression
\[\rho_{\beta H}(X) = Z_{\beta H}^{-1} \Tr(Xe^{-\beta H})\]
is well defined and analytic in a neighborhood $U$ of $\beta = 0$. 
The inequality
$0 \leq \rho_{\beta H}(\Theta(A)\circ A)$ for $\beta \in U$ thus yields 
\[0 \leq  Z_{\beta H}\ \rho_{\beta H}(\Theta(A)\circ A) = \omega_{\beta H}(\Theta(A)\circ A)\,.\] 
Since this holds for all 
$A \in \A_{+}$ and $\beta \in U$, 
the Boltzmann functional $\omega_{\beta H}$ is reflection positive for all
$\beta \in U$, and $J^{0}$ is positive semidefinite by Theorem \ref{Thm:ReflectionPositivityH}.
\end{proof}

\section{Reflection Positivity for Spin Systems}\label{sec:RPSpinSyst}

From the corresponding result for Majoranas,
we now derive necessary and sufficient conditions for reflection positivity 
in the context of spin systems.
As in the case of Majoranas, these will be formulated in terms of the 
matrix of coupling constants across the reflection plane.

\subsection{Spin Algebras}
In spin models, the algebra of observables for a lattice site $j\in \Lambda$ is $M^{2}(\C)$,
spanned by $I$ and the Pauli spin matrices $\sigma_{j}^{1}, \sigma_{j}^{2}, \sigma_{j}^{3}$. The operators 
$\sigma_{j}^{a}$ and $\sigma_{j'}^{b}$ commute for $j\neq j'$, and otherwise satisfy 
the familiar relations
$
	\sigma_{j}^{a} \sigma_{j}^{b} = \delta^{ab}I + i\sum_{c}\epsilon_{abc}\sigma_{j}^{c}\,.
$
In this context, the full algebra of observables is 
\[\A^{\mathrm{spin}} = \bigotimes_{j\in \Lambda} M_{j}^2(\C)\,,\] 
and the algebra of observables on the $\pm$ side of the reflection plane
is
\[\A^{\mathrm{spin}}_{\pm} = \bigoplus_{j\in \Lambda_{\pm}} M_{j}^2(\C)\,.\] 
Define the operators $\Sigma_{(\I,A)}$ as the product of spins 
\[\Sigma_{(\I,A)} = \sigma_{i_1}^{a_1}\ldots\sigma^{a_k}_{i_k}\,.\]
They are labelled by sets of the form 
\be\label{eq:labeldef}
	(\I, A) := \{(i_1, a_1), \ldots, (i_{k},a_{k})\}\,,
\ee
where $i_{s}$ is a lattice point in $\Lambda$,  $a_{s}$ is a spin label in $\{1,2,3\}$,  
and $i_{s}\neq i_{t}$ for $s \neq t$.
Together with the identity $\Sigma_{\vn} := I$,
the operators $\Sigma_{(\I, A)}$ constitute an orthonormal basis of $\A$ 
with respect to the bilinear trace pairing,
\be\label{eq:TracepairS}
{\textstyle \Tr_{\rm spin}}(\Sigma_{(\I, A)} \Sigma_{(\I',A')}) = \delta_{AA'}\delta_{\I\I'}\,.
\ee

\begin{defini}[{\bf Standard Reflection}]\label{def:standardreflection}
The standard reflection $\Theta$ on  $\A^{\rm spin}$ 
is defined by $\Theta(\sigma^{a}_{j}) = - \sigma_{\vartheta(j)}^{a}$,
for $j \in \Lambda$ and $a \in \{1,2,3\}$. 
\end{defini}
The standard reflection satisfies
\be\label{eq:spinrefl}
\Theta(\Sigma_{(\I,A)}) = (-1)^{k_{\I}} \Sigma_{\vartheta(\I, A)}\,.
\ee

\subsection{Spin Hamiltonians}\label{sec:spin-models}

Any Hamiltonian $H^{\rm spin} \in \A^{\rm spin}$, not necessarily Hermitian, takes the form
\be\label{eq:spinhamiltonian}
H^{\rm spin} = - \sum_{k}\sum_{j_1,\ldots, j_{k}} \sum_{a_1, \ldots, a_{k}} J^{a_1}_{j_1}\ldots\,^{a_k}_{j_k} \,\sigma_{j_1}^{a_1}\ldots\sigma^{a_k}_{j_k}\,.
\ee
Partition $j_1 \ldots j_{k}$ into the sets $\vartheta(\I) \subseteq \Lambda_{-}$
and $\I' \subseteq \Lambda_{+}$, where both $\I$ and $\I'$ are subsets of $\Lambda_{+}$.
Using \eqref{eq:spinrefl} and setting 
\be \label{eq:coefssp}
J^{AA'}_{\vartheta(\I)\I'} = J^{a_1}_{j_1}\ldots{}^{a_k}_{j_k}\,,
\ee 
equation
\eqref{eq:spinhamiltonian} can be expressed as
\beq
H^{\rm spin} &=& 
-\sum_{{(\I,A)}\atop{(\I',A')}}
J^{AA'}_{\vartheta(\I)\I'}\, \Sigma_{(\vartheta(\I), A)}\Sigma_{(\I',A')} \label{eq:spinhamiltonian1}\\
&=&
-\sum_{{(\I,A)}\atop{(\I',A')}} (-1)^{k_{\I}} J^{AA'}_{\vartheta(\I)\I'}\, 
\Theta(\Sigma_{(\I, A)})\Sigma_{(\I',A')}\,.
\label{eq:spinhamiltonian2}
\eeq

\subsection{Mapping Spins to Majoranas}
Spin models map to Majorana models by a well-known transformation.  For a single site, this is similar to the infinitesimal rotation written in terms of Dirac matrices.  The tensor product construction, projected to a chiral subspace,  is known in the condensed matter literature as the Kitaev transformation.
This map $X \mapsto \widehat{X}$ from the algebra $\A^{\rm spin}$ of spins to the 
algebra $\A$ of Majoranas is constructed as follows.

Choose four Majoranas at site $j$ denoted $c^{\alpha}_{j}$, for $\alpha=1,2,3,4$.  
(The superscripts denote labels, not powers.) 
The Majoranas satisfy the Clifford relations
 $\{c^{\alpha}_{j}, c^{\beta}_{j'}\} = 2\delta^{\alpha\beta}\delta_{jj'}I$ and $c_{j}^{\alpha *} = c_{j}^{\alpha}$. 
They generate the Majorana algebra $\A$ indexed by
$\widehat{\Lambda} = \Lambda \times \{1,2,3,4\}$.

The product $\gamma_{j}^{5} = c_{j}^{1}c_{j}^{2}c_{j}^{3} c_{j}^{4}$ is both 
self adjoint and unitary, so $P^{5}_{j} = \half (I + \gamma_{j}^{5})$
is the projection corresponding to the $+1$ eigenvalue. 
The projections $P^{5}_{j}$ mutually commute, and also commute 
with all even elements of $\A$. 
Their product $P^{5} :=\prod_{j}P^{5}_{j}$ is called the \emph{chiral projection}.
It can be written as a product
\be
P^{5} = P^{5}_{-}P^{5}_{+}
\ee
of the two commuting projections 
$P^{5}_{\pm} = \prod_{j \in \Lambda_{\pm}} P^{5}_{j}$ in $\A_{\pm}$.

The map from spins to Majoranas
is given by
	\be\label{eq:Identification of Sigma}
		\widehat{\sigma}^{a}_{j}
		:= i c^{a}_{j} c^{4}_{j}
	\ee
on single spins $\sigma^{a}_{j}$, and extends to a linear map 
$\A^{\rm spin} \rightarrow \A$ by
\[
\widehat{\Sigma}_{(\I,A)} := 
\widehat{\sigma}^{a_{1}}_{j_{1}}\cdots \widehat{\sigma}^{a_{k}}_{j_{k}}\,.
\]
The resulting linear map $X \mapsto \widehat{X}$ 
is a homomorphism when restricted to $P^{5}$, in the sense that
for all $X, Y \in \A^{\rm spin}$, one has
\be\label{eq:almosthomo}
\widehat{XY}\,P_{5} = \widehat{X}\widehat{Y}\,P_{5}\,.
\ee

\subsection{Reflection Positivity for Spin Hamiltonians}
Recall that for a (not necessarily Hermitian) Hamiltonian $H \in H^{\rm spin}$,
the Boltzmann functional $\omega_{H}(X) = \Tr_{\rm spin}(Xe^{-H})$ is called refection positive on $\A_{+}$ if
\be\label{eq:forspins}
0 \leq
	\omega_{H}(\Theta(X) X)
	=  {\textstyle \Tr_{\rm spin}}(\Theta(X) X \,e^{-H})\,.
\ee
If the partition sum $Z_{H} = \Tr_{\rm spin}(e^{-H})$ is nonzero, then 
the Gibbs functional is defined by $\rho_{H}(X) := Z^{-1}_{H}\omega_{H}(X)$.
Reflection positivity of $\rho_{H}$ is equivalent to
\be\label{eq:forspins}
0 \leq
	\rho_{H}(\Theta(X) X)
	=  Z^{-1}_{H} {\textstyle \Tr_{\rm spin}}(\Theta(X) X \,e^{-H})\,.
\ee
From Theorem \ref{Thm:ReflectionPositivityH} for Majoranas, one derives the following 
characterization of reflection positivity for spin systems. It is given
in terms of the matrix $J^{0\, AA'}_{\I\I'}$ of coupling constants 
across the reflection plane. This is the submatrix of the matrix $J^{AA'}_{\I\I'}$
of coupling constants \eqref{eq:coefssp} with $\I \neq \vn$ and $\I' \neq \vn$.

\begin{thm}[\bf Reflection Positivity for Spins]\label{thm:maincorollarySpinSyst}
Let $H \in \A^{\mathrm{spin}}$ be a (not necessarily Hermitian) 
reflection invariant Hamiltonian.
\begin{itemize}
\item[(a)]  If the matrix 
$i^{k_{\I} + k_{\I'}} J^{0\,AA'}_{\,\vartheta(\I)\I'}$  
is positive semidefinite, then 
the Boltzmann functional $\omega_{H}$ is reflection positive.
If $Z_{H}\neq 0$, then $Z_{H}> 0$, and the Gibbs state $\rho_{H}$ is reflection positive. 
\item[(b)]
If there exists an $\varepsilon>0$  such that either $\omega_{\beta H}$ or 
$\rho_{\beta H}$ is reflection positive on  $\A^{\mathrm{spin}}_{+}$ 
for all $\beta\in[0,\varepsilon)$, then the matrix 
$i^{k_{\I} + k_{\I'}} J^{0\,AA'}_{\,\vartheta(\I)\I'}$ is positive semidefinite.
\end{itemize}
\end{thm}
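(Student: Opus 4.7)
The plan is to reduce Theorem \ref{thm:maincorollarySpinSyst} to the Majorana characterization in Theorem \ref{Thm:ReflectionPositivityH} via the Kitaev embedding $X \mapsto \widehat X$ of \S 7.3, under which $\sigma^a_j \mapsto \widehat\sigma^a_j = ic^a_jc^4_j$. First I extend the reflection to the enlarged index set $\widehat\Lambda = \Lambda \times \{1,2,3,4\}$ by $\vartheta(j,\alpha) := (\vartheta(j),\alpha)$, so that $\Theta_{\rm Maj}$ acts antilinearly with $\Theta_{\rm Maj}(c^\alpha_j) = c^\alpha_{\vartheta(j)}$. Antilinearity then gives $\Theta_{\rm Maj}(\widehat\sigma^a_j) = -ic^a_{\vartheta(j)}c^4_{\vartheta(j)} = -\widehat\sigma^a_{\vartheta(j)} = \widehat{\Theta_{\rm spin}(\sigma^a_j)}$. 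Since the $\widehat\sigma^a_j$ are even and pairwise commuting, monomials $\widehat\Sigma_{(\I,A)} := \prod_s\widehat\sigma^{a_s}_{i_s}$ inherit this intertwining of the two reflections.

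To read off the Majorana couplings of $\widehat H^{\rm spin}$, I write $\widehat\Sigma_{(\I,A)} = i^{k_\I}C_{\K(\I,A)}$ with $\K(\I,A) := ((i_1,a_1),(i_1,4),\ldots,(i_k,a_k),(i_k,4))\in\cP_+$, so $\Theta_{\rm Maj}(C_{\K(\I,A)}) = C_{\K(\vartheta(\I),A)}$. The basis element $C_{\K(\I,A)}$ has even Majorana degree $2k_\I$, so the twisted product on such elements reduces to the ordinary product. Applying the map to $H^{\rm spin}$ in the form \eqref{eq:spinhamiltonian1} yields
\[
\widehat H^{\rm spin} \;=\; -\sum i^{k_\I + k_{\I'}}\,J^{AA'}_{\vartheta(\I)\I'}\,\Theta_{\rm Maj}(C_{\K(\I,A)})\cz C_{\K(\I',A')}\,.
\]
The Majorana coupling matrix $\widehat J$ is thus supported on labels of the form $\K(\I,A)$, with entries $i^{k_\I + k_{\I'}}J^{AA'}_{\vartheta(\I)\I'}$ and zeros elsewhere; consequently $\widehat J^0 \geq 0$ iff its only nontrivial block, $i^{k_\I+k_{\I'}}J^{0,AA'}_{\vartheta(\I)\I'}$, is positive semidefinite.

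For part (a), I bridge spin and Majorana expectations via the chiral projection $P^5 = P^5_-P^5_+ = \Theta(P^5_+)P^5_+$, noting that $P^5_\pm$ is central in $\widehat{\A^{\rm spin}}$. For $X\in\A^{\rm spin}_+$ and $Y := \widehat X P^5_+\in\A_+$, this centrality gives $\Theta(\widehat X)\widehat X\,P^5 = \Theta(Y)Y$. Combined with the trace relation $\Tr_{\rm spin}(Z) = 2^{|\Lambda|}\Tr(\widehat Z\,P^5)$ and the partial homomorphism \eqref{eq:almosthomo}, this yields $\omega_{H^{\rm spin}}(\Theta_{\rm spin}(X)X) = 2^{|\Lambda|}\omega_{\widehat H^{\rm spin}}(\Theta_{\rm Maj}(Y)\cz Y)$, so Majorana reflection positivity of $\omega_{\widehat H^{\rm spin}}$ (granted by Theorem \ref{Thm:ReflectionPositivityH}(a) under the PSD hypothesis) delivers spin reflection positivity of $\omega_{H^{\rm spin}}$. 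The Gibbs statement then follows from $Z_{H^{\rm spin}} = \omega_{H^{\rm spin}}(I) \geq 0$ as in Remark \ref{Rk:GibbsRP}.

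For part (b), rather than inverting the bridge, I argue directly in the spin algebra via a derivative trick. Choosing $X = \sum_{(\I,A):\I\neq\vn}a_{(\I,A)}\Sigma_{(\I,A)}\in\A^{\rm spin}_+$ gives $\Tr_{\rm spin}(\Theta(X)X) = 0$ by orthonormality \eqref{eq:TracepairS}, so reflection positivity of $\omega_{\beta H^{\rm spin}}$ for $\beta\in[0,\varepsilon)$ forces $0 \leq -\Tr_{\rm spin}((\Theta(X)X)H^{\rm spin})$. Expanding with $\Theta_{\rm spin}(\Sigma_{(\I,A)}) = (-1)^{k_\I}\Sigma_{\vartheta(\I,A)}$, factorization $\Tr_{\rm spin} = \Tr_-\otimes\Tr_+$, and orthonormality of the spin basis reduces this to $\sum(-1)^{k_\I}\overline{a_{(\I,A)}}a_{(\I',A')}J^{0,AA'}_{\vartheta(\I)\I'} \geq 0$; the substitution $b_{(\I,A)} := i^{-k_\I}a_{(\I,A)}$ rewrites this as positive semidefiniteness of $i^{k_\I + k_{\I'}}J^{0,AA'}_{\vartheta(\I)\I'}$ (its hermiticity being an easy consequence of $\Theta_{\rm spin}(H^{\rm spin}) = H^{\rm spin}$). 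The Gibbs case follows by analyticity of $Z_{\beta H^{\rm spin}}$ near $\beta = 0$, as in the proof of Theorem \ref{Theorem:RPGibbsFunctionals}(b). The principal obstacle throughout is phase bookkeeping: the factors $i^{k_\I}$ from the Kitaev embedding, $(-i)^{k_\I}$ from antilinearity of $\Theta_{\rm Maj}$, $(-1)^{k_\I}$ from the spin reflection, and the signs from reordering Majorana monomials must combine precisely to produce the phase $i^{k_\I + k_{\I'}}$.
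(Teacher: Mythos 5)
Your proposal is correct and follows essentially the same route as the paper: part (a) via the Kitaev embedding, identification of the Majorana coupling matrix as the single block $i^{k_{\I}+k_{\I'}}J^{AA'}_{\vartheta(\I)\I'}$, an appeal to the Majorana theorem, and the $P^{5}=P^{5}_{-}P^{5}_{+}$ bridge between spin and Majorana traces; part (b) via the derivative-at-$\beta=0$ argument carried out directly in the spin algebra with the orthonormal $\Sigma_{(\I,A)}$ basis. The only (harmless) cosmetic differences are your explicit normalization factor $2^{|\Lambda|}$ in the trace relation and the substitution $b=i^{-k_{\I}}a$ in place of the paper's $y=i^{k_{\I}}x$, both of which yield the same conclusion.
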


\begin{Remark}\rm
The requirement that $H$ is reflection invariant is equivalent to Hermiticity of
the matrix
$i^{k_{\I} + k'_{\I}} J^{AA'}_{\vartheta(\I)\I'}$.
Furthermore, the requirement that $Z_{H}\neq 0$ is automatically fulfilled
if $H$ is Hermitian. 
\end{Remark}

\begin{proof}
It suffices to prove (a) and (b) for the Boltzmann functional $\omega_{H}$. 
Statement (a) for the Gibbs functional $\rho_{H}$ then follows from 
Remark \ref{Rk:GibbsRP}. Following 
word by word 
the proof of Theorem \ref{Theorem:RPGibbsFunctionals}.b,
one obtains statement (b) for $\rho_{H}$ 
from statement (b) for $\omega_{H}$.

(a): 
Since $\Theta(c^{a}_{\vartheta(j)}c^{4}_{\vartheta(j)}) = c^{a}_{j}c^{4}_{j}$,
the 
Hamiltonian $H^{\rm spin}\in \A$ of equation \ref{eq:spinhamiltonian}
with coefficients \ref{eq:coefssp} gives rise
to the Hamiltonian 
\be\label{eq:Hhat}
\widehat{H} = - \sum_{{(\I,A)}\atop{(\I',A')}} J{}^{AA'}_{\I\,\I'} i^{k + k'}
\Theta\Big(c^{a_{1}}_{\vartheta(i_{1})} c^{4}_{\vartheta(i_{1})} 
\ldots c^{a_k}_{\vartheta(i_k)} c^{4}_{\vartheta(i_k)}\Big)\;
c^{a'_{1}}_{i'_{1}} c^{4}_{i'_{1}} \ldots c^{a'_k}_{i'_{k'}} c^{4}_{i'_{k'}}
\ee
in the Majorana algebra $\A$.
Equation \eqref{eq:Hhat} can thus be written 
\[
\widehat{H}= - \sum_{\widehat{\I}, \widehat{\I}'} J^{\rm M}_{\widehat{\I} \,\widehat{\I}'}
\Theta(C_{\widehat{\I}}) \circ C_{\widehat{\I}'},
\]
where $J^{\rm M}$ is the matrix of Majorana coupling constants. 
It equals 
\be\label{eq:needsnumber}
J^{\rm M}_{\widehat{\I} \,\widehat{\I}'} = 
i^{k_{\I}+k_{\I'}} J{}^{AA'}_{\vartheta(\I)\I'}
\ee
for the indices
\beq
\widehat{\I} &=& ((i_1,a_1), (i_1,4), \ldots, (i_{k}, a_{k}), (i_{k}, 4))\,,\\\label{eq:indx}
\widehat{\I}' &=& ((i'_1,a'_1), (i'_1,4), \ldots, (i'_{k}, a'_{k'}), (i'_{k'}, 4))\nonumber
\eeq
and zero elsewhere.
With respect to an appropriate choice of basis, 
the matrix $i^{k_{\I}+k_{\I'}}J{}^{AA'}_{\vartheta(\I)\I'}$ 
is the only nonzero block in 
$J^{\rm M}_{\widehat{\I} \,\widehat{\I}'}$. Therefore, the 
latter is  
positive semidefinite if and only if 
the former is.
The same holds for the matrices
$J^{M0}_{\widehat{\I} \,\widehat{\I}'}$ and 
$i^{k_{\I}+k_{\I'}}J^{0\,AA'}_{\,\vartheta(\I)\I'}$
of couplings across the reflection plane.

The Majorana Hamiltonian $\widehat{H}$ is 
globally gauge invariant since each spin involves two Majoranas, 
and it is reflection invariant as $\widehat{J}$ is Hermitian.
Since $J^{M0}_{\widehat{\I} \,\widehat{\I}'}$ is positive semidefinite, 
Theorem \ref{Theorem:RPGibbsFunctionals}
yields reflection positivity of $\widehat{H}$.
This implies
reflection positivity of $H$, since
\beqs
\Tr{}_{\mathrm{spin}}(\Theta(X)X e^{-H}) &=& \Tr{}_{M} (\Theta(\widehat{X})\widehat{X} e^{-\widehat{H}} P^{5})\\
& = & 
\Tr{}_{M} (\Theta(\widehat{X}P^{5}_{+})(\widehat{X}P^{5}_{+}) e^{-\widehat{H}} )
\geq 0\,.
\eeqs
Here, we used $\Tr_{\rm spin}(X) = \Tr_{M}(\widehat{X} P^5)$, equation
\eqref{eq:almosthomo}, and the fact that $P^{5}_{+}$ and 
$\Theta(P^{5}_{+}) = P^{5}_{-}$ commute 
with the other factors, with $P^{5} = P^{5}_{-}P^{5}_{+}$.

(b):
This is analogous to the proof of Theorem \ref{Thm:ReflectionPositivityH}.b.
Choose $X \in \A^{\rm spin}_{+}$ such that $\Tr_{\rm spin}(\Theta(X)X) = 0$.
Expand $X$ as 
\[
	X = \sum_{(\I',A')} x^{A'}_{\I'}\Sigma_{(\I',A')}\,, 
\] 
with 
the coefficient $b_{\vn}$ of $\Sigma_{\vn} = I$ equal to zero.
Using equation \eqref{eq:spinrefl}, one finds
\[
\Theta(X) = \sum_{(\I, A)} (-1)^{k_{\I}}\overline{x}{}^{A}_{\I} \Sigma_{\vartheta(\I, A)}\,.
\]
Since $\rho_{\beta H}(\Theta(X)X) = \Tr_{\rm spin}(\Theta(X)X e^{-\beta H})$ is nonnegative 
and zero for $\beta = 0$, one finds
\be
0 \leq \left.\frac{d}{d\beta}  {\textstyle \Tr_{\rm spin}}(\Theta(X)X e^{-\beta H}) \right|_{\beta = 0}
= -{\textstyle \Tr_{\rm spin}}(\Theta(X)X H)\,.
\ee
Using the expansion \eqref{eq:spinhamiltonian1}
and the orthogonality relations \eqref{eq:TracepairS}
of $\Sigma_{(\I, A)}$ with respect to 
the trace pairing, 
one thus obtains 
\[
0 \leq \sum_{{(\I,A)}\atop{(\I',A')}} (-1)^{k_{\I}} \overline{x}{}^{A}_{\I} J^{AA'}_{\vartheta(\I)\I'} x^{A'}_{\I'}\,.
\]
Since $x_{\vn} = 0$, only the coupling constants $J^{0\,AA'}_{\,\vartheta(\I)\I'}$
across the reflection plane contribute.
Substituting $y^{A}_{\I} := i^{k_{\I}}x^{A}_{\I}$ yields
\[
0 \leq \sum_{{(\I,A)}\atop{(\I',A')}}  \overline{y}{}^{A}_{\I} 
\left( i^{k_{\I} + k_{\I'}}J^{0\,AA'}_{\,\vartheta(\I)\I'}\right)
 y^{A'}_{\I'}\,,
\]
so that $i^{k_{\I} + k_{\I'}}J^{0\,AA'}_{\,\vartheta(\I)\I'}$ is positive semidefinite, as required.
\end{proof}

\section{Automorphisms that Yield New Reflections}\label{sec:reflectionsandgauge}
In Sections \ref{Sect:Which} and \ref{sec:RPSpinSyst}, we have given a characterization of 
reflection positivity with respect to a standard reflection $\Theta$. 
In this section, we show how these results extend to other reflections 
$\Theta' = \alpha^{-1} \Theta \alpha$, where $\alpha$ is an automorphism.
The special case where $\alpha$ is a gauge transformation, can be very useful in applications. 

\subsection{Relation to Other Reflections}
We formulate this in the more general context of a $\mathbb{Z}_{2}$-graded algebra $\A$ 
which is the super tensor product of two isomorphic subalgebras
$\A_{+}$ and $\A_{-}$.
This means that $\A$ is $\A_{+}\otimes\A_{-}$ as a vector space, 
with multiplication defined by  
\[
(A \otimes B)(A'\otimes B') = (-1)^{\abs{A'}\abs{B}} AA'\otimes BB'
\]
on homogeneous elements.
The twisted product $A \circ B$ is then defined 
as in Definition \ref{def:twistprod}. It reduces to the ordinary product 
on algebras that are purely even, such as the spin algebra $\A^{\rm spin}$.

A reflection $\Theta \colon \A \rightarrow \A$ is  an antilinear automorphism such that  $\Theta(\A_{\pm})=\A_{\mp}$ and $\Theta^{2} = I$. 
Two different reflections $\Theta$ and $\Theta'$ are related by the linear automorphism 
$\beta := \Theta \Theta'$, which maps $\A_{\pm}$ to $\A_{\pm}$, and satisfies 
$\Theta \beta = \beta^{-1}\Theta$.  
Conversely, if $\Theta$ is a reflection and $\beta$ satisfies 
$\beta(\A_{\pm}) = \A_{\pm}$ and 
$\Theta \beta = \beta^{-1}\Theta$, then 
$\Theta' := \Theta\beta$ is also a reflection.

Recall that a linear functional $\omega \colon \A \rightarrow \C$ is reflection positive on $\A_{+}$ with respect to $\Theta'$,  if $0 \leq \omega(\Theta'(A)\circ A)$ for all $A \in \A_{+}$.
If $\Theta'$ is related to $\Theta$ by a square $\beta = \alpha^{2}$,
then reflection positivity with respect to $\Theta$ and $\Theta'$ are related as follows.

\begin{prop}
Let $\alpha$ be a linear automorphism of $\A$ such that 
$\alpha(\A_{\pm}) = \A_{\pm}$ and $\Theta \alpha = \alpha^{-1} \Theta$.
Let
\[
\Theta' :=
\alpha^{-1} \Theta \alpha 
\,.\] 
Then the pullback
$\alpha^{-1 *} \omega(A) := \omega(\alpha^{-1}(A))$
is reflection positive with respect to 
$\Theta$
on $\A_{+}$, if and only if $\omega$ is reflection positive with respect to
$\Theta'$ on $\A_{+}$.
\end{prop}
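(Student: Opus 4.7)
The plan is to establish the equivalence by a direct change of variables $B := \alpha(A)$ in $\A_{+}$, combined with the fact that $\alpha^{-1}$ is a homomorphism for the twisted product $\cz$. Since $\alpha(\A_{+}) = \A_{+}$ by hypothesis, the assignment $A \mapsto B$ is a bijection of $\A_{+}$, so quantifying over $A \in \A_{+}$ is the same as quantifying over $B \in \A_{+}$, which aligns the two reflection-positivity conditions at a single stroke.

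Given $A \in \A_{+}$ and $B = \alpha(A) \in \A_{+}$, the key identity will be
\[
\Theta'(A) \cz A
= (\alpha^{-1}\Theta\alpha)(A) \cz A
= \alpha^{-1}(\Theta(B)) \cz \alpha^{-1}(B)
= \alpha^{-1}\bigl(\Theta(B) \cz B\bigr).
\]
Applying $\omega$ and using the definition $(\alpha^{-1\,*}\omega)(X) = \omega(\alpha^{-1}(X))$ then yields
\[
\omega\bigl(\Theta'(A) \cz A\bigr)
= \omega\bigl(\alpha^{-1}(\Theta(B) \cz B)\bigr)
= (\alpha^{-1\,*}\omega)\bigl(\Theta(B) \cz B\bigr),
\]
and the biconditional for reflection positivity on $\A_{+}$ follows by quantifying over $A$, equivalently over $B$.

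The substantive step above is the last equality, i.e., the claim that $\alpha^{-1}$ respects the twisted product, $\alpha^{-1}(X \cz Y) = \alpha^{-1}(X) \cz \alpha^{-1}(Y)$. I would verify this on homogeneous factorized elements $X = X_{-}X_{+}$ and $Y = Y_{-}Y_{+}$ with $X_{\pm}, Y_{\pm}\in \A_{\pm}$ by applying Definition \ref{def:twistprod} to both sides: the ordinary-product identity $\alpha^{-1}(XY) = \alpha^{-1}(X)\alpha^{-1}(Y)$ holds because $\alpha^{-1}$ is an algebra automorphism; the decomposition $\alpha^{-1}(X_{\pm}) \in \A_{\pm}$ follows from $\alpha(\A_{\pm}) = \A_{\pm}$; and the two $\zeta$-twist exponents coincide because $\alpha$ preserves the $\Z_{2}$-grading, so that $\abs{\alpha^{-1}(X_{\pm})} = \abs{X_{\pm}}$ and likewise for $Y$.

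I expect the main obstacle to lie precisely in this grading-preservation step. In the $\Z_{2}$-graded $*$-algebra setting of the paper, compatibility with the grading is normally understood as part of the meaning of ``automorphism'', but if it is not already implicit in the hypotheses, one must either add it or derive it, for example from the interplay $\Theta\alpha = \alpha^{-1}\Theta$ together with the fact that $\Theta$ itself preserves degrees. Once that structural point is granted, the displayed identity is routine and the remainder of the argument is entirely formal.
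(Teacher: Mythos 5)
Your proof is correct and follows essentially the same route as the paper: a change of variables through the bijection $\alpha$ of $\A_{+}$, combined with the fact that $\alpha^{-1}$ (equivalently $\alpha$) intertwines the twisted product, which the paper likewise invokes via $\alpha(A_{-}\circ A_{+}) = \alpha(A_{-})\circ\alpha(A_{+})$. The grading-preservation point you flag is real but is implicitly assumed in the paper as well (automorphisms of the $\Z_{2}$-graded algebra are understood to be even, as all the gauge automorphisms used later are), so your treatment is, if anything, more careful on exactly the step the paper leaves tacit.
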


\begin{proof}
Since $\alpha$ is a linear automorphism,
 $\alpha(A_{-} \circ A_{+}) = \alpha(A_{-}) \circ 
 \alpha(A_{+})$ for $A_{\pm} \in \A_{\pm}$.
For $A \in \A_{+}$, one has
\beqs
\alpha^{-1 *}\omega(\Theta(A)\circ A) &=& 
\omega(\alpha^{-1} \Theta(A)\circ \alpha^{-1}(A))\\ &=&
\omega(\Theta' (\alpha^{-1}(A)) \circ \alpha^{-1}(A)))\,.
\eeqs
Thus the first term is positive for all $A\in \A_{+}$, if and only if the last term is positive.
\end{proof}

We apply this to the algebras 
of Majoranas and spins, with
the Gibbs functional
$\rho_{H}(A) = Z^{-1}_{H}\Tr(Ae^{-H})$. 

\begin{cor}\label{cor:trafo}
The Hamiltonian $H' := \alpha(H)$ is invariant under
the reflection $\Theta$ if and only if $H$ is invariant under $\Theta':= \alpha^{-1}\Theta \alpha$.
The Gibbs functional $\rho_{H}$ is reflection positive 
with respect to $\Theta'$ on $\A_{+}$, if and only if 
$\rho_{H'}$ is reflection positive 
with respect to $\Theta$ on $\A_{+}$.
\end{cor}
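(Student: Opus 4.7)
The plan is to split the corollary into its invariance part and its reflection positivity part, handling the first by direct manipulation of the intertwining relation $\Theta\alpha=\alpha^{-1}\Theta$ established in the preceding proposition, and the second by combining that proposition with an identification $\alpha^{-1\,*}\rho_{H}=\rho_{H'}$.

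For the invariance statement, I first apply the intertwining relation to compute $\Theta(H')=\Theta\alpha(H)=\alpha^{-1}\Theta(H)$, so the equality $\Theta(H')=H'=\alpha(H)$ is equivalent to $\Theta(H)=\alpha^{2}(H)$. Applying the same relation twice,
\[
\Theta'(H)=\alpha^{-1}\Theta\alpha(H)=\alpha^{-2}\Theta(H),
\]
so $\Theta'(H)=H$ likewise reduces to $\Theta(H)=\alpha^{2}(H)$. Both invariance conditions thus coincide with this common fixed-point equation, and hence with each other.

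For the reflection positivity statement, I apply the preceding proposition to $\omega:=\rho_{H}$: reflection positivity of $\rho_{H}$ with respect to $\Theta'$ on $\A_{+}$ is equivalent to reflection positivity of $\alpha^{-1\,*}\rho_{H}$ with respect to $\Theta$ on $\A_{+}$. The remaining task is to identify $\alpha^{-1\,*}\rho_{H}=\rho_{H'}$ as linear functionals on $\A$. Since $\alpha$ is an algebra automorphism, $\alpha(e^{-H})=e^{-\alpha(H)}=e^{-H'}$. Assuming trace invariance $\Tr\circ\alpha=\Tr$, a direct calculation then gives
\[
\alpha^{-1\,*}\rho_{H}(A)=\frac{\Tr(\alpha^{-1}(A)e^{-H})}{\Tr(e^{-H})}=\frac{\Tr(A\,\alpha(e^{-H}))}{\Tr(\alpha(e^{-H}))}=\frac{\Tr(A\,e^{-H'})}{\Tr(e^{-H'})}=\rho_{H'}(A),
\]
and the claimed equivalence of reflection positivity follows at once.

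The main technical point is the trace invariance under $\alpha$. Both the Majorana algebra $\A$ (here $|\Lambda|$ is even, since $\vartheta$ is a fixed-point-free involution) and the spin algebra $\A^{\rm spin}$ are full matrix algebras, hence carry a unique normalized tracial state. By Proposition \ref{prop:propertiesTr}, $\Tr$ is such a state; and $\Tr\circ\alpha$ is again normalized and tracial for any algebra automorphism $\alpha$, so uniqueness forces $\Tr\circ\alpha=\Tr$. I expect this uniqueness-of-trace step to be the only one requiring input beyond the paper's explicit machinery; once it is in hand, the remainder of the argument is essentially bookkeeping built on top of the preceding proposition.
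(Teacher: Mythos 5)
Your proof is correct and follows essentially the same route as the paper: the reflection-positivity statement is obtained exactly as in the paper by combining the preceding proposition with the identification $\alpha^{-1\,*}\rho_{H}=\rho_{\alpha(H)}$, justified via uniqueness of the normalized tracial state. The only (harmless) difference is in the invariance statement, where you route through the intertwining relation $\Theta\alpha=\alpha^{-1}\Theta$ to reduce both conditions to $\Theta(H)=\alpha^{2}(H)$, whereas the paper simply observes that $\Theta(\alpha(H))=\alpha(H)$ is equivalent to $\alpha^{-1}\Theta\alpha(H)=H$ by applying $\alpha^{-1}$.
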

\begin{proof}
The first statement follows 
as $\Theta(\alpha (H)) = \alpha(H)$ is equivalent to 
$\alpha^{-1}\Theta \alpha (H) = H$.
For the second statement, note that
the normalized trace is unique 
on the algebras of Majoranas and spins. Thus
 $\alpha^{*}\Tr = \Tr$ for every automorphism $\alpha$, and one has
\beqs
\alpha^{-1 *}\rho_{H}(A) 
&=& Z^{-1}_{H}\Tr(\alpha^{-1}(A) e^{-H}) 
= 
Z^{-1}_{H}\Tr(\alpha(\alpha^{-1}(A) e^{-H}))\\
&=& Z^{-1}_{H} \Tr(A e^{-\alpha(H)})
= \rho_{\alpha(H)}(A)
\,.
\eeqs
\end{proof}

Note that in the above, we do \emph{not} require $\Theta$, $\Theta'$ or $\alpha$ to respect 
the involution $*$ on the algebra $\A$.
If $\A$ is either the spin algebra or the algebra of Majoranas, then 
the canonical reflection $\Theta$ preserves the involution.
In this case, $\Theta' = \alpha^{-1} \Theta \alpha$ will
preserve the involution if and only if $\alpha^2$ does so.

\subsection{Gauge Automorphisms}

In the context of a (super) tensor product $\A$
of $\Z_{2}$-graded $*$-algebras $\A_{j}$
\[
	\A = \bigotimes_{j \in \Lambda} \A_{j},
\]
we define the \emph{gauge automorphism} $\alpha_{\tau}$, 
parameterized by 
a collection $\{\tau_{j}\}_{j \in \Lambda}$ of automorphisms of $\A_{j}$, as
\[
	\alpha_{\tau} := \otimes_{j \in \Lambda} \tau_{j}\,.
\]
If the $\A_{j}$ can be canonically identified which each other, 
and all $\tau_{j}$ are the same, then
$\alpha_{\tau}$ is called a \emph{global} gauge transformation.

Suppose that $\A$ has a reflection $\Theta$ such that 
$\Theta(\A_{j})$ is isomorphic to $\A_{\vartheta(j)}$.
Then the gauge automorphism $\alpha_{\tau}$ is called 
\emph{reflection invariant} if
$\tau_{j} = \Theta \tau^{-1}_{\vartheta(j)} \Theta$ for all $j \in \Lambda$.
Every reflection invariant gauge automorphism 
satisfies 
\[
\alpha_{\tau}(\A_{\pm}) = \A_{\pm} \qsp{and} 
\alpha_{\tau} \Theta = \Theta \alpha^{-1}_{\tau}\,.
\]

\subsubsection{Majorana Algebras with 1 generator}
In the case of the Majorana algebra generated by $c_{j}$ with $j \in \Lambda$,
$\A_{j}$ is the two-dimensional algebra generated by $I$ and $c_{j}$, 
and the only two automorphisms are $\tau_{j}(c_{j}) = \pm c_{j}$.
There is a unique nontrivial global gauge automorphism $c_{j}\mapsto -c_{j}$.

\subsubsection{Majorana Algebras with 4 generators}
In the case of the algebra generated by Majoranas $c^{\alpha}_{j}$
with $j \in \Lambda$ and $\alpha \in \{1,2,3,4\}$, the algebra $\A_{i}$ 
is the Clifford algebra 
$\mathrm{Cl}(4,\C)$ generated by the $c^{\alpha}_{i}$ with $i$ fixed.
The automorphisms $\tau_{j}$ can be taken to be conjugation by an invertible
element $g_{j} \in \mathrm{Cl}(4,\C)^{\times}$, that is,  
$\tau_{j}(A) = g_{j}A g^{-1}_{j}$. 
The spin group $\mathrm{Spin}(4)$ is the group of
even elements 
$g \in \mathrm{Cl}(4,\C)^{\times}$ such that
$g c^{\alpha} g^{-1} = R^{\alpha}_{\beta}c^{\beta}$
for some $R \in \mathrm{SO}(4,\R)$. 

\subsubsection{Spin Algebras}\label{sec:Gauge Automorphism}
In the next section the most relevant case will be 
the spin algebra $\A^{\rm spin}$, where
$\A_{i}$ is the purely even algebra $M^2(\C)$.
If $\tau_{i}$ is conjugation by a matrix $g_{i}\in \mathrm{SL}(2, \C)$,
we denote the gauge automorphism 
corresponding to the collection $\{g_{j}\}_{j \in \Lambda}$
by $\alpha_{g}$.
The requirement $g_{\vartheta(j)} = \Theta g^{-1}_{j}\Theta$
translates to $g_{\vartheta(j)} = g_{j}^{*}$.
It is an automorphism 
of $*$-algebras if and only if $g_{i}\in \mathrm{SU}(2,\C)$ for every 
$i\in \Lambda_{+}$.

\section{Examples of Spin Models}\label{sec:exspin}
We apply the characterization 
of reflection positivity in Theorem \ref{thm:maincorollarySpinSyst}
to a number of spin systems: the Ising model, 
the quantum rotator, and the anti-ferromagnetic Heisenberg model.  
Nearest neighbor couplings are treated in \S \ref{sec:nn}, and long range interactions
in \S \ref{sec:LR}.

Many of these examples are well-understood, and we include those mainly to show that they have a natural interpretation within our general framework.  Some relevant references are \cite{DysonLiebSimon1976,FroehlichIsraelLiebSimon1978,DysonLiebSimon1978,Froehlich-Lieb1978,Biskup}.  

In this section, the lattice $\Lambda$ has 
a geometric interpretation. It is a finite, fixed point free subset of  a manifold $\calm$ 
with involution $\vartheta_{\calm}$, as explained in \S \ref{sec:latsec}. 
An important example is $\calm = \R^{d}$ with $\vartheta \colon \R^{d} \rightarrow \R^{d}$
the orthogonal reflection in a hyperplane $\Pi$.
Periodic boundary conditions can be handled by taking
$\calm = \mathbb{T}^{d}$ the $d$-dimensional torus.

\subsection{Nearest Neighbor Couplings}\label{sec:nn}
The nearest neighbor \emph{Heisenberg model}
is given in terms of the Pauli matrices $\sigma^{a}_{j}$ on a lattice $j \in \Lambda$ 
by the Hamiltonian
	\be\label{eq:heismodel}
		-H = \sum_{a=1}^{3}\sum_{\lra{jj'}} J^{a}_{jj'} 
	\sigma^{a}_{j}\sigma^{a}_{j'}+ 
	\sum_{a=1}^{3}\sum_{j}h^{a}_{j}\sigma^{a}_{j}\;.
	\ee
Here the sum is over the nearest neighbor pairs $\lra{jj'}$, and $J^{a}_{jj'} = J^{a}_{j'j}$.
As $H$ is Hermitian, the partition sum $Z_{H} = \Tr(e^{-\beta H})$ is nonzero.

In order to define nearest neighbor models, 
we assume that the lattice $\Lambda \subseteq \calm$ has the property that 
``bonds are perpendicular to the reflection hyperplane''. 
This means that two lattice points $j \in \Lambda_+$ and $j' \in \Lambda_-$ 
can only be nearest neighbors if $j' = \vartheta(j)$. 
(For example, this is the case in Fig.\ \ref{fig:cubic} and Fig.\ \ref{fig:honey}.)  

Let $J^{0\, AA'}_{\, \vartheta(\I)\I'}$ denote
the matrix of couplings across the reflection plane, defined in
\eqref{eq:spinhamiltonian}, \eqref{eq:coefssp}.
It is given by 
\[
J^{0\, AA'}_{\, \vartheta(\I)\I'} = J^{a}_{\vartheta(j)j'}
\]
for the indices 
$(\I,A) = \{(j,a)\}$ and $(\I',A') = \{(j',a)\}$ 
of equation \eqref{eq:labeldef}, and
zero in all other components.
Here
$j, j'\in \Lambda_{+}$ and $a \in \{1,2,3\}$. 
Note that $J^{a}_{\vartheta(j)j'}$ is only nonzero if $j=j'$, 
as sites $j' \in \Lambda_{+}$ and $\vartheta(j) \in \Lambda_{-}$ 
on different sides of the reflection plane 
can only be neighbors if $j= j'$.

\subsubsection{Anti-Ferromagnetic Heisenberg Models}
In order to show reflection positivity for the anti-ferromagnetic Heisenberg model, 
we restrict the coupling constants in \eqref{eq:heismodel} as follows:

The full matrix of coupling constants 
is $\vartheta$-symmetric,
$J^{a}_{jj'} = J^{a}_{\vartheta(j)\vartheta(j')}$.
The external field is
antisymmetric, 
$h^{a}_{\vartheta(j)} = - h^{a}_{j}$,
and couplings across the reflection plane
are anti-ferromagnetic, $J_{\vartheta(j)j}\leq 0$.

\begin{prop}[{\bf Anti-ferromagnetic Heisenberg Model}]
For the above restrictions on the coupling constants in 
the Hamiltonian $H$ of \eqref{eq:heismodel},
the Gibbs state $\rho_{\beta H}$ is reflection positive
with respect to the standard reflection $\Theta(\sigma^{a}_{j}) = -\sigma^{a}_{\vartheta(j)}$.
\end{prop}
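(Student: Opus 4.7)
The plan is to verify the two hypotheses of Theorem \ref{thm:maincorollarySpinSyst}(a) and then invoke it directly. These are: (i) that $H$ is reflection invariant under the standard reflection $\Theta(\sigma^{a}_{j}) = -\sigma^{a}_{\vartheta(j)}$, and (ii) that the matrix $i^{k_{\I}+k_{\I'}} J^{0\,AA'}_{\,\vartheta(\I)\I'}$ of couplings across the reflection plane is positive semidefinite. Since $H$ is Hermitian, $Z_{\beta H} \neq 0$ is automatic, so statement (a) of the cited theorem will yield reflection positivity of $\rho_{\beta H}$.

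For (i) I would compute the action of $\Theta$ on each of the two types of terms in \eqref{eq:heismodel}. On a two-site term, the two minus signs from $\Theta(\sigma^{a}_{j})\Theta(\sigma^{a}_{j'})$ cancel, so $\Theta(\sigma^{a}_{j}\sigma^{a}_{j'}) = \sigma^{a}_{\vartheta(j)}\sigma^{a}_{\vartheta(j')}$; combining this with antilinearity of $\Theta$, the reality of $J^{a}_{jj'}$, relabeling $j \mapsto \vartheta(j)$, $j' \mapsto \vartheta(j')$, and the assumed $\vartheta$-symmetry $J^{a}_{jj'} = J^{a}_{\vartheta(j)\vartheta(j')}$, this part of $H$ is fixed by $\Theta$. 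On a single-site term, $\Theta(h^{a}_{j}\sigma^{a}_{j}) = -h^{a}_{j}\sigma^{a}_{\vartheta(j)}$; relabeling $j \mapsto \vartheta(j)$ and using $h^{a}_{\vartheta(j)} = -h^{a}_{j}$ restores the original sum.

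For (ii) I would identify exactly which terms contribute to $J^{0}$. Single-site terms $h^{a}_{j}\sigma^{a}_{j}$ give contributions to the coupling matrix $J^{AA'}_{\vartheta(\I)\I'}$ in \eqref{eq:coefssp} with either $\I = \varnothing$ or $\I' = \varnothing$, so they do not enter $J^{0}$. Two-site terms with both sites in $\Lambda_{+}$ (or both in $\Lambda_{-}$) contribute to $H_{+}$ (respectively $H_{-}$) and not to $H_{0}$, so they too are absent from $J^{0}$. By the geometric hypothesis that bonds perpendicular to the reflection hyperplane are the only bonds crossing it, the only cross terms are of the form $J^{a}_{j\vartheta(j)}\sigma^{a}_{j}\sigma^{a}_{\vartheta(j)}$ with $j \in \Lambda_{+}$. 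Writing such a term in the canonical form $\Sigma_{(\vartheta(\I),A)}\Sigma_{(\I',A')}$ of \eqref{eq:spinhamiltonian1} gives $(\I,A) = \{(j,a)\}$ and $(\I',A') = \{(j,a)\}$; hence $k_{\I}=k_{\I'}=1$ and the only nonzero block of $J^{0}$ is
\[
J^{0\,AA'}_{\,\vartheta(\I)\I'} = J^{a}_{\vartheta(j),j}\,\delta_{jj'}\delta_{aa'}.
\]

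Multiplying by the phase gives $i^{k_{\I}+k_{\I'}}J^{0\,AA'}_{\,\vartheta(\I)\I'} = -J^{a}_{\vartheta(j),j}\delta_{jj'}\delta_{aa'}$, which is a diagonal matrix with entries $-J^{a}_{\vartheta(j),j} \geq 0$ by the anti-ferromagnetic assumption $J^{a}_{\vartheta(j),j} \leq 0$. It is therefore positive semidefinite, and Theorem \ref{thm:maincorollarySpinSyst}(a) yields reflection positivity of $\rho_{\beta H}$. The only real bookkeeping hazard is tracking the conventions in \eqref{eq:spinhamiltonian1}--\eqref{eq:coefssp} for the matrix $J^{AA'}_{\vartheta(\I)\I'}$ and confirming that the critical phase factor $i^{k_{\I}+k_{\I'}} = i^{2} = -1$ converts the anti-ferromagnetic sign into the required positivity; no other subtlety is expected.
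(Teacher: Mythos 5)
Your proposal is correct and follows essentially the same route as the paper: verify $\Theta$-invariance of $H$ from the symmetry conditions on $J^{a}_{jj'}$ and $h^{a}_{j}$, observe that the perpendicular-bond hypothesis makes $J^{0}$ diagonal with entries $J^{a}_{\vartheta(j)j}$, and note that the phase $i^{k_{\I}+k_{\I'}}=-1$ turns the anti-ferromagnetic sign into positive semidefiniteness so that Theorem \ref{thm:maincorollarySpinSyst}(a) applies. Your explicit accounting of why the single-site and same-side terms drop out of $J^{0}$ is a slightly more detailed version of what the paper leaves implicit, but the argument is the same.
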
 
\begin{proof}
Under the standard reflection $\Theta$, the first term on the right side of 
\eqref{eq:heismodel} is invariant if $J^{a}_{jj'} = J^{a}_{\vartheta(j)\vartheta(j')}$, 
while the second term is
invariant if the external field satisfies $h^{a}_{\vartheta(j)} = - h^{a}_{j}$.  
By Theorem \ref{thm:maincorollarySpinSyst}, 
the Gibbs state $\rho_{\beta H}$ is
reflection positive for all $\beta \geq 0$,   
if and only if the matrix
$i^{k_{\I} + k_{\I'}}J^{0\, AA'}_{\, \vartheta(\I)\I'}$
is positive semidefinite.
As $k_{\I} = k_{\I'} = 1$, this matrix is diagonal with 
entries $-J^{a}_{\vartheta(j)j}$, labelled by the 
$j \in \Lambda_{+}$ for which $\vartheta(j) \in \Lambda_{-}$.
This matrix is
positive definite if and only if $J^{a}_{\vartheta(j)j} \leq 0$.
\end{proof}

This includes the usual anti-ferromagnetic Heisenberg model, with constant couplings 
$J^{1}_{ij} = J^{2}_{ij} = J^{3}_{ij} = J \leq 0$, and vanishing external field $h^{a}_{ij}= 0$.
The \emph{quantum rotator model} is the special case 
$J^{3}_{ij} = 0$, and 
the \emph{Ising model} is the special case
$J^{2}_{ij} = J^{3}_{ij} = 0$. 
By the above proposition, they are reflection positive 
in the anti-ferromagnetic case of negative coupling constants
with vanishing external field $h^{a}_{j}$.

\subsubsection{Ferromagnetic Quantum Rotator Model}
The next example illustrates the gauge transformation method introduced in 
\S \ref{sec:reflectionsandgauge}.
In order to show reflection positivity for the ferromagnetic quantum rotator model, 
we restrict the coupling constants
in \eqref{eq:heismodel} as follows:

We require $J^{3}_{jj'} = 0$ and $0 \leq J^{a}_{j'j}$ for $a = 1,2$.
(In fact, the proof only uses that the bonds $j'= \vartheta(j)$
across the reflection plane 
are ferromagnetic.)
We assume that the couplings are symmetric around the reflection plane,
$J^{a}_{jj'} = J^{a}_{\vartheta(j)\vartheta(j')}\leq 0$ for $a = 1,2$.
Finally,  
we require that the first two components of the
external field are reflection symmetric, 
$h^{a}_{\vartheta(j)} = h^{a}_{j}$ for $a = 1,2$,
and that the third component is antisymmetric, 
$h^{3}_{\vartheta(j)} = -h^{3}_{j}$. 

\begin{prop}[\bf Ferromagnetic Quantum Rotator]
With the above restrictions on the coupling constants in 
the Hamiltonian $H$ of \eqref{eq:heismodel},
the Gibbs state $\rho_{\beta H}$ is reflection positive
with respect to the anti-linear reflection
$\Theta'$ that satisfies 
\beq\label{eq:ferroreflection}
	\Theta'(\sigma^{1}_{j}) = \sigma^{1}_{\vartheta(j)}, \quad
	\Theta'(\sigma^{2}_{j}) = \sigma^{2}_{\vartheta(j)}, \qsp{and}
	\Theta'(\sigma^{3}_{j}) = -\sigma^{3}_{\vartheta(j)}.
\eeq
\end{prop}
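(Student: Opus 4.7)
The plan is to reduce the problem to the anti-ferromagnetic Heisenberg proposition via the gauge-transformation device of Corollary \ref{cor:trafo}. Concretely, I want to construct a reflection-invariant gauge automorphism $\alpha$ of $\A^{\rm spin}$ such that $\Theta' = \alpha^{-1}\Theta\alpha$ and $\Theta\alpha = \alpha^{-1}\Theta$, and then verify that the transformed Hamiltonian $H' := \alpha(H)$ satisfies the hypotheses of the anti-ferromagnetic Heisenberg proposition with respect to the standard reflection $\Theta$.

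The natural candidate is conjugation by a local $\pi/2$-rotation around the 3-axis, with opposite handedness on the two half-spaces: take $g_j = e^{i(\pi/4)\sigma^3_j}$ for $j \in \Lambda_+$ and $g_{\vartheta(j)} = g_j^*$ for $j \in \Lambda_-$, so that the compatibility $g_{\vartheta(j)} = g_j^*$ from \S\ref{sec:Gauge Automorphism} holds. A direct check on generators gives $\alpha(\sigma^1_j) = -\sigma^2_j$, $\alpha(\sigma^2_j) = \sigma^1_j$, $\alpha(\sigma^3_j) = \sigma^3_j$ for $j \in \Lambda_+$ (with the opposite signs on $\Lambda_-$), and from this both $\Theta' = \alpha^{-1}\Theta\alpha$ and $\Theta\alpha = \alpha^{-1}\Theta$ are verified on each generator.

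Next I would compute $H' = \alpha(H)$ term by term. Within a single half-space, $\alpha$ merely swaps the $J^1$ and $J^2$ coupling matrices without changing their magnitudes (and leaves $J^3 = 0$ untouched), so reflection symmetry of $J^{1,2}$ around the plane is preserved. Across the reflection plane, however, the mismatched signs on $\Lambda_+$ and $\Lambda_-$ conspire to flip the sign of each crossing coupling: one finds $\tilde J^1_{j\vartheta(j)} = -J^2_{j\vartheta(j)} \leq 0$ and $\tilde J^2_{j\vartheta(j)} = -J^1_{j\vartheta(j)} \leq 0$, so the ferromagnetic crossings become anti-ferromagnetic. For the external field, $\alpha$ rotates $(h^1,h^2) \mapsto (-h^2,h^1)$ on $\Lambda_+$ and $(h^1,h^2) \mapsto (h^2,-h^1)$ on $\Lambda_-$, and a short calculation using the assumed reflection parities $h^{1,2}_{\vartheta(j)} = h^{1,2}_j$, $h^3_{\vartheta(j)} = -h^3_j$ yields $\tilde h^a_{\vartheta(j)} = -\tilde h^a_j$ for all $a$. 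Thus $H'$ is $\Theta$-invariant, with $\Theta$-antisymmetric external field and anti-ferromagnetic couplings across the reflection plane — precisely the setting of the anti-ferromagnetic Heisenberg proposition.

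Applying that proposition to $H'$ gives reflection positivity of $\rho_{\beta H'}$ with respect to the standard reflection $\Theta$ for all $\beta \geq 0$, and Corollary \ref{cor:trafo} then transfers this to reflection positivity of $\rho_{\beta H} = \rho_{\beta \alpha^{-1}(H')}$ with respect to $\Theta'$. The most delicate point in this strategy is the construction of $\alpha$ itself: the linear automorphism $\Theta\Theta'$ acts as $-I$ on $\sigma^{1,2}$ and $+I$ on $\sigma^3$, and a naive site-independent square root fails the compatibility $\Theta\alpha = \alpha^{-1}\Theta$ required by Corollary \ref{cor:trafo}; the key is precisely to choose opposite rotation angles on $\Lambda_+$ and $\Lambda_-$, so that the interaction of $\alpha$ with the antilinear reflection $\Theta$ picks up the correct sign.
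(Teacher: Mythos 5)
Your proposal is correct and follows essentially the same route as the paper: the same gauge automorphism $\alpha_{g}$ with $g_{j}=e^{i\frac{\pi}{4}\sigma^{3}_{j}}$ on $\Lambda_{+}$ and $g_{\vartheta(j)}=g_{j}^{*}$ on $\Lambda_{-}$, the same computation of $H'=\alpha_{g}(H)$ showing the crossing couplings become $-J^{2},-J^{1}\leq 0$ and the transformed field becomes $\Theta$-antisymmetric, and the same transfer back via Corollary \ref{cor:trafo}. The only cosmetic difference is that you invoke the anti-ferromagnetic Heisenberg proposition for $H'$ where the paper applies Theorem \ref{thm:maincorollarySpinSyst} directly, which is the same check.
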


\begin{proof}
We use the gauge transformation $\alpha_{g}$
of \S \ref{sec:Gauge Automorphism}, with $g_{j} = e^{i\frac{\pi}{4}\sigma_{j}^{3}}$
for $j \in \Lambda_{+}$ and 
$g_{j} = e^{-i\frac{\pi}{4}\sigma_{j}^{3}}$
for $j \in \Lambda_{-}$.
This yields the clockwise rotation over $\pi/2$ 
around the third axis, 
\be\label{eq:gaugeno1}
	\alpha_{g}(\sigma^{1}_{j}) = - \sigma^{2}_{j},\;
	\alpha_{g}(\sigma^{2}_{j}) = \sigma^{1}_{j},\;
        \alpha_{g}(\sigma^{3}_{j}) = \sigma^{3}_{j}
\qsp{for} j \in \Lambda_{+}\,,
\ee
and the counterclockwise rotation
\be\label{eq:gaugeno2}
	\alpha_{g}(\sigma^{1}_{j}) = \sigma^{2}_{j},\;
	\alpha_{g}(\sigma^{2}_{j}) = -\sigma^{1}_{j},\;
        \alpha_{g}(\sigma^{3}_{j}) = \sigma^{3}_{j}
\qsp{for} j \in \Lambda_{-}\,.
\ee
After the gauge transformation, 
the Hamiltonian
$H$ of 
\eqref{eq:heismodel} becomes
$H'= \alpha_{g}(H)$, 
which decomposes as
$H'= H'_{+} + H'_{0} + H'_{-}$. Here
\[
-H'_{+}  =   
        \sum_{\lra{jj'}} J^{1}_{jj'} 
	\sigma^{2}_{j}\sigma^{2}_{j'}+
	\sum_{\lra{jj'}} J^{2}_{jj'} 
	\sigma^{1}_{j}\sigma^{1}_{j'}\\
	 +
	\sum_{j} h^{2}_{j}\sigma^{1}_{j} - h^{1}_{j}\sigma^{2}_{j} + h^{3}_{j}\sigma^{3}_{j},
\]
with the sum over nearest neighbors $j, j' \in \Lambda_{+}$.
Similarly, \[
-H'_{-}  =   
        \sum_{\lra{jj'}} J^{1}_{jj'} 
	\sigma^{2}_{j}\sigma^{2}_{j'}+
	\sum_{\lra{jj'}} J^{2}_{jj'} 
	\sigma^{1}_{j}\sigma^{1}_{j'}
	+
	\sum_{j} -h^{2}_{j}\sigma^{1}_{j} + h^{1}_{j}\sigma^{2}_{j} + 
	h^{3}_{j}\sigma^{3}_{j},
\]
with the sum over nearest neighbors $j, j' \in \Lambda_{-}$. Finally,  
\[
-H'_{0} = 
        \sum_{j} -J^{1}_{\vartheta(j)j} 
	\sigma^{2}_{\vartheta(j)}\sigma^{2}_{j}+
	\sum_{j} -J^{2}_{\vartheta(j)j} 
	\sigma^{1}_{\vartheta(j)}\sigma^{1}_{j}\,,
\]
where $j \in \Lambda_{+}$ has $j'\in \Lambda_{-}$ as a nearest neighbor.

The Hamiltonian $H'$ is invariant under
the standard reflection defined by $\Theta(\sigma^{a}_{j}) = -\sigma^{a}_{\vartheta(j)}$,
as long as
$J^{a}_{\vartheta(j)\vartheta(j')} = J^{a}_{jj'}$ and 
$h^{1}_{\vartheta(j)} = h^{1}_{j}$, $h^{2}_{\vartheta(j)} = h^{2}_{j}$, 
and $h^{3}_{\vartheta(j)} = - h^{3}_{j}$.
The matrix of coupling constants across the reflection plane is positive semidefinite 
if $0 \leq J^{1}_{\vartheta(j)j}$ and $0 \leq J^{2}_{\vartheta(j)j}$.
From Theorem \ref{thm:maincorollarySpinSyst}, we see that under these conditions, 
the Gibbs state $\rho_{\beta H'}$ for the Hamiltonian $H'$ is reflection positive
with respect to $\Theta$.

Applying Corollary \ref{cor:trafo}, we infer that the Gibbs state $\rho_{\beta H}$
for the \emph{original} Hamiltonian $H = \alpha^{-1}(H')$ is reflection positive for the 
gauge transformed 
reflection automorphism $\Theta' = \alpha^{-1}\Theta \alpha = \Theta \alpha^2$, given
in equation \eqref{eq:ferroreflection}.
\end{proof}

\subsection{Long-Range Interactions of Spin Pairs}\label{sec:LR}

The Heisenberg model with 
long-range interactions is defined by the Hamiltonian
\be\label{eq:lrheis2}
-H = \sum_{a = 1}^{3}  \sum_{\{x, x' \in \Lambda \, : \,x \neq x'\}} J^{a}\, 
\sigma^{a}_{x'}\sigma^{a}_{x} f(x - x')\,.
\ee
Here $f$ can be any reflection invariant, reflection positive function 
on $\R^{d}$, or on its compactification $\mathbb{T}^{m} \times \R^{d-m}$
in $m\leq d$ directions.
For such functions the matrix $f(\vartheta(x)-x')$ for $x,x'\in\Lambda_{+}$ is positive semidefinite.  Here there is extensive analysis, and some relevant papers are \cite{OS1,OSEuclideanFields,LuscherMack1975,GJ1979,Frank-Lieb2010}.

An important example is 
$f(x) = \|x\|^{-s}$ on $\R^{d}$, 
which is reflection positive
for $s \geq \max\{0, d-2\}$ by \cite[Proposition 6.1]{NeebOlafsson2014}.
Reflection positive functions on the compactification 
can be obtained from reflection positive functions on $\R^{d}$
under suitable conditions on the rapidity of their decay, see for example 
\cite[Proposition 15]{JaffeJaekelMartinez}. 

For long-range interactions, the matrix of coupling constants 
across the reflection plane will not be 
diagonal, as was the case for nearest neighbor models.

\subsubsection{Anti-Ferromagnetic Heisenberg Model}
For $J^{a}\leq 0$ (the anti-ferromagnetic case),
we can use 
the standard reflection $\Theta(\sigma^{a}_{j}) = - \sigma^{a}_{j}$.

\begin{prop}[Long-Range Heisenberg Model]\label{prop:LRHeis}
The Gibbs functional $\rho_{\beta H}$ 
for the Hamiltonian \eqref{eq:lrheis2}
is reflection positive 
with respect to $\Theta$
for all $\beta \geq 0$, 
if and only if $J^{a}\leq 0$ for $a = 1,2,3$.
\end{prop}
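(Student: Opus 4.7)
The plan is to reduce the claim to Theorem~\ref{thm:maincorollarySpinSyst} by extracting the submatrix $J^{0}$ of couplings across the reflection plane from the Hamiltonian \eqref{eq:lrheis2}, and then using the assumed positive semidefiniteness of the kernel $f(\vartheta(\,\cdot\,)-\,\cdot\,)$.

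First I would bring $-H$ into the normal form \eqref{eq:spinhamiltonian1}. Two-spin terms $\sigma^{a}_{x'}\sigma^{a}_{x}$ with both $x,x'\in\Lambda_{-}$ contribute coefficients $J^{AA'}_{\vartheta(\I)\I'}$ with $\I'=\vn$, while those with both $x,x'\in\Lambda_{+}$ yield $\I=\vn$; these lie outside $J^{0}$. Only the mixed terms with $x'=\vartheta(j)\in\Lambda_{-}$ and $x=j'\in\Lambda_{+}$ (for $j,j'\in\Lambda_{+}$) enter $J^{0}$, producing single-element index sets $\I=\{j\}$, $\I'=\{j'\}$ with $k_{\I}=k_{\I'}=1$. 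Collecting the positive combinatorial constant $c>0$ (a $1$ or a $2$, depending on whether the double sum in \eqref{eq:lrheis2} is over ordered or unordered pairs), one finds
\[
    J^{0\,aa'}_{\vartheta(\{j\})\{j'\}}
    = c\,J^{a}\,\delta^{aa'}\,f(\vartheta(j)-j')\,.
\]
Since $k_{\I}+k_{\I'}=2$, the phase $i^{k_{\I}+k_{\I'}}$ appearing in Theorem~\ref{thm:maincorollarySpinSyst} equals $-1$, and the matrix governing reflection positivity is
\[
    M^{aa'}_{jj'} \;=\; -\,c\,J^{a}\,\delta^{aa'}\,K_{jj'}\,,\qquad K_{jj'} := f(\vartheta(j)-j')\,.
\]
This is block-diagonal in the spin index $a$, so its positive semidefiniteness reduces to that of each block $-cJ^{a}K$. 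By the standing hypothesis on $f$, the kernel $K$ is positive semidefinite, and it is nonzero because its diagonal entries $K_{jj}=f(\vartheta(j)-j)$ are strictly positive in the cases of interest (e.g.\ for $f(x)=\|x\|^{-s}$ one has $K_{jj}=(2\,d(j,\Pi))^{-s}>0$, with $\Pi$ the reflection hyperplane).

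For the sufficient direction, if $J^{a}\leq 0$ for $a=1,2,3$, then each block $-cJ^{a}K$ is positive semidefinite, hence so is $M$. As $H$ is Hermitian, $Z_{\beta H}>0$, so Theorem~\ref{thm:maincorollarySpinSyst}(a) delivers reflection positivity of $\rho_{\beta H}$ for all $\beta\geq 0$. Conversely, if $\rho_{\beta H}$ is reflection positive for all $\beta\in[0,\varepsilon)$, then Theorem~\ref{thm:maincorollarySpinSyst}(b) forces $M$ to be positive semidefinite; since $K$ is a nonzero positive semidefinite matrix this forces $-J^{a}\geq 0$, and so $J^{a}\leq 0$ for each $a$. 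The only bookkeeping item that requires care is confirming the sign and phase in Step~1 (the $-1$ from $i^{k_{\I}+k_{\I'}}$ is essential, as it converts $J^{a}\leq 0$ into a positivity condition), and verifying $K\neq 0$ for the converse; both are straightforward under the mild nondegeneracy built into the hypotheses on $f$.
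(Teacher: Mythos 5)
Your proof is correct and follows essentially the same route as the paper: reduce to Theorem~\ref{thm:maincorollarySpinSyst}, identify the block-diagonal matrix $i^{k_{\I}+k_{\I'}}J^{0\,AA'}_{\,\vartheta(\I)\I'} = -J^{a}\,\delta^{aa'}f(\vartheta(x)-x')$, and use positive semidefiniteness of the kernel $f(\vartheta(\cdot)-\cdot)$. Your explicit check that $K\neq 0$ (needed for the ``only if'' direction) is a small point the paper leaves implicit, but otherwise the arguments coincide.
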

\begin{proof}
The Hamiltonian \eqref{eq:lrheis2} is 
hermitian and $\Theta$-invariant,
so by Theorem \ref{thm:maincorollarySpinSyst},
it is reflection positive for all $\beta \geq 0$ if and only if 
the matrix 
$i^{k_{\I} + k_{\I'}} J^{0\, AA'}_{\, \vartheta(\I)\I'}$
is positive semidefinite.

The matrix of coupling constants across the reflection plane 
has entries
\[
J^{0\, AA'}_{\, \vartheta(\I)\I'} = J^{a} f(\vartheta(x) - x')
\]
for the indices 
$(\I,A) = \{(x,a)\}$ and $(\I',A') = \{(x',a)\}$ 
of equation \eqref{eq:labeldef}, and
all other entries are zero. 
Since $k_{\I} = k_{\I'} = 1$, one finds
\[
i^{k_{\I} + k_{\I'}} J^{0\, AA'}_{\, \vartheta(\I)\I'}
=
-J^{a} f(\vartheta(x) - x')\,.
\]
As $f$ is reflection positive, this matrix 
is positive semidefinite if and only if  
$J^{a} \leq 0$ for $a = 1,2,3$.
\end{proof}

\subsubsection{Ferromagnetic Rotator Model}

The long-range rotator model is given by the Hamiltonian \eqref{eq:lrheis2}
with $J^{3} = 0$.

In the anti-ferromagnetic case $J^{1,2} \leq 0$, 
Proposition \ref{prop:LRHeis} shows that
it is reflection positive 
with respect to the standard reflection $\Theta$, satisfying 
$\Theta(\sigma^{a}_{j}) = - \sigma^{a}_{\vartheta(j)}$
for $a = 1,2,3$.
As in the nearest neighbor case, 
the ferromagnetic model $0 \leq J^{1,2}$ is 
reflection positive for a \emph{different} reflection $\Theta'$, satisfying
\eqref{eq:ferroreflection}.

\begin{prop}[\bf Long-Range Quantum Rotator]
The Gibbs state $\rho_{\beta H}$ for the Hamiltonian
\eqref{eq:lrheis2} is reflection 
positive with respect to the anti-linear reflection
$\Theta'$ for all $\beta \geq 0$,
if and only if $0 \leq J^{a}$ 
for $a = 1,2$.
\end{prop}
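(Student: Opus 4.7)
The plan is to mimic the strategy used for the ferromagnetic nearest-neighbor rotator model, namely to pass to a gauge-transformed Hamiltonian whose standard reflection positivity we can verify using Theorem \ref{thm:maincorollarySpinSyst}, and then transfer the result back via Corollary \ref{cor:trafo}. Concretely, I would apply the gauge automorphism $\alpha_g$ defined by $g_j = e^{i\frac{\pi}{4}\sigma_j^3}$ for $j \in \Lambda_+$ and $g_j = e^{-i\frac{\pi}{4}\sigma_j^3}$ for $j \in \Lambda_-$, which is exactly the reflection-invariant gauge transformation used for the nearest-neighbor case, satisfying equations \eqref{eq:gaugeno1}-\eqref{eq:gaugeno2}. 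Under this gauge $\Theta' = \alpha_g^{-1}\Theta\alpha_g$, so by Corollary \ref{cor:trafo} it suffices to prove reflection positivity of $\rho_{\beta H'}$ with respect to the standard reflection $\Theta$, where $H' := \alpha_g(H)$.

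Next I would compute $H'$ explicitly. For a bond between two sites on the same side of the reflection plane, the rotation contributes two sign changes that cancel, producing $\sigma^1 \sigma^1 \leftrightarrow \sigma^2\sigma^2$ with the same coefficient. For a bond crossing the reflection plane, only one factor in each product picks up a minus sign. Writing $x' = \vartheta(y)$ with $y \in \Lambda_+$, the cross-plane part of $-H'$ becomes
\[
-H'_0 = \sum_{a=1,2} \sum_{x,y \in \Lambda_+} (-J^{\tilde a}) \,\sigma^a_{\vartheta(y)}\sigma^a_x\, f(x-\vartheta(y)),
\]
where $\tilde 1 = 2$ and $\tilde 2 = 1$. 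Since $f$ is reflection invariant ($f(\vartheta(x)-\vartheta(x'))=f(x-x')$) and the coupling $J^a$ does not depend on sites, $H'$ is manifestly $\Theta$-invariant.

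I would then extract the coupling matrix across the reflection plane and apply Theorem \ref{thm:maincorollarySpinSyst}. The two-body cross-plane terms correspond to $\I = \{y\}$, $\I' = \{x\}$ with $k_\I = k_{\I'} = 1$, $A = \{a\}$, $A' = \{a'\}$. So the entries are $J'^{0\,AA'}_{\,\vartheta(\I)\I'} = -J^{\tilde a}\,\delta_{aa'}\, f(x-\vartheta(y))$, and the criterion matrix is
\[
i^{k_\I + k_{\I'}} J'^{0\,AA'}_{\,\vartheta(\I)\I'} = J^{\tilde a}\,\delta_{aa'}\, f(x-\vartheta(y)).
\]
This block-diagonalizes in $a$, and reflection positivity of $f$ makes the matrix $[f(x-\vartheta(y))]_{x,y \in \Lambda_+}$ positive semidefinite. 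Hence the criterion matrix is positive semidefinite if and only if $J^1, J^2 \geq 0$. Theorem \ref{thm:maincorollarySpinSyst}(a) then yields reflection positivity of $\rho_{\beta H'}$ with respect to $\Theta$ for all $\beta \geq 0$, which via Corollary \ref{cor:trafo} gives reflection positivity of $\rho_{\beta H}$ with respect to $\Theta'$.

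For the converse, I would run the same transformation the other way: if $\rho_{\beta H}$ is $\Theta'$-reflection positive for $\beta \in [0,\varepsilon)$, then $\rho_{\beta H'}$ is $\Theta$-reflection positive on the same interval, so Theorem \ref{thm:maincorollarySpinSyst}(b) forces the criterion matrix $J^{\tilde a}\,\delta_{aa'}\,f(x-\vartheta(y))$ to be positive semidefinite. Since $f$ is assumed to be genuinely reflection positive (so that $[f(x-\vartheta(y))]$ is not the zero matrix on any non-empty $\Lambda_+$), one concludes $J^1, J^2 \geq 0$. The only substantive obstacle I anticipate is the bookkeeping of signs under $\alpha_g$ on cross-plane bonds, together with a clean statement that ``$f$ reflection positive'' means the matrix $[f(x-\vartheta(y))]_{x,y\in\Lambda_+}$ is positive semidefinite — once that is in place, the argument is essentially a long-range analogue of the nearest-neighbor ferromagnetic rotator proof.
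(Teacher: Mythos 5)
Your proposal is correct and follows essentially the same route as the paper: gauge-transform by $\alpha_{g}$ with $g_{j}=e^{\pm i\frac{\pi}{4}\sigma^{3}_{j}}$, reduce to standard reflection positivity of $H'=\alpha_{g}(H)$ via Corollary \ref{cor:trafo}, identify the cross-plane coupling matrix $-J^{\widehat{a}}\,\delta_{aa'}f$, and apply Theorem \ref{thm:maincorollarySpinSyst} together with reflection positivity of $f$. Your sign bookkeeping on the cross-plane bonds matches the paper's $-H'_{0}$, and your explicit treatment of the converse via Theorem \ref{thm:maincorollarySpinSyst}(b) is if anything slightly more complete than the published proof.
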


\begin{proof}
By Corollary \ref{cor:trafo}, $\rho_{\beta H}$ is reflection positive for 
$\Theta' = \alpha^{-1}\Theta\alpha$, 
if and only if $\rho_{\beta H'}$ is reflection positive for $\Theta$. 
Here  $H'= \alpha(H)$, and we choose
$\alpha = \alpha_{g}$ to be the gauge transformation
of equations \eqref{eq:gaugeno1} and \eqref{eq:gaugeno2}.

The gauge transformed Hamiltonian $H'$ has the form $H'= H'_{+} + H'_{0} + H'_{-}$,
where the term $H'_{0}$ containing the couplings across the reflection plane
is
\[
- H'_{0} = 
\sum_{x, x'\in \Lambda_{+}}  -\left(
J^{1} f(\vartheta(x) - x')\sigma^{2}_{\vartheta(x)}\sigma^{2}_{x'}
 + J^{2} 
f(\vartheta(x) - x')\sigma^{1}_{\vartheta(x)}\sigma^{1}_{x'}\right)\,.
\]
It follows that the matrix 
of couplings across the reflection plane 
for the gauge transformed Hamiltonian $H'$ is
\[
J'{}^{0\, AA'}_{\, \vartheta(\I)\I'} = -J^{\widehat{a}} f(\vartheta(x) - x')\,,
\]
for $(\I,A) = \{(x,a)\}$ and $(\I,A) = \{(x',a)\}$. Here $\widehat{a} = 1$ 
if $a = 2$ and vice versa.
Since $k_{\I} = k_{\I'} = 1$, the matrix 
$i^{k_{\I} + k_{\I'}} J'{}^{0\, AA'}_{\, \vartheta(\I)\I'}$ is positive semidefinite 
in the ferromagnetic case $0 \leq J^{a}$.

In order to apply Theorem \ref{thm:maincorollarySpinSyst} to $H'$,
we still need to check that $H'$ is reflection invariant under 
the standard reflection $\Theta$. 
By Corollary \ref{cor:trafo},
this is equivalent to reflection invariance of the original Hamiltonian 
$H$ under $\Theta'$.
This is readily seen to be the case 
by using the explicit equation \eqref{eq:ferroreflection} for $\Theta'$.

As $\alpha_{g}$ is a $*$-automorphism, 
$H'$ is hermitian, so $Z_{\beta H'} \geq 0$.
One then infers from Theorem \ref{thm:maincorollarySpinSyst}, 
that $\rho_{\beta H'}$ is reflection positive
with respect to $\Theta$.
As mentioned in the start of the proof, 
Corollary \ref{cor:trafo} then yields that 
$\rho_{\beta H}$ is reflection positive for $\Theta'$.
\end{proof}

\begin{Remark} \rm
An external field $\sum_{a=1}^{3}\sum_{j}h^{a}_{j}\sigma^{a}_{j}$
can be added to \eqref{eq:lrheis2} 
under the same conditions as in the nearest neighbor case.
For the anti-ferromagnetic Heisenberg model, 
$h^{a}_{\vartheta(j)} = - h^{a}_{j}$ for $a = 1,2,3$.
For the ferromagnetic quantum rotator, 
$h^{a}_{\vartheta(j)} =  h^{a}_{j}$ for $a = 1,2$, and 
$h^{a}_{\vartheta(j)} =  -h^{a}_{j}$ for $a=3$.
\end{Remark}

\begin{center}
{\bf Acknowledgements}\\[.4 cm]
\end{center}
\noindent The authors thank the MFO (Oberwolfach)
for hospitality 
at a conference where they first met, 
and had conversations leading to this research.  A.J.\ was supported in part by a grant from the Templeton Religion Trust.
B.J.\ was supported by the NWO grant 
613.001.214 ``Generalised Lie algebra sheaves."
He thanks A.J.\ for hospitality at Harvard University and in Basel during the writing of the paper.  

\bibliographystyle{alpha}

\end{document}